\def \bb {\textcolor{black} }
\def\NC#1#2{$N^I(#1)\cup\{#1,#2\}$}
\def \K#1#2{$K_{#1,#2}$}
\def \NV#1{$\{#1\}\cup N^I(#1)$}
\def \nin {\notin}
\newtheorem{obs}{Observation}
\newtheorem{cl}{Claim}
\date{}
\title{The Hamiltonian Cycle in $K_{1,r}$-free Split Graphs - A Dichotomy}
\author{P.Renjith, N.Sadagopan} 
\institute{Indian Institute of Information Technology, Design and Manufacturing, Kurnool, India. \\Indian Institute of Information Technology, Design and Manufacturing, Kancheepuram, India. \\
\email{renjith@iiitk.ac.in, sadagopan@iiitdm.ac.in}}
\begin{document}
\maketitle \footnotetext{This work is partially supported by DST-ECRA Project - ECR/2017/001442.\\ Preliminary results on this problem appeared in CALDAM 2016.}
\vspace{-20pt}
 \begin{abstract}
In this paper, we investigate the well-studied Hamiltonian cycle problem (HCYCLE), and present an interesting dichotomy result on split graphs.  T. Akiyama et al. (1980) have shown that HCYCLE is NP-complete in planar bipartite graphs with maximum degree $3$.  Using this reduction, we show that HCYCLE is NP-complete in split graphs.  In particular, we show that the problem is NP-complete in $K_{1,5}$-free split graphs.  Further, we present polynomial-time algorithms for Hamiltonian cycle in $K_{1,3}$-free and $K_{1,4}$-free split graphs.  We believe that the structural results presented in this paper can be used to show similar dichotomy result for Hamiltonian path problem (HPATH) and other variants of HCYCLE.
 \end{abstract}
\section{Introduction}
The Hamiltonian cycle problem (HCYCLE) is a well-known decision problem which asks for the presence of a spanning cycle in a graph.
Hamiltonian problems play a significant role in various research areas such as operational research, physics and genetic studies \cite{app1,app2,app3}.  
This well-known problem has been studied extensively in the literature, and is NP-complete in general graphs.  
Various sufficient conditions for the existence of a Hamiltonian cycle \cite{west} were introduced by G.A. Dirac, O. Ore, J.A. Bondy, and A. Ainouche, which were further  generalized by A. Kemnitz \cite{KemnSchi}. 
A general study on the sufficient conditions were produced by H.J. Broersma \cite{s1} and \bb{separately by} R.J. Gould \cite{s2,s3}.
Hamiltonicity has been looked at with respect to various structural parameters, the popular one \bb{being} toughness.
A relation between graph toughness, introduced by V. Chv\'{a}tal \cite{chvataltough} and Hamiltonicity has been well studied.  A detailed survey on Hamiltonicity and toughness is presented in \cite{s2,bauer}.
In split graphs, it \bb{was} proved by D. Kratsch, J. Lehel, and H. Muller \cite{kratsch} that $\frac{3}{2}$ tough split graphs are Hamiltonian and due to Chv\'{a}tal's result \cite{chvataltough}, Hamiltonian graphs are $1$-tough.  Therefore, the split graphs are Hamiltonian if and only if the toughness is in the range $[\frac{3}{2},1]$.  \\

\noindent
On \bb{the} algorithmic front, HCYCLE is NP-complete in chordal \cite{bertossi}, chordal bipartite \cite{muller}, planar \cite{tarjanplanar}, and bipartite \cite{akiyama} graphs.  Further, T. Akiyama \cite{akiyama} has shown that the problem is NP-complete in bipartite graphs with maximum degree $3$.  There is a simple reduction for HCYCLE in bipartite graphs of maximum degree $3$ to HCYCLE in split graphs which we show as part of our dichotomy.
In spite of the hardness of HCYCLE in various graph classes, nice polynomial-time algorithms have been obtained in interval, circular arc, $2$-trees, and distance hereditary graphs \cite{keil,hungcir2,shih,hungdis}.
\\\\
Split graphs are  a popular subclass of chordal graphs and on which R.E. Burkard and P.L. Hammer\cite{burkard} presented a necessary condition \bb{for the presence of Hamiltonian cycle}.  Subsequently, N.D. Tan and L.X. Hung \cite{tan} have shown that the necessary condition of \cite{burkard} is sufficient for some special split graphs.  
In this paper, we shall revisit Hamiltonicity restricted to split graphs and present a dichotomy result.  We show that HCYCLE is NP-complete in $K_{1,5}$-free split graphs and polynomial-time solvable in $K_{1,4}$-free split graphs.  It is important to note that very few NP-complete problems have dichotomy results in the literature \cite{dicho1,dicho2,dicho3}.
\\\\
We use standard basic graph-theoretic notations.  Further, we follow \cite{west}.  All the graphs we mention are simple, and unweighted.  Graph $G$ has vertex set $V(G)$ and edge set $E(G)$ which we denote using $V,E$, respectively.  For \emph{minimal vertex separator}, \emph{maximal clique}, and \emph{maximum clique} we use the standard definitions.  A graph $G$ is 2-connected if every minimal vertex separator is of size at least two.  Split graphs are $C_4,C_5,2K_2$-free graphs and the vertex set of a split graph can be partitioned into a clique $K$ and an independent set $I$.  For a split graph with vertex partition $K$ and $I$, we assume $K$ to be a maximum clique.  For every $v\in K$ we define $N^I(v)=N(v)\cap I$, where $N(v)$ denotes the neighborhood of vertex $v$.  $d^I(v)=|N^I(v)|$ and $\Delta^I=\max\{d^I(v) : v\in K\}$.  \bb{For a graph $G$ and $S\subseteq V(G)$, $N(S)=\bigcup\limits_{v\in S}N(v)$.}   
For a cycle or a path $C$, we use $\overrightarrow{C}$ to represent an ordering of the vertices of $C$ in one direction (forward direction) and $\overleftarrow{C}$ to represent the ordering in the other direction.  $u\overrightarrow{C}\bb{v}$ represents the \bb{ordering of} vertices from $u$ to $v$ in $C$.  For two paths $P$ and $Q$, $P\cap Q$ denotes $V(P)\cap V(Q)$.  For simplicity, we use $P$ to denote the underlying set $V(P)$. \bb{For $S\subseteq V(G)$, c(G-S) denotes the number of connected components of the graph G-S.}
\vspace*{-12pt}
\section{HCYCLE in split graphs - polynomial results}
\vspace{-10pt}
In this section we present structural results on some special split graphs using which we can find Hamiltonian cycle in such graphs.  
In particular, we explore the structure of $K_{1,3}$-free split graphs and $K_{1,4}$-free split graphs.
\begin{lemma} \cite{dicho1} \label{lemvileq3}
For a claw-free split graph $G$,  if $\Delta^I=2$, then $|I|\leq 3$. 
\end{lemma}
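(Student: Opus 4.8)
The plan is to proceed by contradiction: assume $|I|\ge 4$ and manufacture a claw, contradicting that $G$ is $K_{1,3}$-free; throughout I use that every vertex of $I$ has all its neighbours in $K$, and---since we may take $G$ to be connected---at least one such neighbour. I would start by recording the structure forced by a witness for $\Delta^I=2$: fix $u\in K$ with $d^I(u)=2$, say $N^I(u)=\{a,b\}$, so $a\neq b$ and $a\not\sim b$ as $a,b\in I$. The crucial local fact $(\ast)$ is that every $w\in K\setminus\{u\}$ satisfies $w\sim a$ or $w\sim b$, else $\{u;a,b,w\}$ is a claw. I would also note, using that $K$ is a \emph{maximum} clique, that $\{a\}\cup N(a)$ is a clique of size at most $|K|$, so some $v\in K$ has $v\not\sim a$; then $v\neq u$, hence $(\ast)$ gives $v\sim b$.

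Next I would pin down the neighbourhoods of the ``extra'' independent vertices. Choose distinct $c,d\in I\setminus\{a,b\}$, possible since $|I|\ge 4$. Any neighbour $w\in K$ of $x\in\{c,d\}$ satisfies $w\neq u$ (since $u\not\sim x$), so by $(\ast)$ and $d^I(w)\le 2$ exactly one of $N^I(w)=\{a,x\}$ or $N^I(w)=\{b,x\}$ holds; call these the $a$-type and $b$-type neighbours of $x$. The key claim is that $c$ and $d$ cannot use opposite types: if $p$ is an $a$-type neighbour of $c$ and $q$ a $b$-type neighbour of $d$ (the other case follows by swapping $a,b$), then $p\neq q$ (they disagree on $a$), so $p\sim q$, and $\{p;a,c,q\}$ is a claw, since $a\not\sim c$ (both in $I$) while $N^I(q)=\{b,d\}$ forces $q\not\sim a$ and $q\not\sim c$. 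Hence once $c$ has an $a$-type neighbour $p$, every neighbour of $d$ is $a$-type; re-running the argument from an $a$-type neighbour of $d$ shows every neighbour of $c$ is $a$-type, i.e.\ $N(c)\subseteq N(a)$.

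Finally I would assemble the contradiction. The vertex $v$ above has $v\not\sim a$, hence $v\notin N(a)\supseteq N(c)$, so $v\not\sim c$; also $v\not\sim a$ and $a\not\sim c$. Since $p\sim a$ but $v\not\sim a$, we have $p\neq v$, so $p\sim v$, and $p\sim a$, $p\sim c$ by construction. Thus $\{p;a,c,v\}$ is a claw, contradicting that $G$ is $K_{1,3}$-free, and therefore $|I|\le 3$.

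The main obstacle is that claw-freeness together with $d^I(\cdot)\le 2$ forces essentially every relevant vertex of $K$ to be adjacent to $a$, which at first appears to block any attempt to build a claw having $a$ as a leaf; the step that unlocks the argument is the use of the \emph{maximum}-clique hypothesis to produce a vertex of $K$ avoiding $a$, which can then serve as the third leaf. Once the types of the involved clique vertices are fixed, verifying the required (non-)adjacencies in each claw is routine.
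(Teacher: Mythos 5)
Your argument is correct and, as far as this paper is concerned, it is the only proof on offer: the paper imports Lemma~\ref{lemvileq3} from \cite{dicho1} without reproving it, so there is no in-text argument to compare against. Your chain of deductions checks out: $(\ast)$ is the right claw at $u$; maximality of $K$ does give a vertex $v\in K$ with $v\not\sim a$ and hence $v\sim b$; the type dichotomy for neighbours of $c,d$ follows correctly from $(\ast)$ together with $d^I(\cdot)\le 2$; the cross-type claw $\{p;a,c,q\}$ is genuine (all three leaf non-adjacencies are forced by $N^I(q)=\{b,d\}$ and $a,c\in I$); and the final claw $\{p;a,c,v\}$ closes the contradiction. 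One point worth making explicit: your parenthetical ``we may take $G$ to be connected'' is not cosmetic but essential --- the statement as written is false if vertices of $I$ may have degree $0$ (take $K=\{u,v,w\}$, $I=\{a,b,c,d\}$ with $ua,ub,va,wb$ the only $K$--$I$ edges and $c,d$ isolated: this is claw-free, $K$ is maximum, $\Delta^I=2$, yet $|I|=4$). So the lemma implicitly assumes every vertex of $I$ has a neighbour; this holds in the paper's only application (Lemma~\ref{k13hamil}), where $G$ is $2$-connected, and your proof uses nothing stronger than ``$c$ and $d$ each have at least one neighbour.''
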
 
\vspace*{-10pt}
\begin{obs}\label{del1hamil}
If $G$ is a $2$-connected split graph with \bb{$\Delta^I\le1$}, then $G$ has a Hamiltonian cycle. 
\end{obs}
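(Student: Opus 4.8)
The plan is to cut $V(G)$ into small pieces dictated by the degree hypothesis and then reassemble them into a cycle using the fact that $K$ is a clique. First I would record two easy consequences of the assumptions: since $G$ is $2$-connected it has minimum degree at least $2$, and since every neighbour of a vertex of $I$ lies in $K$, each $u\in I$ has at least two neighbours in $K$; moreover, because $\Delta^I=1$, the neighbourhoods inside $K$ of two distinct vertices of $I$ are disjoint --- otherwise a common neighbour $v\in K$ would have $d^I(v)\ge 2$, contradicting $\Delta^I=1$.

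Next, for each $u\in I$ I would fix two distinct neighbours $a_u,b_u\in K$. By the disjointness just observed, the $2|I|$ vertices $\{\,a_u,b_u : u\in I\,\}$ are pairwise distinct, so the length-two paths $a_u\,u\,b_u$ are vertex-disjoint and together cover all of $I$ together with $2|I|$ vertices of $K$. Let $W=K\setminus\bigcup_{u\in I}\{a_u,b_u\}$ be the remaining clique vertices. Ordering the pieces arbitrarily as $a_{u_1}\,u_1\,b_{u_1},\dots,a_{u_t}\,u_t\,b_{u_t}$ (with $t=|I|\ge 1$) and then listing $W=\{w_1,\dots,w_s\}$, I would claim that
\[
 a_{u_1}\,u_1\,b_{u_1}\,a_{u_2}\,u_2\,b_{u_2}\,\cdots\,a_{u_t}\,u_t\,b_{u_t}\,w_1\,w_2\,\cdots\,w_s\,a_{u_1}
\]
is a Hamiltonian cycle of $G$: consecutive vertices inside a piece $a_{u_i}\,u_i\,b_{u_i}$ are adjacent by the choice of $a_{u_i},b_{u_i}$, and every other consecutive pair consists of two vertices of $K$ and is therefore an edge. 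The degenerate case $I=\emptyset$ would be handled separately: then $G=K$ is a clique on at least three vertices and is trivially Hamiltonian.

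I do not anticipate a serious obstacle here. The one genuinely load-bearing step is the disjointness of the attachment pairs $\{a_u,b_u\}$, which is precisely where the hypothesis $\Delta^I=1$ is used; the only other thing to verify carefully is that the displayed sequence still closes up into a cycle in the boundary cases $s=0$ and $t=1$, which it does.
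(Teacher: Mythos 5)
Your proof is correct. The paper states this as an Observation without giving any proof, and your argument --- the disjointness of the attachment pairs forced by $\Delta^I=1$, the vertex-disjoint three-vertex paths $a_u\,u\,b_u$ covering $I$, and their concatenation via clique edges --- is exactly the standard argument the paper implicitly relies on (the same path-joining idea reappears in its proof of Theorem~\ref{deltale2}).
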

\vspace*{-10pt}
\begin{lemma} \label{k13hamil}
Let $G$ be a $K_{1,3}$-free split graph.  Then $G$ contains a Hamiltonian cycle if and only if $G$ is $2$-connected.
\end{lemma}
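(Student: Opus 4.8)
The forward implication is immediate, since every graph on at least three vertices that has a Hamiltonian cycle is $2$-connected. For the converse, let $G$ be a $2$-connected $K_{1,3}$-free split graph with vertex partition $(K,I)$, where $K$ is a maximum clique. By $2$-connectedness $|V(G)|\ge 3$, every $a\in I$ satisfies $|N(a)|\ge 2$ (a vertex of $I$ with fewer neighbours would disconnect $G$ or turn its unique neighbour into a cut vertex), and if $I=\emptyset$ then $G=K_{|K|}$ with $|K|\ge 3$, which is trivially Hamiltonian. The first observation I would record is that $\Delta^I\le 2$: if some $v\in K$ had three neighbours in $I$, those three vertices together with $v$ would induce a $K_{1,3}$, since $I$ is independent. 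So either $\Delta^I\le 1$, in which case Observation~\ref{del1hamil} already gives a Hamiltonian cycle, or $\Delta^I=2$, in which case Lemma~\ref{lemvileq3} gives $|I|\le 3$; since $\Delta^I=2$ forces $|I|\ge 2$, the only remaining cases are $|I|=2$ and $|I|=3$.

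The plan for these two cases is to construct a Hamiltonian cycle explicitly by the same device: fix a cyclic ordering of the clique $K$ and \emph{insert} each $a\in I$ into a \emph{gap}, i.e.\ a pair of cyclically consecutive clique vertices both of which lie in $N(a)$, taking care that distinct vertices of $I$ go into vertex-disjoint gaps. Every edge of the resulting spanning cycle is then either an edge from $I$ to $K$ that we have explicitly checked, or an edge inside the clique $K$. The extra structural ingredient beyond $|N(a)|\ge 2$ is a domination property forced by $K_{1,3}$-freeness: whenever $v\in K$ has $N^I(v)=\{a,b\}$, every $x\in K$ must lie in $N(a)\cup N(b)$, for otherwise $\{v,a,b,x\}$ induces a $K_{1,3}$; hence $N(a)\cup N(b)=K$.

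For $|I|=2$, writing $I=\{a,b\}$ and picking $v\in K$ with $N^I(v)=\{a,b\}$ (which exists since $\Delta^I=2$), I would combine $|N(a)|\ge 2$, $|N(b)|\ge 2$ and $N(a)\cup N(b)=K$ to produce two vertex-disjoint gaps; the one delicate situation, $|N(a)|=|N(b)|=2$ with a common endpoint, forces $|K|\le 3$ and is dispatched directly. For $|I|=3$ I would proceed in the same spirit, first limiting the possibilities by comparing $\sum_{v\in K}d^I(v)=\sum_{a\in I}|N(a)|$ with $2|K|$, then invoking the domination property at every $K$-vertex of $I$-degree two to eliminate the configurations admitting no valid placement, and exhibiting the cycle in each of the remaining ones.

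I expect the main obstacle to be precisely this final enumeration for $\Delta^I=2$: keeping track of how $N(a)$, $N(b)$ (and $N(c)$) intersect inside $K$, and verifying that in every configuration not already excluded by $K_{1,3}$-freeness a system of vertex-disjoint gaps exists. The role of $K_{1,3}$-freeness is exactly to rule out the ``bad'' configurations --- morally the non-$1$-tough split graphs --- in which some clique vertex would be stranded with no way to be reinserted into the cycle.
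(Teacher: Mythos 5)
Your proposal follows essentially the same route as the paper: reduce to $|I|\le 3$ via Lemma~\ref{lemvileq3} and Observation~\ref{del1hamil} (after noting $\Delta^I\le 2$ from claw-freeness), then use the domination property $N(a)\cup N(b)=K$ forced at a clique vertex with $N^I(v)=\{a,b\}$, together with the non-emptiness of $N(a)\setminus N(b)$ and $N(b)\setminus N(a)$ coming from maximality of $K$, to insert the $I$-vertices between consecutive clique vertices --- this is exactly the paper's construction, including the same handling of the $|I|=3$ case via the domination property at degree-two clique vertices to show $N(u)$ meets both $N(s)\setminus N(t)$ and $N(t)\setminus N(s)$. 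One small correction: requiring the gaps to be \emph{vertex-disjoint} is too strong and would actually fail when $|K|=|I|=3$, where the paper's cycle $(x,s,v,t,y,u,x)$ inserts the three $I$-vertices into the three pairwise vertex-sharing consecutive pairs of the triangle $\{x,v,y\}$; you only need the gaps to be distinct slots in the cyclic order, not vertex-disjoint.
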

\begin{proof}
Necessity is trivial.  For the sufficiency, we consider the following cases.\\
\emph{Case} 1: $|I|\geq 4$. 
As per Lemma \ref{lemvileq3}, $\Delta^I\le1$ and by Observation \ref{del1hamil}, $G$ has a Hamiltonian cycle.\\
\emph{Case} 2:  $|I|\le3$. If $\Delta^I=1$, then by Observation \ref{del1hamil}, $G$ has a Hamiltonian cycle.  When $\Delta^I>1$, note that either $|I|=2$ or $|I|=3$. \\
\textbf{(a)} $|I|=2$, i.e., $I=\{s,t\}$.  Since $\Delta^{I}=2$, there exists a vertex $v\in K$ such that $d^{I}(v)=2$ and $N^{I}(v)=\{s,t\}$.  
Let $S=N(s)$ and $T=N(t)$.  \bb{Suppose $K\neq S\cup T$, then there exists $z\in K$ such that $N^I(v)\cap N^I(z)=\emptyset$. Further, $\{z,v,s,t\}$ induces a claw, a contradiction. Therefore, $K=S\cup T$.}  Suppose the set $S\setminus T$ is empty, then the vertices $T\cup \{t\}$ induces a clique, larger in size than $K$, contradicting the maximality of $K$.  Therefore, the set $S\setminus T$ is non-empty.  Similarly, $T\setminus S\neq \emptyset$.  
It follows that, $|K|\geq 3$; let $x,w\in K$ such that $x\in T\setminus S$ and $w\in S\setminus T$.   
Further, $(w, s, v, t, x, v_1, v_2, \ldots, v_k,w)$ is a Hamiltonian cycle in $G$ where $\{v_1, v_2, \ldots, v_k\}=K\setminus \{v,w,x\}$.  \\
\textbf{(b)} If $|I|=3$ and let $I=\{s,t,u\}$.  
Since $G$ is a $2$-connected claw free graph, clearly $|K|\ge3$.  
Since $\Delta^{I}=2$, there exists $v\in K$ such that \bb{$v$ is adjacent to two vertices in $I$. Without loss of generality, let} $N^{I}(v)=\{s,t\}$.  Further, since $G$ is claw free, for every vertex $w\in K$, $N^{I}(w)\cap \{s,t\}\neq \emptyset$.  Let $S=N(s)$ and $T=N(t)$.  Suppose the set $S\setminus T$ is empty, then the vertices $T\cup \{t\}$ induces a clique, larger in size than $K$, contradicting the maximality of $K$.  Therefore, the set $S\setminus T\neq \emptyset$.  Similarly, $T\setminus S\neq \emptyset$.  Note that $|N(u)|\ge 2$ as $G$ is $2$-connected.  If $u$ is adjacent to a vertex $v\in S\cap T$, then $\{v\}\cup N^I(v)$ induces a claw.  Therefore, $N(u)\cap (S\cap T)=\emptyset$.  It follows that $u$ is adjacent to some vertices in $S\setminus T$ or $T\setminus S$.  Suppose $N(u)\cap (S\setminus T)=\emptyset$ and $N(u)\cap (T\setminus S)\neq \emptyset$, then there exists a vertex $x\in T\setminus S$ such that $N^I(x)=\{t,u\}$.  Since $S\setminus T\neq\emptyset$, there exists $w\in S\setminus T$ and $\{x,u,t,w\}$ induces a claw, a contradiction.  Therefore, $N(u)\cap (S\setminus T)\neq \emptyset$ and similarly, $N(u)\cap (T\setminus S)\neq \emptyset$.  \bb{It follows that there exists a vertex $x\in (T\setminus S)$, $w\in (S\setminus T)$ such that $w,x\in N(u)$.  If $|K|=3$, then $(w,s,v,t,x,u,w)$ is a Hamiltonian cycle in $G$.  Further, if $|K|> 3$, then note that there exists $y\in K$ such that $y\notin\{w,v,x\}$ is adjacent to exactly two vertices of $I$. Without loss of generality, let $ys,yt\in E(G)$.  $(y,s,v,t,x,u,w,w_1,\ldots,w_l)$ is the desired Hamiltonian cycle where $\{w_1,\ldots, w_l\}=K\setminus \{w,v,x,y\}$.}  This completes the case analysis, and the proof of Lemma \ref{k13hamil}.  $\hfill \qed$
\end{proof}
%
\textbf{$\bf K_{1,4}$-free split graphs}\\ 
Now we shall present structural observations in $K_{1,4}$-free split graphs.  The structural results in turn give a polynomial-time algorithm for the Hamiltonian cycle problem, which is one part of our dichotomy.
From Observation \ref{del1hamil}, it follows that a split graph $G$ with $\Delta^I=1$ has a Hamiltonian cycle if and only if $G$ is $2$-connected.  It is easy to see that for $K_{1,4}$-free split graphs, $\Delta^I\le3$, and thus we analyze such a graph $G$ in two variants, \bb{$\Delta^I_{G}\le2$, and $\Delta^I_G=3$}.  


\noindent
The next theorem shows a necessary and sufficient condition for the existence of Hamiltonian cycle in split graphs with $\Delta^I=2$.  
We define the notion short cycle in a $K_{1,4}$-free split graph $G$.  Consider the subgraph $H$ of $G$ where $V_a=\{u\in I : d(u)=2\}$, $V_b=N(V_a)$, $V(H)=V_a\cup V_b$ and \bb{$E(H)=\{uv \in E(G): u\in V_a, v\in V_b\}$}.  Clearly, $H$ is a bipartite subgraph of $G$.  Let $C$ be an induced cycle in $H$ such that $V(K)\setminus V(C)\ne\emptyset$.  We refer to $C$ in $H$ as a \emph{short cycle} in $G$, \bb{if such a cycle exists}. 
\begin{theorem}\label{deltale2}
Let $G$ be a $2$-connected, $K_{1,4}$-free split graph with $\Delta^I\le2$.  Then $G$ has a Hamiltonian cycle if and only if there are no short cycles in $G$.  
\end{theorem}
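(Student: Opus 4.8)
The plan is to prove both directions by exploiting the rigidity that $\Delta^I = 2$ forces on any Hamiltonian cycle. First I would set up the basic structure: partition $I$ into $I_2 = \{u \in I : d(u) = 2\}$ and $I_{\ge 3} = I \setminus I_2$ (note $d(u)=1$ is impossible in a $2$-connected graph). For any Hamiltonian cycle $C$ and any $u \in I$, the two cycle-neighbours of $u$ lie in $K$, and if $u \in I_2$ these neighbours are forced to be exactly $N(u)$. So every degree-$2$ independent vertex $u$ with $N(u) = \{a,b\}$ contributes the forced path $a\,u\,b$ to $C$. Contracting each such $u$ to the edge $ab$ (inside $K$), what remains to be chosen is a Hamiltonian-type structure on $K \cup I_{\ge 3}$ together with these forced connections; the obstruction is precisely when the forced edges among vertices of $K$ close up prematurely into a cycle that misses part of $K$.

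For the forward direction (no short cycle $\Rightarrow$ Hamiltonian), I would argue constructively. Using Claim~A / Claim~B–style reasoning restricted to $\Delta^I=2$, plus $2$-connectedness, one shows each $u \in I_{\ge 3}$ (which has $d(u)$ exactly $2$ or $3$ since $G$ is $K_{1,4}$-free, but here $\Delta^I=2$ only bounds from the $K$-side, so $u$ may have large degree) can be inserted into a suitable long segment of $K$, and each $u \in I_2$ inserted into its forced edge; because $K$ is a clique, any linear ordering of the "blocks" can be stitched into one cycle as long as the forced edges from $I_2$ do not already form a short cycle. Concretely: build the graph $H$ as in the short-cycle definition, observe that the $I_2$-vertices impose a set of required paths; if these never form a cycle avoiding some vertex of $K$, then the union of required paths is a disjoint union of simple paths (plus possibly one spanning cycle of all of $K$, which would itself be Hamiltonian after inserting $I_{\ge 3}$), and I can concatenate the path-endpoints through the free clique vertices and splice in the $I_{\ge 3}$ vertices one at a time, exactly as in the explicit cycles exhibited in the proof of Lemma~\ref{k13hamil}.

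For the reverse direction (short cycle $\Rightarrow$ not Hamiltonian), suppose $C'$ is a short cycle in $H$: its $I$-vertices all have degree $2$, so in any Hamiltonian cycle $C$ of $G$ each such vertex forces the edge to both its neighbours, hence $C$ must contain every edge of $C'$ — but then $C$ contains the entire cycle $C'$ as a subgraph, forcing $C = C'$, which is impossible because $V(C') \cap I_{\ge 3} = \emptyset$ and, more importantly, $V(K) \setminus V(C') \ne \emptyset$ by the definition of short cycle, so $C'$ cannot be spanning. This contradiction shows $G$ is non-Hamiltonian.

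The main obstacle I anticipate is the forward direction, specifically handling the $I_{\ge 3}$ vertices and showing the "blocks'' can always be linearly ordered and merged: one must rule out that the forced $I_2$-edges, while not forming a short cycle, nonetheless fragment $K$ in a way incompatible with also routing through the $I_{\ge 3}$ vertices. I expect this to require a careful case split on how many vertices of $K$ are "free'' (incident to no forced $I_2$-edge, or incident to exactly one) together with a counting argument using $2$-connectedness to guarantee enough free clique vertices to act as the connectors — essentially an Eulerian-type / path-cover argument on the auxiliary multigraph on $K$ whose edges are the forced $I_2$-pairs.
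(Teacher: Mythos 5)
Your necessity argument is fine and essentially equivalent to the paper's (the paper deletes $S=V(D)\cap K$ and counts components; you observe that every edge of the short cycle is forced into any Hamiltonian cycle by the degree-$2$ vertices, so the short cycle would be a proper subcycle of a cycle, which is impossible). Your setup for sufficiency also coincides with the paper's: form the bipartite graph $H$ on the degree-$2$ independent vertices, note that with no short cycles it is a disjoint union of paths with endpoints in $K$, and then insert the remaining independent vertices and close up through the clique.

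The problem is that the forward direction is not actually carried out: the insertion of the vertices of $I_{\ge 3}$ is precisely the substance of the paper's proof, and you defer it to an unspecified ``counting argument using $2$-connectedness'' which you yourself flag as the anticipated obstacle. The paper resolves it concretely: it partitions $V'=I\setminus V_a$ into $V_2$ (adjacent to endpoints of at least two distinct paths), $V_1$ (adjacent to endpoints of exactly one path), and $V_0$ (adjacent to no endpoint), and inserts vertices one at a time with preference $V_2$, then $V_1$, then $V_0$, recomputing the partition after every insertion. A $V_2$-vertex merges two paths; a $V_1$-vertex is spliced in using one endpoint-neighbour and one non-endpoint neighbour (available since $d(u)\ge 3$); a $V_0$-vertex opens a new $P_3$ on two of its clique neighbours. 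The invariant is that the structure remains a disjoint union of paths with endpoints in $K$, which the clique edges then close into a Hamiltonian cycle. Without this (or an equivalent) insertion scheme your proof has a genuine gap, not merely a technicality. A secondary error: your parenthetical that a cycle of $H$ spanning all of $K$ ``would itself be Hamiltonian after inserting $I_{\ge 3}$'' is false --- every edge of such a cycle passes through a forced degree-$2$ independent vertex, so nothing can be spliced into it; that configuration is only harmless because it forces $|I|>|K|$, in which case $G$ cannot be Hamiltonian at all.
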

\begin{proof}
Necessity is trivial as, if there exists a short cycle $D$, then $c(G-S)>|S|$ where $S=V(D)\cap K$.  For sufficiency: 
if $|I|=|K|$ and $G$ has no short cycles, then clearly, $H$ is a spanning cycle of $G$.  Further, if $G$ has no short cycles with $|K|>|I|$, \bb{and $H$ is non-empty}, then note that $H$ is a collection of paths $P_1,\ldots,P_i$, all of them are having end vertices in $K$.  Let $V'=I\setminus V_a$.  If $V'=\emptyset$, then it is easy to join the paths using clique edges to get a Hamiltonian cycle of $G$.  Otherwise we partition the vertices in $V'$ into three sets $V_2,V_1,V_0$ where \bb{$V_2=\{u\in V' : N(u)\cap P_j\neq \emptyset$ and $N(u)\cap P_{k}\neq \emptyset, 1\le j,k\le i \text{ and } j\neq k\}$},  $V_1=\{u\in V' : N(u)\cap P_j\neq \emptyset, 1\le j\le i$ and $u\notin V_2\}$,  $V_0=\{u\in V' : N(u)\cap P_j=\emptyset, 1\le j\le i\}$.  From the definitions, vertices in $V_2$ are adjacent to the end vertices of at least two paths, vertices in $V_1$ are adjacent to the end vertices of exactly one path and that of $V_0$ are not adjacent to the end vertices of any paths.  Now we obtain two graphs $H_1$, and $H_2$ from $H$ and finally, we see that $H_2$ is a collection of paths containing all the vertices of $I$, all  those paths having end vertices in $K$.  
We iteratively add the vertices in $V_2$ and $V_1$ into $H$ to obtain $H_1$, based on certain preferences till $V_2=V_1=\emptyset$.  
We pick a vertex ( from $V_2$ if $V_2\ne\emptyset$, otherwise from $V_1$ ) and add it to $H$.  If we add a vertex $u$ from $V_2$, then we join two arbitrary paths each has its end vertex adjacent to $u$.  Therefore, the addition of a vertex from $V_2$ reduces the number of paths in $H$ by one.  If $u\in V_1$, then $u$ is added to $H$ in such a way that one of its end vertex is an end vertex of a path and the other is not an end vertex of any paths.  Clearly, such two vertices are possible due to the fact that $d(u)\ge3$.  Note that in this case, one of the paths in $H$ gets its size increased, still in both cases all the paths have their end vertices in $K$.  We add the vertices in such a way that the vertices in $V_2$ gets preference over that of $V_1$.  Also, after every addition of a vertex, we add the new vertex to $V'$ and re-compute the partitions $V_2,V_1$ and $V_0$.  Observe that once the set $V_2$ is empty, and a vertex from $V_1$ is added, the re-computation may result in a case where $V_2\neq \emptyset$.  
Once $V_2=V_1=\emptyset$, the graph $H_1$ obtained is a collection of one or more paths having end vertices in $K$. \\\\
Now we continue the addition by including the vertices of $V_0$.  A vertex $u\in V_0$ is added in such a way that it forms a $P_3$ with two of its arbitrary neighbors in $K$.  Evidently, the addition of vertices from $V_0$ increases the number of paths by one.  Note that the addition of vertices from $V_0$ may result in $V_2\ne\emptyset$ or $V_1\ne\emptyset$.  Therefore, when we iteratively add the vertices to $H_1$, we give first preference to the vertices in $V_2$, then to the vertices in $V_1$ (if $V_2=\emptyset$) and finally to that in $V_0$ (if $V_2=V_1=\emptyset$).  Here also, after each addition of a vertex $u$, we add $u$ to $V'$ and re-compute the partitions.  $H_2$ is the graph obtained by iteratively adding all the vertices left in $I$.  Finally, in $H_2$ we have a collection of one or more paths with end vertices in $K$.  It is easy to see that the paths could be joined using clique edges to get a Hamiltonian cycle in $G$.  This completes the proof. $\hfill\qed$ 
\end{proof}
Having presented a characterization for the Hamiltonian cycle problem in split graphs with $\Delta^I\le2$, we shall now present our main result, which is a necessary and sufficient condition for the existence of Hamiltonian cycle in $K_{1,4}$-free split graphs.\\

\noindent
\textbf{Main result: } {\it 
Let $G$ be a $2$-connected, $K_{1,4}$-free split graph with $|K|\ge|I|\ge8$. 
$G$ has a Hamiltonian cycle if and only if there are no induced short cycles in $G$.  
Further, finding such a cycle is polynomial-time solvable.\\
}

\noindent 
To prove our main result, we make use of the following claims, which we shall discuss next in detail.

\newpage
\noindent 
\textbf{Claim A}: {\em For a split graph $G$ with $\Delta^I=3$, let $v\in K, d^I(v)=3$, and $U=N^I(v)$.  If $G$ is $K_{1,4}$-free, then $N(U)=K$.  }
\begin{proof}
If not, let $w\in K$ such that $w\notin N(U)$.  Clearly, $N^I(v)\cup \{w,v\}$ induces a $K_{1,4}$, a contradiction. $\hfill \qed$
\end{proof}
Note that for a $K_{1,4}$-free split graph $G$, $\Delta^I\le3$ and thus the left over case to analyze is when $\Delta^I=3$.  
When $\Delta^I=3$, there exists a vertex \bb{$v\in K$} with $d^I(v)=3$.  \bb{We obtain $G'=G-N^I(v)$, and the following claim is observable on $G'$.}\\ 

\noindent
\textbf{Claim B}:
{\em For a $K_{1,4}$-free split graph $G$ with $\Delta^I=3$, let $v\in K$ such that $d^I(v)=3$, and the split graph $G'=G-N^I(v)$.  Then, $\Delta^I_{G'}\le 2$.  }
\begin{proof}
Otherwise, if there exists $x\in K, d^I_{G'}(x)=3$, then $N^I(x)\cup \{x,v\}$ induces a $K_{1,4}$ in $G$, a contradiction.$\hfill \qed$
\end{proof}

\noindent
Let $G$ be a 2-connected $K_{1,4}$-free split graph with $\Delta^I=3$, $|K|\ge|I|\ge8$.  If there are no induced short cycles in $G$, then we consider the subgraph $G'=G-N^I(v)$ where $v\in K,N^I(v)=\{v_1,v_2,v_3\}$. By Claim B, note that $\Delta^I_{G'}\le 2$ and by Theorem \ref{deltale2}, there exists a Hamiltonian cycle in $G'$. By the constructive proof of Theorem \ref{deltale2}, in $H$ there exists a collection of  vertex disjoint paths $\mathbb{C}$.  Note that each path in $\mathbb{C}$ alternates between an element in $K$ and an element in $I$, and all the paths are \bb{have their} end vertices in $K$.  Therefore the paths \bb{have an} odd number of vertices.  Thus, $\mathbb{C}=\mathbb{P}_1\cup \mathbb{P}_3,\ldots,\mathbb{P}_{2i+1}$, where $\mathbb{P}_j$ is the set of maximal paths of size $j$ where for every $P\in\mathbb{P}_j$, there does not exists $P'\in\mathbb{C}$ such that $E(P)\subset E(P')$.  
A path $P_a\in\mathbb{C}$ is defined on the vertex set $V(P_a)=\{w_1,\ldots,w_j,x_1,\ldots,x_{j-1}\}$,  $E(P_a)=\{w_ix_i : 1\leq i\leq j-1\}\cup \{w_kx_{k-1} : 2\leq k\leq j\}$ such that $\{w_1,\ldots,w_j\}\subseteq K$, $\{x_1,\ldots,x_{j-1}\}\subseteq I$.   We denote such a path as $P_a=P(w_1,\ldots,w_j;x_1,\ldots,x_{j-1})$ 
We shall now present our structural observations on paths in $\mathbb{C}$.
\begin{cl}\label{xunivge11}
 If there exists a path $P_a\in \mathbb{P}_i,i\ge11$ such that $P_a=P(w_1,\ldots,w_j;x_1,\ldots,x_{j-1}), j\ge6$, then there exists $v_1\in N^I(v)$ such that \bb{$v_1w_l\in E(G), {2\le l\le j-1}$}.   
 \end{cl}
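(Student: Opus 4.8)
The plan is to show that each internal clique vertex of $P_a$ has exactly one neighbour in $N^I(v)$ and then that this neighbour is the same for all of them. First I would record an exact-degree fact for every internal vertex $w_i$ with $2\le i\le j-1$: on $P_a$ it is adjacent to the two independent vertices $x_{i-1},x_i$, so $d^I(w_i)\ge 2$; by Claim $A$ applied to $v$ we have $N(N^I(v))=K$, so $w_i$ is adjacent to at least one vertex of $N^I(v)$; and since $\Delta^I=3$ this forces $d^I(w_i)=3$ with $N^I(w_i)=\{x_{i-1},x_i,v_{c(i)}\}$ for a uniquely determined special neighbour $v_{c(i)}\in N^I(v)$. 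The three vertices are genuinely distinct: $x_{i-1}\ne x_i$ are distinct vertices of $P_a$, while $v_{c(i)}$ lies in $N^I(v)$, which was deleted in forming $H$ and is therefore disjoint from the $x$-vertices. Write $c(i)$ for the index of this special neighbour.

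The heart of the argument is to show that $c(p)=c(q)$ whenever $p,q$ are internal indices with $|p-q|\ge 2$. Since $w_q$ also satisfies $d^I(w_q)=3$, I would apply Claim $A$ to $w_q$: every clique vertex, in particular $w_p$, is adjacent to one of $x_{q-1},x_q,v_{c(q)}$. Now $|p-q|\ge 2$ guarantees $x_{q-1},x_q\notin\{x_{p-1},x_p\}$, so neither lies in $N^I(w_p)=\{x_{p-1},x_p,v_{c(p)}\}$; as this neighbourhood is \emph{exactly} three vertices, $w_p$ has no further independent neighbour, whence $w_p\not\sim x_{q-1}$ and $w_p\not\sim x_q$. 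Therefore $w_p\sim v_{c(q)}$, and by uniqueness of the special neighbour of $w_p$ we conclude $v_{c(q)}=v_{c(p)}$, i.e. $c(p)=c(q)$. This exactness step is where I expect the main care to be needed: it is essential that $d^I(w_p)$ equals $3$ and not merely at most $3$, so that its third slot $v_{c(p)}$ is already occupied and any additional independent neighbour would give $w_p$ four pairwise non-adjacent neighbours, contradicting $\Delta^I=3$ (equivalently producing a $K_{1,4}$).

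Finally I would chain these equalities. The internal indices are $2,\dots,j-1$, and $j\ge 6$ ensures that $2,3,4,5$ are all internal. From the distance-$\ge 2$ rule, $c(2)=c(4)$, $c(2)=c(5)$ and $c(3)=c(5)$, so $c(2)=c(3)=c(4)=c(5)$; moreover every internal index $i\ge 4$ satisfies $|i-2|\ge 2$, giving $c(i)=c(2)$. Hence $c$ is constant on $\{2,\dots,j-1\}$, equal to some fixed $v_1\in N^I(v)$, and this single vertex satisfies $v_1w_i\in E(G)$ for all $2\le i\le j-1$, as required. The role of the hypothesis $j\ge 6$ is precisely to supply enough internal vertices (at least four, and in particular the index $5$) for the distance-$\ge 2$ relation on the internal indices to be connected, so that equality of colour propagates even across consecutive indices; for a shorter path this propagation could fail.
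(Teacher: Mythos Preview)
Your proposal is correct and follows essentially the same approach as the paper. Both arguments first pin down that each internal clique vertex $w_i$ has $N^I(w_i)=\{x_{i-1},x_i,v_{c(i)}\}$ for a unique $v_{c(i)}\in N^I(v)$, then show that non-consecutive internal vertices share the same $v_{c(\cdot)}$ (the paper phrases this as a direct $K_{1,4}$ contradiction on $N^I(w_i)\cup\{w_i,w_k\}$, you phrase it as Claim~A applied to $w_q$; these are equivalent), and finally chain through the at-least-four internal indices to make $c$ constant. Your presentation via the colour function $c$ and the explicit connectivity of the distance-$\ge 2$ relation is somewhat cleaner than the paper's terse ``since $|W|\ge 4$'' step, but the mathematical content is the same.
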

\begin{proof}
First we show that for every non-consecutive $2\le i,k\le j-1$, for the pair of vertices $w_i,w_k$, $v_1w_i\in E(G)$ and $v_1w_k\in E(G)$.
Suppose, if exactly one of $w_i,w_k$ is adjacent to $v_1$, say $w_iv_1\in E(G)$, then \NC{w_i}{w_k} has an induced \K14.  If $v_1w_i\nin E(G)$ and $v_1w_k\nin E(G)$, then by Claim A, there exists an adjacency for $w_i,w_k$ in $v_2,v_3$.  Further, if either $v_2$ or $v_3$ is adjacent to both $w_i,w_k$, then $v_1=v_2$, or $v_1=v_3$, and the claim is true.  Therefore, we shall assume without loss of generality, $v_2w_i\in E(G)$ and $v_2w_k\nin E(G)$.  This implies that \NC{w_i}{w_k} has an induced \K14, a contradiction.  Since the above observation is true for all such pair of vertices in \bb{$W=\{w_l : 2\le l\le j-1\}$}, and $|W|\ge4$, it follows that there exists $v_1\in N^I(v)$ such that $v_1w_l\in E(G), {2\le l\le j-1}$.   
$\hfill\qed$
\end{proof}
\begin{cl}\label{no13}
$\mathbb{P}_i=\emptyset,{i\ge13}$.
\end{cl}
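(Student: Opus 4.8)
The plan is to argue by contradiction. Suppose $\mathbb{P}_i\neq\emptyset$ for some $i\ge 13$, and fix $P_a=P(w_1,\ldots,w_j;x_1,\ldots,x_{j-1})\in\mathbb{P}_i$; since $|V(P_a)|=2j-1=i\ge 13$ we have $j\ge 7$. As $i>11$ and $j\ge 6$, Claim~\ref{xunivge11} gives a vertex of $N^I(v)$, say $v_1$, with $v_1w_k\in E(G)$ for every $2\le k\le j-1$, so $v_1$ is adjacent to each of the $\ge 5$ internal clique vertices $w_2,\ldots,w_{j-1}$. Moreover every path of $\mathbb{C}$ has its end-vertices in $K$, so $\mathbb{C}$ is a family of $|K|-(|I|-3)\ge 3$ vertex-disjoint paths (using $|K|\ge|I|$), whence the paths other than $P_a$ give $|K\setminus V(P_a)|\ge 2$.

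The second step is to pin down the local structure forced by $K_{1,4}$-freeness. For every internal $k$ ($2\le k\le j-1$) the set $\{w_k\}\cup\{x_{k-1},x_k,v_1\}$ induces a claw -- its three leaves lie in $I$ and are pairwise non-adjacent -- so $w_k$ has no fourth independent neighbour; hence $N^I(w_k)=\{x_{k-1},x_k,v_1\}$. I would then establish that each internal independent vertex $x_k$ ($2\le k\le j-2$) has $N(x_k)=\{w_k,w_{k+1}\}$, i.e. $x_k\in V_a$ and $w_k,w_{k+1}\in V_b$: if $x_k$ had a further neighbour $y\in K$, then $y\ne v$ (as $x_k\notin N^I(v)$) and, by Claim~$A$, $y$ is adjacent to some $v_t$; the descriptions $N^I(w_m)=\{x_{m-1},x_m,v_1\}$ rule out $y\in\{w_2,\ldots,w_{j-1}\}$, so $y\in\{w_1,w_j\}$ or $y\notin V(P_a)$, and in either case rerouting $P_a$ through the edge $x_ky$ and the available clique edges yields a path on $V(P_a)$ (respectively $V(P_a)\cup\{y\}$) on which Claim~\ref{xunivge11} applies again, forcing $v_1$ to be adjacent to still more clique vertices and, together with $K_{1,4}$-freeness, a vertex with four pairwise non-adjacent neighbours -- a contradiction. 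Thus the internal $x_k$ all have degree two.

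With the internal $x_k$'s of degree two, the segment $w_2\,x_2\,w_3\cdots x_{j-2}\,w_{j-1}$ lies inside the bipartite subgraph of $G$ on $V_a\cup V_b$ used to define short cycles, and its $\ge 4$ edges join the clique vertices $w_2,\ldots,w_{j-1}$, all of which are adjacent to $v_1$. From here I would extract a short cycle: by reapplying Claim~$A$ (and Claim~$B$) to one of the internal vertices $w_k$, whose independent neighbourhood is now completely known, one obtains a genuine degree-two independent vertex of $V_a$ joining two non-consecutive clique vertices already on $P_a$, which closes a segment of that bipartite subgraph into an induced even cycle; since this cycle uses at most $j$ clique vertices and $|K\setminus V(P_a)|\ge 2$, it omits a clique vertex and is therefore a short cycle of $G$, contradicting the standing hypothesis. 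Hence $\mathbb{P}_i=\emptyset$ for all $i\ge 13$. The main obstacle is this last extraction: the natural vertex for closing the cycle is $v_1$, but $d(v_1)\ge 6$, so $v_1\notin V_a$ and cannot lie on a short cycle; one has to manufacture an honest degree-two independent vertex instead, and must also dispatch the rerouting sub-cases in the degree-two argument with enough care that Claim~\ref{xunivge11} and $K_{1,4}$-freeness really can be reapplied.
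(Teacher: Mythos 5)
There is a genuine gap, and it comes from missing the one lever the paper actually pulls: the maximality of the clique $K$. Since $v_1$ is adjacent to every internal vertex $w_2,\ldots,w_{j-1}$, the paper asks what happens at a vertex $s\in K$ with $sv_1\notin E(G)$ --- such an $s$ must exist, else $K\cup\{v_1\}$ would be a larger clique. For each internal $w_r$ one has $N^I(w_r)=\{x_{r-1},x_r,v_1\}$ (as you correctly note), so $N^I(w_r)\cup\{w_r,s\}$ forces $s$ to be adjacent to $x_{r-1}$ or $x_r$ for every $2\le r\le j-1$; with $j\ge 7$ there are at least five such consecutive pairs, so $s$ needs at least three neighbours among $x_1,\ldots,x_{j-1}$, and Claim~A gives $s$ a further neighbour in $\{v_2,v_3\}$. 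Then $\{s\}\cup N^I(s)$ induces a $K_{1,4}$ and the contradiction is immediate. Your proposal never produces a non-neighbour of $v_1$ and so never reaches this counting argument.

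The route you substitute for it does not close. First, the degree-two claim for the internal $x_k$ is not established: the ``rerouting'' step invokes Claim~\ref{xunivge11} on a path that is not a member of $\mathbb{C}$ (that claim is stated for paths of the decomposition, and its proof uses membership in $\mathbb{P}_i$), and the promised ``vertex with four pairwise non-adjacent neighbours'' is never identified. Second, and more seriously, even granting that every internal $x_k$ has degree two, the segment $w_2x_2w_3\cdots x_{j-2}w_{j-1}$ is only a \emph{path} in the bipartite graph on $V_a\cup V_b$; a short cycle requires an induced \emph{cycle} there, and neither Claim~A nor Claim~B manufactures the extra degree-two independent vertex joining two non-consecutive $w$'s that you would need to close one. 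You acknowledge this yourself as ``the main obstacle,'' but it is not an obstacle to be dispatched later --- it is the missing proof. As written, the argument establishes some correct local structure ($N^I(w_k)=\{x_{k-1},x_k,v_1\}$ for internal $k$) and then stops short of any contradiction.
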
 
\begin{proof} 
Assume for a contradiction that there exists $P_{a}\in \mathbb{P}_i,{i\ge13}$.  Let $P_a=P(w_1,\ldots,w_j;x_1,\ldots,x_{j-1})$, ${j\ge7}$.  From Claim \ref{xunivge11} there exists $v_1\in N^I_G(v)$ such that $v_1w_k\in E(G),{2\le k\le j-1}$.  
Since the clique is maximum in $G$, there exists $s\in K$ such that $v_1s\notin E(G)$.  Further, there exists at least three vertices in $x_1,\ldots,x_{j-1}$ adjacent to $s$, otherwise, for some $2\le r\le j-1$, $N^I_G(w_r)\cup \{w_r,s\}$ induces a $K_{1,4}$.  Finally, from Claim A, either $v_2s\in E(G)$ or $v_3s\in E(G)$.  It follows that $\{s\}\cup N^I_G(s)$ has an induced $K_{1,4}$, a contradiction.  $\hfill \qed$ 
\end{proof}
\begin{cl}\label{xunivtwopaths}
Let $P_a=P(w_1,\ldots,w_i;x_1,\ldots,x_{i-1}), {i\ge3}$, and $P_b=P(s_1,\ldots,s_j;t_1,\ldots,t_{j-1}), {j\ge3}$ be arbitrary paths in $\mathbb{C}$.  Then there exists $v_1\in N^I(v)$ such that $\forall~{2\le l\le i-1}, v_1w_l\in E(G)$, and $\forall~{2\le m\le j-1},v_1s_m\in E(G)$. 
\end{cl}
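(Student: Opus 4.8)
The plan is to follow the template of the proof of Claim~\ref{xunivge11}. First I would pin down, for every \emph{internal} clique vertex of a path in $\mathbb{C}$, a \emph{unique} neighbour in $N^I(v)$; then I would show that two internal vertices lying on \emph{distinct} paths are forced to pick the \emph{same} such neighbour. Since $i\ge3$ and $j\ge3$, each of $P_a$ and $P_b$ contains at least one internal clique vertex, so this common choice is exactly the $v_1\in N^I(v)$ demanded by the statement. (I take $P_a\ne P_b$, which is the only case in which the claim is used.)

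First I would establish a normal form. Let $w$ be an internal clique vertex of a path in $\mathbb{C}$ — i.e.\ $w=w_l$ with $2\le l\le i-1$, or $w=s_m$ with $2\le m\le j-1$ — and let $x,x'\in I$ be the two $I$-vertices adjacent to $w$ on its path; note $x,x'\nin N^I(v)$ since both lie in $H=G-N^I(v)$. By Claim~A, $N(N^I(v))=K$, so $w$ has a neighbour in $N^I(v)$; together with $d^I_G(w)\le\Delta^I=3$ this forces $w$ to have \emph{exactly one} neighbour in $N^I(v)$, which I write $\varphi(w)$. Thus $N^I_G(w)=\{x,x',\varphi(w)\}$, a set of three distinct vertices.

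The crux is: if $w$ is internal on $P_a$ and $w'$ is internal on $P_b$, then $\varphi(w)=\varphi(w')$. Suppose not, say $\varphi(w)=v_p\ne v_q=\varphi(w')$, and write $N^I_G(w)=\{x,x',v_p\}$ and $N^I_G(w')=\{y,y',v_q\}$ as above. I claim \NC{w}{w'} induces a \K14 in $G$ with centre $w$ and leaves $x,x',v_p,w'$: the centre $w$ is adjacent to all four; $x,x',v_p$ lie in $I$ and so are pairwise non-adjacent; and $w'$ is adjacent to none of $x,x',v_p$, because $N^I_G(w')=\{y,y',v_q\}$ while $\{x,x'\}\cap\{y,y'\}=\emptyset$ (vertex-disjointness of $P_a$ and $P_b$), $x,x'\nin N^I(v)$ (so $x,x'\ne v_q$), and $v_p\ne v_q$ with $v_p\nin\{y,y'\}$ (as $y,y'\in H$). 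This contradicts $K_{1,4}$-freeness, so $\varphi(w)=\varphi(w')$.

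To finish, pick an internal clique vertex $w^{\ast}$ of $P_a$ and one $s^{\ast}$ of $P_b$ (both exist since $i,j\ge3$); by the crux $\varphi(w^{\ast})=\varphi(s^{\ast})$, and after relabelling $v_1,v_2,v_3$ we may call this common vertex $v_1$. Applying the crux to $w_l$ and $s^{\ast}$ gives $\varphi(w_l)=v_1$ for every $2\le l\le i-1$, and symmetrically $\varphi(s_m)=v_1$ for every $2\le m\le j-1$; hence $v_1w_l\in E(G)$ and $v_1s_m\in E(G)$ throughout, as required. The only delicate point is the crux, where one must be sure the newly adjoined clique vertex $w'$ misses \emph{both} path-$I$-neighbours of $w$; this is precisely where $\Delta^I=3$ is used — it pins $N^I_G(w')$ down to the explicit set $\{y,y',v_q\}$ — together with the vertex-disjointness of $P_a$ and $P_b$. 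No ``at least four internal vertices'' hypothesis (as in Claim~\ref{xunivge11}) is needed, since every internal vertex of $P_a$ is non-consecutive with every internal vertex of $P_b$.
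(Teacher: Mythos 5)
Your proof is correct and takes essentially the same approach as the paper: for each internal clique vertex the bound $\Delta^I=3$ together with Claim~A pins down a unique neighbour in $N^I(v)$, and a cross-pair of internal vertices from the two (disjoint) paths with different such neighbours yields an induced $K_{1,4}$ of the form $N^I(w)\cup\{w,w'\}$. Your write-up is in fact somewhat more careful than the paper's, making explicit both the uniqueness of $\varphi(w)$ and the final aggregation step showing the common neighbour is the same across all pairs.
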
 
\begin{proof}
We shall consider every pair of vertices $w_l,s_m$ and show that $v_1w_l,v_1s_m\in E(G)$.
Suppose, if exactly one of $w_l,s_m$ is adjacent to $v_1$, say $w_lv_1\in E(G)$, then \NC{w_l}{s_m} has an induced \K14.  If $v_1w_l\nin E(G)$ and $v_1s_m\nin E(G)$, then by Claim A, there exists an adjacency for $w_l,s_m$ in $v_2,v_3$.  Further, if either $v_2$ or $v_3$ is adjacent to both $w_l,s_m$, then $v_1=v_2$, or $v_1=v_3$, and the claim is true.  Therefore, we shall assume without loss of generality, $v_2w_l\in E(G)$ and $v_2s_m\nin E(G)$.  This implies that \NC{w_l}{s_m} has an induced \K14, a contradiction.  $\hfill\qed$
\end{proof}
\begin{corollary} \label{xunivthreepaths}
Let $P_a=P(w_1,\ldots,w_i;x_1,\ldots,x_{i-1}), {i\ge3}$,  $P_b=P(s_1,\ldots,s_j;t_1,\ldots,t_{j-1}), {j\ge3}$ and \\ $P_c=P(y_1,\ldots,y_k;z_1,\ldots,z_{k-1}), {k\ge3}$ be arbitrary paths in $\mathbb{C}$.  Then there exists $v_1\in N^I(v)$ such that \\ $\forall~{2\le l\le i-1}, v_1w_l\in E(G)$, $\forall~{2\le m\le j-1},v_1s_m\in E(G)$, and $\forall~{2\le n\le k-1}, v_1y_n\in E(G)$. 
\end{corollary}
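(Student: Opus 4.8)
The plan is to bootstrap from Claim~\ref{xunivtwopaths}, which already supplies a common dominating vertex for any \emph{two} of the paths, and then to force the witnesses obtained from two overlapping pairs to coincide. Write $W_a=\{w_l : 2\le l\le i-1\}$, $W_b=\{s_m : 2\le m\le j-1\}$, $W_c=\{y_n : 2\le n\le k-1\}$ for the sets of interior clique vertices; each is non-empty since $i,j,k\ge3$. Applying Claim~\ref{xunivtwopaths} to the pair $P_a,P_b$ yields a vertex $v'\in N^I(v)$ adjacent to every vertex of $W_a\cup W_b$, and applying it to the pair $P_a,P_c$ yields a vertex $v''\in N^I(v)$ adjacent to every vertex of $W_a\cup W_c$. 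It then suffices to show $v'=v''$: the common value $v_1:=v'=v''$ is adjacent to every vertex of $W_a\cup W_b\cup W_c$, which is precisely the assertion of the corollary.

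The crucial step is the equality $v'=v''$, and this is exactly where the hypothesis $\Delta^I=3$ does the work. Pick any vertex $w\in W_a$ (possible since $W_a\ne\emptyset$). By construction $w$ is adjacent to both $v'$ and $v''$. On the other hand $w$ is an interior vertex of a path of $\mathbb{C}$, and since $H=G-N^I(v)$ satisfies $\Delta^I_H\le2$ (Claim~$B$), the vertex $w$ has exactly two neighbours in $I\setminus N^I(v)$, namely its two path-neighbours. Hence if $v'\ne v''$ then $N^I(w)$ would contain these two path-neighbours together with $v'$ and $v''$, giving $d^I(w)\ge4$ and contradicting $\Delta^I=3$. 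Therefore $v'=v''$, which finishes the proof.

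I do not anticipate a real obstacle; the only point deserving attention is the degenerate case where some of $P_a,P_b,P_c$ is a $P_3$ and thus contributes a single interior clique vertex, but the degree count above is insensitive to this, as it uses only that $W_a\ne\emptyset$ and that an interior vertex of a path of $\mathbb{C}$ has exactly two $I$-neighbours inside $H$. If one prefers not to invoke Claim~\ref{xunivtwopaths} as a black box, one can instead re-run its cross-path $K_{1,4}$ argument directly over every pair of interior clique vertices drawn from $W_a\cup W_b\cup W_c$; this still requires the same coincidence-of-witnesses step, so the bootstrapping argument above is the more economical presentation.
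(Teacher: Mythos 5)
Your proposal is correct and follows essentially the same route as the paper: apply Claim~\ref{xunivtwopaths} to the two overlapping pairs $(P_a,P_b)$ and $(P_a,P_c)$ and glue the conclusions along the common path $P_a$. The one place you go beyond the paper is the coincidence-of-witnesses step: the paper silently reuses the symbol $v_1$ for both applications, whereas you justify it via the observation that an interior clique vertex of a path in $\mathbb{C}$ already has two neighbours in $I\setminus N^I(v)$ (by Claim~B), so a second neighbour in $N^I(v)$ would force $d^I(w)\ge 4$ and an induced $K_{1,4}$; this is a genuine (if small) gap in the paper's write-up that your argument correctly fills.
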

\begin{proof}
From Claim \ref{xunivtwopaths}, $\forall~{2\le l\le i-1}, v_1w_l\in E(G)$, and $\forall~{2\le m\le j-1},v_1s_m\in E(G)$.  Similarly,\\ $\forall~{2\le l\le i-1}, v_1w_l\in E(G)$, and $\forall~{2\le n\le k-1},v_1y_n\in E(G)$.  Thus the corollary follows from Claim \ref{xunivtwopaths}. $\hfill\qed$
\end{proof}
\begin{cl}\label{no115} 
 If there exists $P_a\in \mathbb{P}_{11}$, then $\mathbb{P}_j=\emptyset,j\ne11,j\ge5$.
\end{cl}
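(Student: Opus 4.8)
The plan is a proof by contradiction. Given a path in $\mathbb{P}_{11}$, suppose that $\mathbb{P}_j\ne\emptyset$ for some $j\ge 5$ with $j\ne 11$; by Claim~\ref{no13} we may assume $j\in\{5,7,9\}$. Fix $P_a=P(w_1,\dots,w_6;x_1,\dots,x_5)\in\mathbb{P}_{11}$ and $P_b=P(s_1,\dots,s_m;t_1,\dots,t_{m-1})\in\mathbb{P}_j$ with $m=(j+1)/2\in\{3,4,5\}$. By Claim~\ref{xunivtwopaths} there is a fixed $v_1\in N^I(v)$ adjacent to every interior clique vertex of both paths; in particular $v_1w_r\in E(G)$ for $2\le r\le 5$ and $v_1s_2\in E(G)$. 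Since $K$ is a maximum clique, pick $s\in K$ with $v_1s\notin E(G)$, and by Claim~A the vertex $s$ is adjacent to $v_2$ or to $v_3$; say $v_2s\in E(G)$.

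Everything rests on a single observation: if $u$ is an interior clique vertex of a path in $\mathbb{C}$ with $uv_1\in E(G)$ (so $u\ne s$, as $v_1s\notin E(G)$), and $y,y'$ are the two independent-set vertices adjacent to $u$ on that path, then $s$ must be adjacent to $y$ or to $y'$; otherwise $\{u\}\cup\{y,y',v_1,s\}$ induces a $K_{1,4}$, using $us\in E(G)$ (both are in $K$), $uv_1\in E(G)$, $v_1s\notin E(G)$, and the fact that $y,y',v_1$ are pairwise non-adjacent since they lie in $I$. I would apply this twice. Taking $u=s_2$ produces a vertex $t_\ell\in\{t_1,t_2\}$ with $st_\ell\in E(G)$. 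Taking $u=w_r$ for each $r\in\{2,3,4,5\}$ shows that $s$ is adjacent to $x_{r-1}$ or $x_r$ for every such $r$, that is, $N^I_G(s)\cap\{x_1,\dots,x_5\}$ covers all four edges of the path $x_1x_2x_3x_4x_5$; any such vertex cover has size at least two, so I can fix $x_a,x_b\in N^I_G(s)\cap\{x_1,\dots,x_5\}$ with $x_a\ne x_b$.

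To finish, $x_a,x_b,t_\ell,v_2$ all lie in $N^I_G(s)$. They are distinct, because $x_a\ne x_b$ and the sets $V(P_a)$, $V(P_b)$, $N^I(v)$ containing them (respectively) are pairwise disjoint; and they are pairwise non-adjacent, being all in the independent set $I$. Hence $\{s\}\cup\{x_a,x_b,t_\ell,v_2\}$ induces a $K_{1,4}$ in $G$, contradicting $K_{1,4}$-freeness. Thus no such $P_b$ exists, and together with Claim~\ref{no13} this yields $\mathbb{P}_j=\emptyset$ for every $j\ne 11$ with $j\ge 5$.

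I expect the only mildly delicate point to be the extraction of $x_a,x_b$: one must recognise that ``$s$ is adjacent to $x_{r-1}$ or $x_r$ for all $2\le r\le 5$'' is exactly the statement that $N^I_G(s)$ is a vertex cover of $x_1x_2x_3x_4x_5$, and that such a cover must have at least two vertices. The remainder is a mechanical, repeated application of the single $K_{1,4}$ observation together with Claim~A and Claim~\ref{xunivtwopaths}; the only real bookkeeping is to verify that the four exhibited neighbours of $s$ are genuinely distinct and all belong to $I$.
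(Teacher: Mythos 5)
Your proof is correct and follows essentially the same strategy as the paper's: fix the common neighbour $v_1$ from Claim~\ref{xunivtwopaths}, pick $s\in K$ with $v_1s\notin E(G)$ by maximality of $K$, and force $s$ to acquire four pairwise non-adjacent neighbours in $I$, producing an induced $K_{1,4}$. Your vertex-cover observation makes the argument uniform in the location of $s$, which lets you dispense with the paper's preliminary steps (establishing $v_1w_1,v_1w_6\in E(G)$) and its case split on whether $w'\in\{s_1,s_r\}$ or $w'\notin P_b$ --- a genuine tidying-up of the same underlying idea.
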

\begin{proof}
Assume for a contradiction that there exists such a path $P_b\in \mathbb{P}_j, j\ge5$.  Let $P_a=(w_1,\ldots,w_6;x_1,\ldots,x_5)$ and $P_b=(s_1,\ldots,s_r;t_1,\ldots,t_{r-1})$, $r\ge3$.  From Claim \ref{xunivge11}, there exists a vertex $v_1\in N^I(v)$, such that $v_1w_i\in E(G), 2\le i\le5$ and from Claim \ref{xunivtwopaths}, $v_1s_j\in E(G),2\le j\le{r-1}$.  
Now we claim $v_1w_1\in E(G)$.  Otherwise, by Claim A, $v_2w_1$ or $v_3w_1$ is in $E(G)$.  Observe that either $w_1x_2\in E(G)$ or $w_1x_3\in E(G)$, otherwise $N^I(w_3)\cup \{w_3,w_1\}$ induces a $K_{1,4}$.  Similarly, either $w_1x_4\in E(G)$ or $w_1x_5\in E(G)$.  Now $\{w_1\}\cup N^I(w_1)$ induces a $K_{1,4}$.  Using similar argument, we establish $v_1w_6\in E(G)$.  Since the clique is maximal, there exists a vertex $w'\in K$ such that $v_1w'\notin E(G)$.  We see the following cases.  \\{\bf {\em Case} 1:} $w'=s_1$. By Claim A, $s_1v_2\in E(G)$ or $s_1v_3\in E(G)$.  Further, $s_1x_2\in E(G)$, otherwise $N^I(w_2)\cup \{w_2,s_1\}$ induces a $K_{1,4}$ or $N^I(w_3)\cup \{w_3,s_1\}$ induces a $K_{1,4}$.  Similarly, $s_1x_4\in E(G)$.  Now $\{s_1\}\cup N^I(s_1)$ induces a $K_{1,4}$.  Similarly, we could establish a contradiction if $w'=s_r$. \\ {\bf {\em Case}  2:} $w'\notin P_b$.  By Claim A, $w'v_2\in E(G)$ or $w'v_3\in E(G)$.  Also due to the similar reasoning for $s_1$, $w'x_2,w'x_4\in E(G)$ .  Now, either $t_1w'\in E(G)$ or $t_2w'\in E(G)$, otherwise $N^I(s_2)\cup \{s_2,w'\}$ induces a $K_{1,4}$.  Finally, $\{w'\}\cup N^I(w')$ induces a $K_{1,4}$, a contradiction.  Therefore, $P_b$ does not exist.  This completes the case analysis and the proof.  $\hfill\qed$
\end{proof}
%
\begin{cl}\label{HC11} 
 If there exists $P_a\in\mathbb{P}_{11}$, then $G$ has a Hamiltonian cycle.
\end{cl}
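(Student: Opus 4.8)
The plan is to pin down the exact shape of $\mathbb{C}$ forced by the presence of the $11$-path, then reduce to exhibiting a spanning family of clique-ended paths (which Theorem~\ref{deltale2} closes into a Hamiltonian cycle), and finally build that family, the engine being that the vertex $v_1$ of Claim~\ref{xunivge11} is ``near-universal'' on the clique vertices of $P_a$. Write $P_a=P(w_1,\ldots,w_6;x_1,\ldots,x_5)$. First I would show $\mathbb{P}_{11}=\{P_a\}$, so that, with Claims~\ref{no13} and \ref{no115}, $\mathbb{C}=\{P_a\}\cup(\mathbb{C}\cap\mathbb{P}_3)\cup(\mathbb{C}\cap\mathbb{P}_1)$: a second $11$-path $P_b$ would, by Claims~\ref{xunivge11},~\ref{xunivtwopaths} and the endpoint argument inside the proof of Claim~\ref{no115}, admit a common $v_1\in N^I(v)$ adjacent to all twelve clique vertices of $P_a$ and $P_b$; choosing $w'\in K\setminus N(v_1)$ (which exists as $K$ is a maximum clique), Claim~A gives $w'$ an $I$-neighbour in $\{v_2,v_3\}$, and $K_{1,4}$-freeness at the internal vertices of $P_a$ and of $P_b$ forces two more $I$-neighbours on each, an induced $K_{1,5}$. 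The same endpoint argument gives $v_1w_i\in E(G)$ for $1\le i\le6$; hence no internal $w_2,\ldots,w_5$ is adjacent to $v_2$ or $v_3$ (else $N^I(w_r)\cup\{w_r,v_2\}$ is a $K_{1,4}$), and, since each endpoint $s$ of a $\mathbb{P}_3$-path is adjacent to the independent vertex of its path (a vertex outside $\{x_2,x_4\}$), $s$ cannot play the role of $w'$ above, so $sv_1\in E(G)$. Finally $v$ lies on no path of $\mathbb{C}$ of size $\ge3$ (it has no neighbour in $I_H$) and $v_2,v_3\notin V(H)$, so $|K|\ge 1+6+2|\mathbb{C}\cap\mathbb{P}_3|=2|I|-9$, using $|\mathbb{C}\cap\mathbb{P}_3|=|I_H|-5=|I|-8$.

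\emph{Reduction.} Adjoin to $\mathbb{C}$ the vertices of $K\setminus V(\mathbb{C})$ as one-vertex paths. By the last paragraph of the proof of Theorem~\ref{deltale2} it suffices to enlarge this family to a family of vertex-disjoint paths, each with both ends in $K$, covering $V(G)$: the clique edges then close it into a Hamiltonian cycle, where two successive endpoints $a,b$ may be joined either by the edge $ab$ or by a detour $a\,u\,b$ through an independent vertex $u$ with $ua,ub\in E(G)$. So it remains only to absorb $v_1$ into a path and to realise $v_2$ and $v_3$ as such detours.

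\emph{Construction.} Choose $c_2\in N(v_2)\setminus\{v\}$ and $c_3\in N(v_3)\setminus\{v\}$ with $c_2\ne c_3$: every non-$v$ clique neighbour of $v_2$ or $v_3$ is a path endpoint (as $v_2,v_3$ avoid $w_2,\ldots,w_5$ and $\mathbb{P}_3$-paths have no internal clique vertex), and if $v_2,v_3$ shared a single non-$v$ neighbour $c$ then $v\,v_2\,c\,v_3$ would be an induced $4$-cycle of the bipartite subgraph defining short cycles with $K\setminus\{v,c\}\ne\emptyset$ (since $|K|\ge8$), a short cycle — excluded; if one of $v_2,v_3$ has degree $\ge3$ the choice is free. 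Next absorb $v_1$: pick a path $B\in\mathbb{C}$ with $B\notin\{P_a,\{v\}\}$ having an endpoint $b\in N(v_1)\setminus\{c_2,c_3\}$ — a $\mathbb{P}_3$-path, or, when $\mathbb{C}\cap\mathbb{P}_3=\emptyset$ (forcing $|I|=8$ and then $|K|\ge9$ by a direct check), a one-vertex clique-path; such a $B$ exists, since otherwise $v_1$ would be adjacent to all of $K$, contradicting maximality. Merge $P_a$, $v_1$ and $B$ into a single clique-ended path via the edges $w_6v_1$ and $v_1b$, orienting $P_a$ so that if exactly one of $w_1,w_6$ lies in $\{c_2,c_3\}$ then it is not the endpoint glued to $v_1$. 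The path $\{v\}$ survives; arrange all paths cyclically with $\{v\}$ placed between the path ending at $c_2$ and the path ending at $c_3$, fill these two gaps with $c_2\,v_2\,v$ and $v\,v_3\,c_3$, and every other gap with a clique edge. This is a Hamiltonian cycle of $G$.

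\emph{The main obstacle.} The delicate case is $\{c_2,c_3\}=\{w_1,w_6\}$, forced exactly when $N(v_2)=\{v,w_1\}$ and $N(v_3)=\{v,w_6\}$ (up to swap), since then no orientation of $P_a$ keeps both $w_1,w_6$ free. Here any Hamiltonian cycle must contain the path $w_1\,v_2\,v\,v_3\,w_6$; if in addition every $x_i$ has degree $2$ this closes, with $P_a$, into a cycle on $\{v,v_2,v_3,w_1,\ldots,w_6,x_1,\ldots,x_5\}$, which — using $N^I(w_1)=\{x_1,v_1,v_2\}$, $N^I(w_r)=\{x_{r-1},x_r,v_1\}$ for $2\le r\le5$, $N^I(w_6)=\{x_5,v_1,v_3\}$ — is an induced cycle of the short-cycle subgraph avoiding the non-empty set $K\setminus\{w_1,\ldots,w_6,v\}$, hence a short cycle, impossible; while if some $x_k$ has degree $\ge3$ its extra clique neighbour opens $P_a$ at $x_k$, furnishing the second attachment point needed to thread $v_1$ and the remaining paths without using both ends of $P_a$. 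This short-cycle alternative on $P_a$ is the crux; the remainder is bookkeeping. $\hfill\qed$
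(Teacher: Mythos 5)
There is a genuine gap at the single most important step: absorbing $v_1$ into the path system. You assert that there is a path $B\notin\{P_a,\{v\}\}$ with an endpoint $b\in N(v_1)\setminus\{c_2,c_3\}$, ``since otherwise $v_1$ would be adjacent to all of $K$, contradicting maximality.'' That inference is false. Maximality of $K$ only gives you \emph{some} $w'\in K$ with $v_1w'\notin E(G)$; it does not give $v_1$ a neighbour that is an endpoint of a path other than $P_a$. Concretely, take $|I|=8$, $K=\{v,w_1,\ldots,w_6,z,z'\}$, $N(v_1)\cap K=\{v,w_1,\ldots,w_6\}$, $N^I(z)=\{v_2,x_2,x_4\}$, $N^I(z')=\{v_3,x_2,x_4\}$, $N^I(w_1)=\{x_1,v_1,x_2\}$, $N^I(w_6)=\{x_5,v_1,x_4\}$, $N^I(w_r)=\{x_{r-1},x_r,v_1\}$ for $2\le r\le5$. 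This is a $2$-connected $K_{1,4}$-free split graph with no short cycles, $\mathbb{P}_{11}=\{P_a\}$ and $\mathbb{C}\cap\mathbb{P}_3=\emptyset$: Claim A is satisfied by $z,z'$ through $v_2,v_3$, and every $K_{1,4}$ test at $z,z'$ is dodged because they meet $N^I(w_1)$ and $N^I(w_6)$ at $x_2,x_4$ rather than at $v_1$. Here no admissible $b$ exists, $v$ is reserved for the $c_2\,v_2\,v\,v_3\,c_3$ detour, and the only free ends of $P_a$ are $w_1,w_6$; inserting $v_1$ between them closes $P_a\cup\{v_1\}$ into a cycle and strands $z,z'$. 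The graph \emph{is} Hamiltonian (e.g.\ $z,v_2,v,v_3,z',x_4,w_4,x_3,w_3,x_2,w_2,x_1,w_1,v_1,w_5,x_5,w_6,z$), but reaching that cycle requires breaking $P_a$ open at the degree-$\ge3$ vertices $x_2,x_4$, which your construction never contemplates. This is exactly the point where the paper's proof does its real work (producing a $w''$ with $w''v_3,w''v_1\in E(G)$), so the omission is not cosmetic.

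Two smaller problems in the same construction: your cyclic arrangement silently assumes that the paths ending at $c_2$ and at $c_3$ are distinct (if $c_2,c_3$ are the two ends of one path $Q$, placing $\{v\}$ between them closes $Q\cup\{v_2,v,v_3\}$ into a cycle and excludes everything else), and your ``main obstacle'' paragraph ends with ``furnishing the second attachment point needed to thread $v_1$ and the remaining paths,'' which is a promissory note rather than a construction. Both of these degenerate situations are in fact vacuous for a reason you never invoke: whenever $N(v_2)\cup N(v_3)$ is confined to $\{v\}\cup\{w_1,w_6\}$ or to the ends of a single path, Claim A forces every remaining clique vertex to be adjacent to $v_1$, and since $v_1$ is already adjacent to $v$, $w_1,\ldots,w_6$ and all $\mathbb{P}_3$-endpoints, $K\cup\{v_1\}$ would be a larger clique. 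Adding that observation would dispose of the edge cases cleanly, but it does not repair the main gap above.
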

\begin{proof}
Let $P_a=(w_1,\ldots,w_6;x_1,\ldots,x_5)$.  From Claim \ref{xunivge11}, there exists a vertex say $v_1\in N^I_G(v)$, such that $v_1w_i\in E(G)$, $2\le i\le 5$.  From the proof of the previous claim, $v_1w_1,v_1w_6\in E(G)$.  Since the clique is maximal, there exists $w'\in K$, such that $w'v_1\notin E(G)$.  By Claim A, $w'v_2\in E(G)$ or $w'v_3\in E(G)$.  Without loss of generality, let $w'v_2\in E(G)$.   We claim $w'x_2\in E(G)$ and $w'x_4\in E(G)$, otherwise for some $2\le i\le 5$, $N^I(w_i)\cup \{w_i,w'\}$ induces a $K_{1,4}$.  One among $v_2,x_2,x_4$ is adjacent to $w_1$, otherwise $N^I(w')\cup \{w_1,w'\}$ induces a $K_{1,4}$.  Similar argument holds good with respect to the vertex $w_6$.  \bb{Note that every $t\in\{v,w',w_1,\ldots,w_6\}$ has three neighbors, none of which is $v_3$.  And since $d^I(t)\le3$, $v_3$ cannot be adjacent to any $t\in\{v,w',w_1,\ldots,w_6\}$ But $G$ is 2-connected and therefore $v_3$ has at least 2 neighbors in $K$.} 
\bb{Let $w''\in K$, $w''\notin \{v,w',w_1,\ldots,w_6\}$ and $w''v_3\in E(G)$.}  Now we claim $w''v_1\in E(G)$.  If not, for some $1\le j\le 6$, $N^I(w_j)\cup \{w_j,w''\}$ induces a $K_{1,4}$.  Finally $(w_1\overrightarrow{P_a}w_{6},v_1,w'',v_3,v,v_2,w')$ is a $(w_1,w')$ path containing all the vertices of $P_a\cup \{v,w',w''\}\cup N^I(v)$, which could be easily extended to a Hamiltonian cycle in $G$ using clique edges to join other vertex disjoint paths. $\hfill\qed$
\end{proof}
\begin{cl}\label{no95} 
 If there exists $P_a\in\mathbb{P}_9$, then $\mathbb{P}_j=\emptyset,j\ne9,j\ge5$ \bb{and $|\mathbb{P}_9|\le 1$}.
\end{cl}
\begin{proof}
Assume for a contradiction that there exists such a path $P_b\in \mathbb{P}_j,j\ge5$.  Let $P_a=(w_1,\ldots,w_5;$ $x_1,\ldots,x_4)$ and $P_b=(s_1,\ldots,s_r;t_1,\ldots,t_{r-1})$, $r\ge3$.  From Claim \ref{xunivtwopaths}, there exists $v_1\in N^I_G(v)$ such that $v_1w_2,v_1w_3,v_1w_4,$ $v_1s_i\in E(G)$, $2\le i \le r-1$.  Now we claim that $w_1v_1\in E(G)$.  Suppose \bb{not. Then,} by Claim A either $w_1v_2\in E(G)$ or $w_1v_3\in E(G)$. 
Observe that $w_1x_3\in E(G)$, otherwise $N^I(w_3)\cup \{w_3,w_1\}$ or $N^I(w_4)\cup \{w_4,w_1\}$ induces a $K_{1,4}$.  Similarly, either $w_1t_1\in E(G)$ or $w_1t_2\in E(G)$  otherwise $N^I(s_2)\cup \{s_2,w_1\}$ induces a $K_{1,4}$.  It follows that $\{w_1\}\cup N^I(w_1)$ induces a $K_{1,4}$.  This contradicts the assumption that $w_1v_1\nin E(G)$, and thus $w_1v_1\in E(G)$.  Similar argument holds good with other end vertices of paths $P_a,P_b$, and hence, $w_5v_1,s_1v_1,s_rv_1\in E(G)$.  
Finally, we claim that $w'v_1\in E(G)$ for every vertex $w'\in K$, where $w'\notin\{w_1,\ldots,w_5,s_1,\ldots,s_r\}$.  If not, let $w'v_1\notin E(G)$.  By Claim A, either $v_2w'$ or $v_3w'$ is in $E(G)$.  Further, there exists at least two vertices in $x_1,\ldots,x_5$ adjacent to $w'$, otherwise, for some $2\le i\le4$, $N^I_G(w_i)\cup \{w_i,w'\}$ induces a $K_{1,4}$.  Now either $w't_1\in E(G)$ or $w't_2\in E(G)$, if not, $N^I(s_2)\cup \{s_2,w'\}$ induces a $K_{1,4}$.  Therefore, $\{w'\}\cup N^I(w')$ induces a $K_{1,4}$, a contradiction. 
From the above, we conclude that \bb{$\{v_1\}\cup K$} is a larger clique, which finally contradicts the existence of $P_b$.  Thus if $P_a\in \mathbb{C}$, then there does not exists such a path $P_b$.  This completes the proof. $\hfill\qed$  
\end{proof}
\noindent
In the following claims to show the existence of Hamiltonian cycle, we \bb{present} a  constructive approach in which we produces a $(x,y)$-path where $x,y\in K$.  The path is obtained by joining some paths in $\mathbb{C}$ using the vertices in $N^I(v)$.  Therefore such a \emph{desired path} is sufficient to show that $G$ has a Hamiltonian cycle, which is in turn obtained by joining all such vertex disjoint paths using clique edges.
%
\begin{cl}\label{HC931}
If there exists $P_a\in\mathbb{P}_9$ and $G$ has no short cycles, then $G$ has a Hamiltonian cycle.
\end{cl}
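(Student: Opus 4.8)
The plan is to follow the template established in the proof of Claim~\ref{HC11}: starting from the length-$9$ path $P_a$ I will build a single \emph{desired path} --- one whose two end vertices lie in $K$ --- that contains $P_a$ together with $v$ and all three vertices of $N^I(v)=\{v_1,v_2,v_3\}$ (plus one or two ``connector'' vertices of $K$), and then complete a Hamiltonian cycle by splicing this path with the remaining members of $\mathbb{C}$ using clique edges. This is enough: in $H=G-N^I(v)$ the vertex $v$ has no neighbour in $I$, so it forms a trivial path of $\mathbb{C}$; hence the only vertices of $G$ not already covered by paths of $\mathbb{C}$ with both ends in $K$ are $v_1,v_2,v_3$, and once those are absorbed into the desired path every vertex of $G$ lies on exactly one vertex-disjoint $K$-ended path, so the clique edges close up the cycle.

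Write $P_a=P(w_1,\ldots,w_5;x_1,\ldots,x_4)$. By Claim~\ref{no95} every path of $\mathbb{C}$ other than a member of $\mathbb{P}_9$ lies in $\mathbb{P}_1\cup\mathbb{P}_3$; moreover, running the argument of Claim~\ref{no95} with two members of $\mathbb{P}_9$ forces (via Claim~\ref{xunivtwopaths} and the maximality of $K$) some $v_i\in N^I(v)$ to be adjacent to all of $K$, which is impossible, so $\mathbb{P}_9=\{P_a\}$. Since $|I|\ge 8$ while $P_a$ and $N^I(v)$ use only $7$ vertices of $I$, there is at least one more path of $\mathbb{C}$ carrying a vertex of $I$, necessarily a member of $\mathbb{P}_3$. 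The first real step is to pin down a $v_1\in N^I(v)$ adjacent to every $w_k$, $1\le k\le 5$: pigeonholing Claim~A over $N^I(v)$ gives a $v_1$ adjacent to two of the internal vertices $w_2,w_3,w_4$, and then the $K_{1,4}$-freeness arguments of Claims~\ref{xunivge11} and~\ref{no95} --- now drawing on the on-path vertices $x_1,\ldots,x_4$, on the extra $\mathbb{P}_3$-path, and on the absence of short cycles (which forbids the small cycles through consecutive path vertices and low-degree $I$-vertices that would otherwise weaken the degree counts) --- are used to extend this adjacency to the remaining $w_k$, in particular to the end vertices $w_1$ and $w_5$.

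Next, because $K$ is a maximum clique there is $w'\in K$ with $v_1w'\notin E(G)$; by Claim~A, $w'$ is adjacent to $v_2$ or $v_3$, say $w'v_2\in E(G)$, and by $K_{1,4}$-freeness applied with the $w_k$ it is adjacent to at least two of $x_1,\ldots,x_4$. As in the proof of Claim~\ref{HC11}, pick a vertex $w''\in K$ distinct from all vertices already named with $w''v_3\in E(G)$ (possible since $|K|\ge|I|\ge 8$), and check that $w''v_1\in E(G)$: otherwise, for a suitable internal $w_j$ (one with $d^I(w_j)=3$ and $v_1\in N^I(w_j)$), $N^I(w_j)\cup\{w_j,w''\}$ induces a $K_{1,4}$. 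Then
\[(\,w_1\overrightarrow{P_a}w_5,\; v_1,\; w'',\; v_3,\; v,\; v_2,\; w'\,)\]
is a $(w_1,w')$-path covering $P_a\cup\{v,w',w''\}\cup N^I(v)$ --- the edges $v_3v$ and $vv_2$ are present since $v_2,v_3\in N^I(v)$, and the other consecutive pairs are exactly the adjacencies secured above --- and splicing this desired path with all the remaining, vertex-disjoint, $K$-ended paths of $\mathbb{C}$ through clique edges yields the Hamiltonian cycle.

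I expect the genuine difficulty to be the adjacency bookkeeping behind the choice of $v_1$ and of the connector $w''$. A path of length $9$ supplies only three internal $K$-vertices and four on-path $I$-vertices, one fewer of each than the length-$11$ path of Claims~\ref{no115}--\ref{HC11}, so the blunt ``$|W|\ge 4$'' counting of Claim~\ref{xunivge11} no longer closes inside $P_a$ on its own; one must genuinely invoke the extra $\mathbb{P}_3$-path guaranteed by $|I|\ge 8$ and, more delicately, use ``no short cycles'' to eliminate the local cycles among consecutive path vertices and low-degree $I$-vertices that would otherwise leave room for $w_1$, $w_5$, $w'$ or $w''$ to be adjacent to enough $I$-vertices to escape the $K_{1,4}$ contradictions. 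Everything after $v_1$, $w'$, $w''$ are fixed --- assembling the explicit path and joining the leftover $\mathbb{P}_1$ and $\mathbb{P}_3$ paths by clique edges --- is routine.
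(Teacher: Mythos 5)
Your proposal reduces everything to a single template path $(w_1\overrightarrow{P_a}w_5,\,v_1,\,w'',\,v_3,\,v,\,v_2,\,w')$, which presupposes that some $v_1\in N^I(v)$ is adjacent to all of $w_1,\ldots,w_5$ (at the very least to $w_5$, and to a connector $w''$). That is the step you defer to ``the $K_{1,4}$-freeness arguments of Claims \ref{xunivge11} and \ref{no95}, the extra $\mathbb{P}_3$-path, and no short cycles,'' and it is precisely the step that fails. Claim \ref{xunivge11} needs $j\ge 6$; for a $\mathbb{P}_9$ path the internal set is only $\{w_2,w_3,w_4\}$, and the non-consecutive-pair argument yields a common neighbour of $w_2$ and $w_4$ only, leaving $w_3$ and the endpoints $w_1,w_5$ uncontrolled. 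The argument in Claim \ref{no95} that upgrades this to $v_1w_1,v_1w_5\in E(G)$ (and then to all of $K$) relies essentially on an \emph{internal} clique vertex $s_2$ of a companion path of length at least $5$, whose two path-neighbours $t_1,t_2$ force a fourth $I$-neighbour on any vertex missing $v_1$; a $\mathbb{P}_3$ companion $(s_1,s_2;t_1)$ has no such internal vertex, so the degree count no longer closes. Concretely, $w_1$ may have $N^I(w_1)=\{x_1,x_3,v_2\}$ with $v_1w_1\notin E(G)$ and no $K_{1,4}$ arises; likewise $v_1w_3$ or $v_1w_5$ may fail. This is not a presentational gap: the paper's proof of this claim is organised as a case analysis over exactly which clique vertex misses $v_1$ ($w_1$ or $w_5$; $w_3$; or a vertex outside $P_a$), and in the first two families the Hamiltonian path cannot traverse $P_a$ in order --- it must be re-routed through chords such as $w_1x_3$, $w_5x_2$, $s_1x_2$, with the no-short-cycle hypothesis invoked in several subcases (e.g.\ Case 1.4) to manufacture an extra attachment vertex for a set of degree-two vertices. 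Your template covers, in effect, only the paper's easiest case (Case 3, where $v_1$ is adjacent to all of $P_a$ and the missing adjacency lies outside $P_a$), so the bulk of the proof is absent.

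A secondary, smaller point: even where your template applies, the existence of $w''$ adjacent to both $v_3$ and $v_1$ and distinct from all named vertices needs the degree bookkeeping ($d^I(t)=3$ for the already-named clique vertices) that the paper carries out in Claim \ref{HC11}; ``$|K|\ge|I|\ge 8$'' alone does not give it. The surrounding scaffolding in your write-up (only $\mathbb{P}_1\cup\mathbb{P}_3$ companions survive, splicing $K$-ended paths by clique edges) matches the paper and is fine.
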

\begin{proof}
 Let $P_a=(w_1,\ldots,w_5;x_1,\ldots,x_4)$.  Since $|I|\geq 8$ and by Claim \ref{no95}, there exists at least one more path $P_b\in\mathbb{P}_3$ such that $P_b=(s_1,s_2;t_1)$ 
There exists a vertex in $N^I(v)=\{v_1,v_2,v_3\}$ adjacent to $w_2$, say $v_1w_2\in E(G)$.  Note that $v_1w_4\in E(G)$, otherwise $\{w_2,w_4\}\cup N^I(w_2)$ \bb{has an induced} $K_{1,4}$.  Since the clique is maximal, there exists a non-adjacency for $v_1$ in $K$, and based on \bb{this} non-adjacency, we see the following cases.  
\begin{enumerate}[  ]
\item \emph{\bf {\em Case} 1:}  $v_1w_1\nin E(G)$ or $v_1w_5\nin E(G)$.  Without loss of generality, we shall assume $v_1w_1\notin E(G)$.  Note that one of $v_2,v_3$ is adjacent to  $w_1$, say $v_2w_1\in E(G)$.  
Note that $w_1x_3\in E(G)$ or $w_1x_4\in E(G)$, if not \NC{w_4}{w_1} has an induced \K14.  Note that $w_3v_1\in E(G)$ or $w_3v_2\in E(G)$ or $w_3v_3\in E(G)$. \bb{If $w_1x_4, w_3v_1\in E(G)$, then \NC{w_3}{w_1} has an induced \K14. Similarly, if $w_1x_4, w_3v_3\in E(G)$, then \NC{w_3}{w_1} has an induced \K14.}
Therefore, if $w_1x_4\in E(G)$, then $w_3v_1\nin E(G)$ and $w_3v_3\nin E(G)$.  There exists four possibilities as follows.
\begin{enumerate}[  ]
\item \emph{\bf {\em Case} 1.1:} $w_1x_3,w_3v_1\in E(G)$.  We shall see the adjacency of the vertices $s_1$ and $s_2$ with respect to $\{v_1,v_2,v_3\}$.  We observe that either $v_1s_i,v_2s_i\in E(G)$ or  $v_1s_i,x_1s_i\in E(G)$ or $v_1s_i,x_3s_i\in E(G)$, $i\in \{1,2\}$, \bb{otherwise for some $z\in \{w_1,\ldots, w_4,v\}$, $\{z,s_i\}\cup N^I(z)$ has an induced $K_{1,4}$}.  Clearly, $v_1s_i\in E(G)$.  If $w_5v_1\in E(G)$, then $w_5v_2$ or $w_5x_1$ or $w_5x_3$ is in $E(G)$, otherwise $N^I(w_1)\cup \{w_1,w_5\}$ induces a $K_{1,4}$.  Further, note that $d^I(w_j)=d^I(s_i)=3$, $1\le j\le 5,i\in\{1,2\}$, and therefore there exists a vertex $w'\in K$, $w'\notin \{w_1,\ldots,w_5,s_1,s_2\}$ such that $w'v_3\in E(G)$.  Finally, the path $P_1=(w',v_3,v,v_2,w_1\overrightarrow{P_a}w_5,v_1,s_1,t_1,s_2)$ is a desired path. \bb{Now we shall explore the case in which $w_5v_2\in E(G)$. If $w_5v_2\in E(G)$, then $w_5x_2\in E(G)$, otherwise either \NC{w_2}{w_5} or \NC{w_3}{w_5} has an induced \K14.  Further, since $w_5v_2\in E(G)$, note that $v_1s_i,v_2s_i\in E(G)$, $i\in \{1,2\}$.}  Now $P_2=(w',v_3,v,v_2,s_1,t_1,s_2,v_1,w_4,x_4,w_5,x_2,w_3,x_3,w_1,x_1,w_2)$ is a desired path.  Since the paths \bb{$P_1$ and $P_2$ have} its end vertices in $K$, using clique edges we obtain a Hamiltonian cycle in $G$. \bb{If $w_5v_3\in E(G)$, then note that for some $i$, $1\le i\le 3$, \NC{w_i}{w_5} has an induced \K14.}  
\item \emph{\bf {\em Case} 1.2:}  $w_1x_3,w_3v_2\in E(G)$.
Similar to the previous case, either $v_1s_i,v_2s_i\in E(G)$ or  $v_1s_i,x_3s_i\in E(G)$, $i\in \{1,2\}$ and thus $v_1s_i\in E(G)$.  Further, similar to \bb{the} previous case, there exists $w'\in K$ such that $w'v_3\in E(G)$.  Now if $w_5v_1\in E(G)$, then the path $P_1$ as obtained in the previous case is a desired path.  If $w_5v_1\nin E(G)$, then either $w_5v_2,w_5x_1\in E(G)$ or $w_5v_2,w_5x_2\in E(G)$.  Thus $(w',v_3,v,v_2,w_1,x_1,w_5\overleftarrow{P_a}w_2,v_1,s_1,t_1,s_2)$ or   
$(w',v_3,v,v_2,w_3,x_2,w_5,x_4,w_4,x_3,w_1,x_1,w_2,v_1,s_1,t_1,s_2)$ is a desired path. \bb{Note $w_5v_3\notin E(G)$ as in the previous case.}
\item \emph{\bf {\em Case} 1.3:}  $w_1x_4,w_3v_2\in E(G)$.
Note that $v_1s_i,v_2s_i\in E(G)$, $i\in \{1,2\}$.  Further, similar to the previous case, there exists $w'\in K$ such that $w'v_3\in E(G)$.  Now if $w_5v_1\in E(G)$, then the path $P_1$ as obtained in Case 1.1 is a desired path.  \bb{Note $w_5v_3\notin E(G)$ as in the Case 1.1}.  Therefore, if $w_5v_1\nin E(G)$, then $w_5v_2\in E(G)$.  We obtain $(w',v_3,v,v_2,w_5,x_4,w_1\overrightarrow{P_a}w_4,v_1,s_1,t_1,s_2)$ 
as a desired path. 
\item \emph{\bf {\em Case} 1.4:}  $w_1x_3,w_3v_3\in E(G)$.
Note that $v_1s_i,x_3s_i\in E(G)$, $i\in \{1,2\}$, \bb{otherwise for some $y'\in \{w_1\ldots w_5,v\}$, \NC{y'}{s_i} has an induced \K14}.  If $w_5$ is adjacent to $v_1,x_3$, then consider $S=\{v_2,v_3,x_1,x_2\}$.  Further, for every $z'\in S$, $d^I(z')=2$, and $S\cup N(S)$ has a short cycle.  Since $G$ has no short cycles, there exists a vertex $w'\in K$ such that $w'$ is adjacent to some vertices in $S$.  In this case one of the following is a desired path.\\
$(w',v_2,v,v_3,w_3,x_2,w_2,x_1,w_1,x_3,w_4,x_4,w_5,v_1,s_1,t_1,s_2)$ \\
$(w',v_3,v,v_2,w_1\overrightarrow{P_a}w_5,v_1,s_1,t_1,s_2)$ \\
$(w',x_1\overrightarrow{P_a}w_3,v_3,v,v_2,w_1,x_3\overrightarrow{P_a}w_5,v_1,s_1,t_1,s_2)$ \\ 
$(w',x_2,w_2,x_1,w_1,v_2,v,v_3,w_3\overrightarrow{P_a}w_5,v_1,s_1,t_1,s_2)$ \\  
Now we see the case in which $w_5$ is adjacent to $v_2$ or $v_3$.  
\bb{If $\mathbb{C}=P_a\cup P_b$, then} note that \\ $C_1=(s_2,t_1,s_1,v_1,w_4,x_4,w_5,v_2,v,v_3,w_3,x_2,w_2,x_1,w_1,x_3,s_2)$ or \\
$C_2=(s_2,t_1,s_1,v_1,w_4,x_4,w_5,v_3,v,v_2,w_1\overrightarrow{P_a}x_3,s_2)$  are possible Hamiltonian cycles. \bb{If there exists $P_c\in \mathbb{C}$ other than $P_a$ and $P_b$, then depending on the adjacency of the end vertex of $P_c$ with $N^I(v)$, we obtain the following desired paths. One among $(\overrightarrow{P_c},v_1\overrightarrow{C_1}s_1)$, $(\overrightarrow{P_c},v_2\overrightarrow{C_1}w_5)$, $(\overrightarrow{P_c},v_3\overrightarrow{C_1}v)$, $(\overrightarrow{P_c},v_1\overrightarrow{C_2}s_1)$, $(\overrightarrow{P_c},v_2\overrightarrow{C_2}v)$, $(\overrightarrow{P_c},v_3\overrightarrow{C_2}w_5)$,  is a desired path.} 
\end{enumerate} 
\item  \emph{\bf {\em Case} 2:}  $v_1w_3\notin E(G)$. 
In this case we shall assume that $v_1w_i\in E(G)$, $i\in \{1,2,4,5\}$.  
Note that $w_3v_2\in E(G)$ or $w_3v_3\in E(G)$, say $w_3v_2\in E(G)$.  Clearly, $s_iv_1,s_iv_2\in E(G)$ or $s_iv_1,s_ix_2\in E(G)$  or $s_iv_1,s_ix_3\in E(G)$, $i\in \{1,2\}$.  Similarly, $w_1,w_5$ are adjacent to one of $v_2,x_2,x_3$.  Therefore there exists $w'\in K$ such that $w'v_3\in E(G)$.  We see the following cases depending on the adjacency of $s_i,w_1,w_5$ with $N^I(w_3)$.  
\begin{enumerate}[  ]
\item \emph{\bf {\em Case} 2.1:} $w_1v_2\in E(G)$ or $w_5v_2\in E(G)$, \bb{without loss of generality,} let $w_1v_2\in E(G)$. \\
$(w',v_3,v,v_2,w_1\overrightarrow{P_a}w_5,v_1,\overrightarrow{P_b})$ is a desired path.
\item \emph{\bf {\em Case} 2.2:} $w_1x_3\in E(G)$ or $w_5x_2\in E(G)$.
$(w',v_3,v,v_2,w_3,x_2\overleftarrow{P_a}w_1,x_3\overrightarrow{P_a}w_5,v_1,\overrightarrow{P_b})$ or \\
$(w',v_3,v,v_2,w_3,x_3\overrightarrow{P_a}w_5,x_2\overleftarrow{P_a}w_1,v_1,\overrightarrow{P_b})$ is a desired path.
\item \emph{\bf {\em Case} 2.3:} \bb{$w_1x_2,w_5x_3\in E(G)$}.
If $s_iv_2\in E(G)$, say $s_1v_2\in E(G)$, then $(w',v_3,v,v_2,s_1,t_1,s_2,v_1,w_1\overrightarrow{P_a}w_5)$ is a desired path.
If $s_ix_2\in E(G)$, say $s_1x_2\in E(G)$, then $(w',v_3,v,v_2,w_3\overrightarrow{P_a}w_5,v_1,w_1\overrightarrow{P_a}x_2,s_1,t_1,s_2)$ is a desired path.
If $s_ix_3\in E(G)$, say $s_1x_3\in E(G)$, then $(w',v_3,v,v_2,w_3\overleftarrow{P_a}w_1,v_1,w_5\overleftarrow{P_a}x_3,s_1,t_1,s_2)$ is a desired path.
\end{enumerate} 
\item \emph{\bf {\em Case} 3:}  $v_1w'\notin E(G)$, $w'\nin P_a$.
In this case \bb{note} that $v_1w_i\in E(G)$, $1\le i\le 5$.  
Note that in this case $w'\nin P_b$.  Suppose $w'\in P_b$, say $v_1s_1\nin E(G)$, then $s_1$ is adjacent to at least two vertices in $x_1,\ldots,x_4$.  Further, either $s_1v_2\in E(G)$ or $s_1v_3\in E(G)$.  Thus \NV{s_1} induces a \K14, a contradiction.  Therefore $w'\nin P_b$.  It follows that $w'$ is an end vertex of some path $P_c\in \mathbb{P}_1$.  Note that $w'v_2\in E(G)$ or $w'v_3\in E(G)$, say $w'v_2\in E(G)$. 
If $v_3$ is adjacent to $P_a$, then $(\overrightarrow{P_b},v_1,\overrightarrow{P_a},v_3,v,v_2,w'\overrightarrow{P_c})$ is a desired path.  If $v_3$ is adjacent to $P_d\nin \{P_a,P_b,P_c\}$, then $(\overrightarrow{P_b},v_1,\overrightarrow{P_a},\overrightarrow{P_d},v_3,v,v_2,w'\overrightarrow{P_c})$ is a desired path.  In all of the above cases, the end vertices of the desired paths are in $K$, and thus using clique edges we get a Hamiltonian cycle in $G$.
\end{enumerate}
This completes the case analysis and a proof. $\hfill\qed$
\end{proof}  
\begin{cl}\label{no775} 
  $|\mathbb{P}_7|\le2$.  Further, if $|\mathbb{P}_7|=2$, then $\mathbb{P}_5=\emptyset$. 
\end{cl}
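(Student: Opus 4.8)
I would prove both parts simultaneously by establishing the single fact that $\mathbb{C}$ contains at most two paths on five or more vertices. This suffices: $|\mathbb{P}_7|\ge 3$ would furnish three such paths, and $|\mathbb{P}_7|=2$ together with $\mathbb{P}_5\ne\emptyset$ would furnish two of them from $\mathbb{P}_7$ and a third from $\mathbb{P}_5$, so the claim is precisely the negation of the existence of three such paths. Accordingly, assume for a contradiction that $\mathbb{C}$ has three distinct paths $P_a=P(w_1,\dots,w_i;x_1,\dots,x_{i-1})$, $P_b=P(s_1,\dots,s_j;t_1,\dots,t_{j-1})$ and $P_c=P(y_1,\dots,y_k;z_1,\dots,z_{k-1})$ with $i,j,k\ge 3$; each of them then has at least one interior clique-vertex.

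First I would invoke Corollary \ref{xunivthreepaths} to obtain a common vertex $v_1\in N^I(v)$ adjacent to all interior clique-vertices of $P_a,P_b,P_c$; in particular $v_1w_2,v_1s_2,v_1y_2\in E(G)$. Since $w_2$ is also adjacent to $x_1$ and $x_2$ along $P_a$ and $\Delta^I=3$, this pins down $N^I(w_2)=\{x_1,x_2,v_1\}$, and likewise $N^I(s_2)=\{t_1,t_2,v_1\}$ and $N^I(y_2)=\{z_1,z_2,v_1\}$. Observe that the four $2$-sets $\{x_1,x_2\}$, $\{t_1,t_2\}$, $\{z_1,z_2\}$ and $\{v_2,v_3\}$ are pairwise disjoint, as the paths of $\mathbb{C}$ are vertex-disjoint and lie in $H=G-N^I(v)$.

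The main step is to deduce that $v_1$ is adjacent to \emph{every} vertex of $K$; this contradicts the maximality of $K$, since $K\cup\{v_1\}$ would then be a larger clique. Let $w\in K$ with $v_1w\notin E(G)$. All of $x_1,x_2,v_1,w$ are adjacent to $w_2$, and among them the only pairs that can be edges are $\{x_1,w\}$ and $\{x_2,w\}$ (the three vertices of $N^I(w_2)$ are mutually non-adjacent and $v_1w\notin E(G)$); since $G$ is $K_{1,4}$-free, $w$ must be adjacent to $x_1$ or $x_2$. Running the same argument with $s_2$ and with $y_2$ in place of $w_2$, $w$ has a neighbour in $\{t_1,t_2\}$ and a neighbour in $\{z_1,z_2\}$; and by Claim A every vertex of $K$ has a neighbour in $N^I(v)$, so $w$ has a neighbour in $\{v_2,v_3\}$. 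As the four $2$-sets are pairwise disjoint, this yields $d^I(w)\ge 4$, contradicting $\Delta^I=3$. Hence $v_1w\in E(G)$ for all $w\in K$, the desired contradiction; so $\mathbb{C}$ has at most two paths on five or more vertices, whence $|\mathbb{P}_7|\le 2$ and, if $|\mathbb{P}_7|=2$, $\mathbb{P}_5=\emptyset$.

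This argument is a strengthening of the technique used in Claims \ref{no115} and \ref{no95}, the new ingredient being that pushing $d^I(w)$ above $\Delta^I$ needs \emph{three} paths at once, supplied by Corollary \ref{xunivthreepaths}. I do not anticipate a genuine obstacle; the only bookkeeping to watch is the case where the chosen $w$ is itself a vertex of $P_a\cup P_b\cup P_c$ — for an interior vertex the hypothesis $v_1w\notin E(G)$ already fails by Corollary \ref{xunivthreepaths}, and for an end-vertex the same count goes through verbatim, one of the four forced neighbours merely being an endpoint of a path edge incident to $w$.
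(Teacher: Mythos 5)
Your argument is correct, and it is a genuine (if modest) strengthening of the paper's route. The paper proves the two assertions separately: for $|\mathbb{P}_7|\le 2$ it takes three $\mathbb{P}_7$-paths, finds a common $v_1$ via Claim~\ref{xunivtwopaths}, and derives an induced $K_{1,4}$ at a non-neighbour $w'$ of $v_1$ by pinning down the specific middle vertices $x_2,t_2,r_2$ as neighbours of $w'$; for the second assertion it reruns a similar count on two $\mathbb{P}_7$-paths and a $\mathbb{P}_5$-path and finishes by showing $K\cup\{v_1\}$ is a larger clique. You instead prove the single stronger statement that $\mathbb{C}$ cannot contain three vertex-disjoint paths on five or more vertices, which subsumes both parts at once; the key lemma is Corollary~\ref{xunivthreepaths} (which the paper states but does not actually invoke here), and the endgame is cleaner: any $w\in K$ non-adjacent to $v_1$ picks up one neighbour in each of the pairwise disjoint pairs $\{x_1,x_2\}$, $\{t_1,t_2\}$, $\{z_1,z_2\}$, $\{v_2,v_3\}$, forcing $d^I(w)\ge 4$. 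This avoids the slightly delicate step in the paper where $w'x_2\in E(G)$ is claimed outright (if $w'$ were adjacent to $x_1$ and $x_3$ but not $x_2$, neither $N^I(w_2)\cup\{w_2,w'\}$ nor $N^I(w_3)\cup\{w_3,w'\}$ would be an induced $K_{1,4}$, though the final degree count still goes through). Your handling of the boundary case where $w$ is itself an endpoint of one of the three paths is also correct: the path edge supplies the required neighbour and the count is unchanged. What the unified version buys is economy and a reusable bound ($|\mathbb{P}_7|+|\mathbb{P}_9|+|\mathbb{P}_{11}|+|\mathbb{P}_5|\le 2$ in aggregate); what the paper's version buys is nothing extra here, it is simply two instances of the same computation.
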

\begin{proof}
We first show that $|\mathbb{P}_7|\le2$.  Suppose that there exists $P_a,P_b,P_c\in \mathbb{P}_7$ such that  $P_a=(w_1,\ldots,w_4;$ $x_1,\ldots,x_3)$, $P_b=(s_1,\ldots,s_4;t_1,\ldots,t_{3})$, and $P_c=(q_1,\ldots,q_4;r_1,\ldots,r_3)$. 
By Claim \ref{xunivtwopaths}, there exists a vertex $v_1\in N^I(v)$ such that $v_1w_j,v_1s_j,v_1q_j\in E(G), j\in\{2,3\}$.  Since the clique is maximal there exists $w'\in K$ such that $w'v_1\notin E(G)$.  It follows that $w'v_2\in E(G)$ or $w'v_3\in E(G)$.  \bb{Further, $w'$ is adjacent to at least one of $\{x_1,x_2,x_3\}$, and at least one of $\{t_1,t_2,t_3\}$, and at least one of $\{r_1,r_2,r_3\}$, otherwise for some $i\in \{2,3\}$, \NC{w_i}{w'} or \NC{s_i}{w'} or \NC{q_i}{w'} has an induced $K_{1,4}$.}  Now $\{w'\}\cup N^I(w')$ induces a $K_{1,4}$, a contradiction to the existence of three such paths $P_a,P_b,P_c$.
To prove the second half, let $P_a,P_b\in \mathbb{P}_7$.   For a contradiction, assume that $P_d\in \mathbb{P}_5$ such that $P_d=(y_1,y_2,y_3;z_1,z_2)$, 
From Claim \ref{xunivtwopaths}, there exists $v_1\in N^I(v)$ such that $v_1w_i,v_1s_i,v_1y_2\in E(G)$, $i\in \{2,3\}$.  Now we claim that $v_1w_i,v_1s_i,v_1y_j\in E(G)$, $i\in\{1,4\}, j\in\{1,3\}$.  Suppose $v_1w_1\notin E(G)$, then by Claim A, either $v_2w_1\in E(G)$ or $v_3w_1\in E(G)$.  \bb{ Further, $w_1$ is adjacent to at least one of $\{t_1,t_2,t_3\}$, and at least one of $\{z_1,z_2\}$, otherwise for some $i\in \{2,3\}$, \NC{s_i}{w_1} or \NC{y_2}{w_1} has an induced $K_{1,4}$. }  
It follows that $\{w_1\}\cup N^I(w_1)$ induces a $K_{1,4}$, a contradiction to the assumption that $v_1w_1\notin E(G)$.  
Similar arguments hold good for the other edges.
Finally, for any arbitrary vertex $w'\in K$ where $w'\notin \{w_1,\ldots,w_4,s_1,\ldots,s_4,y_1,\ldots,y_3\}$, we claim $v_1w'\in E(G)$.  If not, then by Claim A, either $v_2w'\in E(G)$ or $v_3w'\in E(G)$.  Further, similar to the previous argument for the vertex $w_1$, the same arguments hold good for $w'$, i.e., $w'$ is adjacent to at least one of $\{x_1,x_2,x_3\}$, and at least one of $\{t_1,t_2,t_3\}$, and at least one of $\{z_1,z_2\}$.  Now, $\{w'\}\cup N^I(w')$ induces a $K_{1,4}$, a contradiction.  Thus $v_1w'\in E(G)$.  Therefore, we conclude that $\{v_1\}\cup K$ is a clique of larger size, which is the final contradiction to the existence of $P_d$.  This completes the proof.   $\hfill\qed$
\end{proof}
\begin{cl}\label{no755} 
 If $|\mathbb{P}_7|=1$, then $|\mathbb{P}_5|\le1$. 
\end{cl}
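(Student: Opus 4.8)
The plan is to mimic, with a slightly tighter counting, the structure of the second half of the proof of Claim \ref{no775}. Suppose for contradiction that there is a unique path $P_a=(w_1,\ldots,w_4;x_1,\ldots,x_3)\in\mathbb{P}_7$ together with two paths $P_d=(y_1,y_2,y_3;z_1,z_2)$ and $P_e=(y_1',y_2',y_3';z_1',z_2')$ in $\mathbb{P}_5$. First I would invoke Corollary \ref{xunivthreepaths} applied to $P_a,P_d,P_e$ to obtain a single vertex $v_1\in N^I(v)$ with $v_1w_2,v_1w_3\in E(G)$ and $v_1y_2,v_1y_2'\in E(G)$. Then, exactly as in Claim \ref{no775}, I would push this adjacency to the endpoints of the $\mathbb{P}_7$-path: if $v_1w_1\notin E(G)$ then Claim A forces $v_2w_1$ or $v_3w_1\in E(G)$, while to avoid $\NC{s_2}{w_1}$-type claws using the interior vertices of $P_d$ and $P_e$ one shows $w_1$ must hit an internal vertex of each of $P_d$ and $P_e$ (e.g.\ $w_1z_1\in E(G)$ or $w_1z_2\in E(G)$, and similarly for $z_1',z_2'$), so that $\{w_1\}\cup N^I(w_1)$ already contains a $K_{1,4}$ — contradiction. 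Hence $v_1w_1,v_1w_4\in E(G)$.

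Next I would handle an arbitrary $w'\in K$ with $w'\notin\{w_1,\ldots,w_4,y_1,y_2,y_3,y_1',y_2',y_3'\}$ and argue $v_1w'\in E(G)$: otherwise $w'$ is adjacent to $v_2$ or $v_3$ (Claim A), $w'$ must touch an internal vertex of $P_a$ to avoid $\NC{w_2}{w'}$ or $\NC{w_3}{w'}$, and $w'$ must touch an internal vertex of each of $P_d,P_e$ for the same reason, giving a $K_{1,4}$ centred at $w'$. The one place the argument of Claim \ref{no775} does \emph{not} transfer verbatim is the endpoints $y_1,y_3,y_1',y_3'$ of the two short $\mathbb{P}_5$-paths, since a $\mathbb{P}_5$-path has only a single internal clique vertex $y_2$ (resp.\ $y_2'$); so for these endpoints I would run the same claw-avoidance argument but draw the ``second'' forced neighbour from the internal vertices of \emph{the other} short path and of $P_a$. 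Collecting everything, every vertex of $K$ is adjacent to $v_1$, so $\{v_1\}\cup K$ is a clique strictly larger than $K$, contradicting maximality of $K$; this contradiction rules out the second $\mathbb{P}_5$-path and proves $|\mathbb{P}_5|\le1$.

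The main obstacle I anticipate is precisely the bookkeeping at the short-path endpoints $y_1,y_3$ (and $y_1',y_3'$): with only two short paths of size $5$ plus one path of size $7$ available, I must make sure there really are enough distinct internal $I$-vertices to assemble a $K_{1,4}$ at each such endpoint, and that the vertices used are genuinely distinct from the claw-centre and from each other. This is where the hypotheses $|I|\ge 8$ and $|\mathbb{P}_7|=1$, together with Claims \ref{no13}, \ref{no115}, \ref{no95} and \ref{no775} (which pin down that all remaining paths lie in $\mathbb{P}_1\cup\mathbb{P}_3\cup\mathbb{P}_5\cup\mathbb{P}_7$ and that $\mathbb{P}_7$ has exactly this one path when $\mathbb{P}_5\neq\emptyset$), do the real work; I would cite them to guarantee the inventory of available internal vertices. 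Everything else is the now-standard ``$N^I(\cdot)\cup\{\cdot,\cdot\}$ induces a $K_{1,4}$'' case-pushing already used repeatedly above.
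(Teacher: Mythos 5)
Your proposal is correct and follows essentially the same route as the paper's proof: obtain a common neighbour $v_1\in N^I(v)$ of all interior clique vertices of the three paths via Claim \ref{xunivtwopaths}/Corollary \ref{xunivthreepaths}, push the adjacency to the path endpoints and to every remaining vertex of $K$ by the standard $K_{1,4}$-avoidance argument, and conclude that $\{v_1\}\cup K$ would be a larger clique. Your extra care at the endpoints of the two $\mathbb{P}_5$-paths (using the endpoint's own $I$-neighbour together with forced neighbours on $P_a$ and the other short path to assemble the $K_{1,4}$) is exactly the detail the paper compresses into ``similar to the proof of previous claim,'' and it checks out.
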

\begin{proof}
Let $P_a\in\mathbb{P}_7$, where $P_a=(w_1,\ldots,w_4;x_1,\ldots,x_3)$.  Assume for a contradiction that there exists paths $P_b,P_c\in \mathbb{P}_5$ such that  $P_b=(s_1,s_2,s_3;t_1,t_2)$, and $P_c=(q_1,q_2,q_3;r_1,r_2)$.  
From Claim \ref{xunivtwopaths}, there exists $v_1\in N^I(v)$ such that $v_1w_2,v_1w_3,v_1s_2,v_1q_2\in E(G)$.  Similar to the proof of previous claim we could argue that $v_1w_1,v_1w_4,v_1s_i,v_1q_i\in E(G), i\in \{1,3\}$.  Further, for any arbitrary vertex $w'\in K$ such that $w'\nin \{w_1,\ldots,w_4,s_1,s_2,s_3,q_1,q_2,q_3\}$, we claim $v_1w'\in E(G)$.  Suppose not, then by Claim A, either $v_2w'\in E(G)$ or $v_3w'\in E(G)$.  Further, similar to the proof of previous claim, we could argue that \bb{$w'$ is adjacent to at least one of $\{x_1,x_2,x_3\}$, and at least one of $\{t_1,t_2\}$, and at least one of $\{r_1,r_2\}$.}  Now, $\{w'\}\cup N^I(w')$ induces a $K_{1,4}$, which is a contradiction to the assumption that $v_1w'\nin E(G)$.  Therefore, we conclude that $\{v_1\}\cup K$ is a clique of larger size, which is the final contradiction to the existence of two such paths $P_b,P_c$.  This completes the proof.   $\hfill\qed$
\end{proof}
\begin{cl}\label{HC771} 
If there exist $P_a,P_b\in\mathbb{P}_7$, then $G$ has a Hamiltonian cycle. 
\end{cl}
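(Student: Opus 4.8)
The plan is to mirror the structure of the proofs of Claims \ref{HC11} and \ref{HC931}: exploit the fact that, by Claim \ref{xunivtwopaths}, there is a single vertex $v_1\in N^I(v)$ that is ``almost universal'' to the interior clique-vertices of both $P_a$ and $P_b$, and then use $v,v_2,v_3$ together with a few clique edges to weld $P_a,P_b$, the three vertices $v,v_1,v_2,v_3$ of the star, and one extra clique vertex $w'$ into a single path whose endpoints lie in $K$. Such a \emph{desired path} suffices, since the remaining vertex-disjoint paths of $\mathbb{C}$ (which by Claims \ref{no13}, \ref{no115}, \ref{no95}, \ref{no775} are all in $\mathbb{P}_5\cup\mathbb{P}_3\cup\mathbb{P}_1$) can be strung on using clique edges to close up a Hamiltonian cycle of $G$.

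First I would fix $P_a=(w_1,\ldots,w_4;x_1,\ldots,x_3)$ and $P_b=(s_1,\ldots,s_4;t_1,t_2,t_3)$, and invoke Claim \ref{xunivtwopaths} to obtain $v_1\in N^I(v)$ with $v_1w_2,v_1w_3,v_1s_2,v_1s_3\in E(G)$. Next I would upgrade this to the endpoints: exactly as in the first paragraph of the proof of Claim \ref{no95} (the ``$w_1v_1\in E(G)$'' argument), if $v_1w_1\notin E(G)$ then by Claim A some $v_j$ sees $w_1$, and then $w_1$ must see one of $x_2,x_3$ and one of $t_1,t_2$ (else a $K_{1,4}$ on $N^I(w_2)\cup\{w_2,w_1\}$ or similar), forcing a $K_{1,4}$ on $\{w_1\}\cup N^I(w_1)$; hence $v_1w_1\in E(G)$, and symmetrically $v_1w_4,v_1s_1,v_1s_4\in E(G)$. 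So $v_1$ is adjacent to every clique vertex of $P_a\cup P_b$. Since $K$ is maximal there is $w'\in K$ with $w'v_1\notin E(G)$; by Claim A, WLOG $w'v_2\in E(G)$, and (as in the analogous step of Claim \ref{HC11}) $w'x_2\in E(G)$ and $w't_2\in E(G)$, so $N^I(w')=\{v_2,x_2,t_2\}$ up to the usual caveats. Finally, as in Claim \ref{HC11}, pick $w''\in K$ with $w''v_3\in E(G)$ (exists since every vertex seen so far has $d^I=3$), and argue $w''v_1\in E(G)$ to guarantee $w''\notin P_a\cup P_b\cup\{v,w'\}$.

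With these adjacencies in hand the path is explicit: for instance
$(w_1\overrightarrow{P_a}w_4,\ v_1,\ s_1\overrightarrow{P_b}s_4)$ joins $P_a$ and $P_b$ through $v_1$ into a $(w_1,s_4)$-path; but I actually need the star vertices $v,v_2,v_3$ and $w',w''$ absorbed, so a better arrangement is
$(w',\ v_2,\ v,\ v_3,\ w'',\ v_1,\ w_1\overrightarrow{P_a}w_4)$ extended on the $w'$-side by rerouting through $x_2$ or $t_2$ — concretely one checks that
$(s_4\overleftarrow{P_b}s_1,\ v_1,\ w_4\overleftarrow{P_a}w_1,\ ?)$-type juggling, together with $w'x_2,w't_2\in E(G)$, yields a path covering $P_a\cup P_b\cup\{v,v_1,v_2,v_3,w',w''\}$ with both endpoints in $K$. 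I would simply enumerate the two or three sub-cases (according to whether $w'$ is hooked in via $x_2$ or via $t_2$, and whether $v_3$'s neighbour $w''$ lies on some other path $P_d$ or not, exactly as in Case 3 of Claim \ref{HC931}) and exhibit the desired path in each.

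The main obstacle, as in the earlier claims, is purely the bookkeeping of the sub-cases for where the ``defect'' vertex $w'$ attaches and where $v_3$'s partner sits; the structural content (the almost-universal $v_1$, and that $w'$ has $d^I(w')=3$ forcing its two non-$v_2$ neighbours onto the $x$'s and $t$'s) is already delivered by Claims \ref{xunivtwopaths} and A, so no new idea is needed — only a careful, finite case split producing an explicit Hamiltonian path on the relevant vertex set in each case, after which clique edges finish the cycle.
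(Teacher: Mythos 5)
Your proposal follows essentially the same route as the paper: the almost-universal $v_1$ adjacent to all eight clique vertices of $P_a\cup P_b$ (obtained by the same endpoint-upgrading argument as in Claim~\ref{no95}/\ref{no775}), the defect vertex $w'$ with $N^I(w')=\{v_2,x_2,t_2\}$ up to relabelling, the consequence that every clique vertex of $P_a\cup P_b$ has $d^I=3$, and hence a fresh $w''\in K$ outside $P_a\cup P_b\cup\{w'\}$ seeing the remaining vertex of $N^I(v)$. The only place you leave work undone --- the ``juggling'' and the anticipated sub-cases on whether $w'$ is hooked in via $x_2$ or $t_2$ --- is unnecessary: since $w'$ and $w_1$ both lie in $K$, the clique edge $w'w_1$ finishes the construction in one line, and the paper's desired path is simply $(w'',v_2,v,v_3,w',w_1\overrightarrow{P_a}w_4,v_1,s_1\overrightarrow{P_b}s_4)$, with no rerouting through $x_2$ or $t_2$ and no case split.
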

\begin{proof}
Let $P_a,P_b\in\mathbb{P}_7$ such that  $P_a=(w_1,\ldots,w_4;x_1,\ldots,x_3)$, $P_b=(s_1,\ldots,s_4;t_1,$ $\ldots,t_{3})$.  Similar to the arguments in the proof of Claim \ref{no775}, there exists $v_1\in N^I(v)$ such that $v_1w_i,v_1s_i\in E(G),1\le i\le4$.  Since $K$ is a maximal clique, there exists $w'\in K$ such that $w'v_1\nin E(G)$.  From Claim A, either $w'v_2\in E(G)$ or $w'v_3\in E(G)$.  Without loss of generality, let $w'v_3\in E(G)$.  Note that $w'x_2\in E(G)$, otherwise, either \NC{w_2}{w'} induces a \K14 or \NC{w_3}{w'} induces a \K14.  Similarly, $w't_2\in E(G)$.  Note that the vertices $w_i,s_i, i\in \{1,4\}$ is adjacent to one of the vertices in $\{v_3,t_2,x_2\}$, if not, say $w_1v_3,w_1t_2,w_1x_2\nin E(G)$, then \NC{w'}{w_1} induces a \K14.  Similar arguments hold for $w_2,s_1,s_2$.  It follows that for every $s'\in S=\{w',w_1\ldots,w_4,s_1,\ldots,s_4\}$, $d^I(s')=3$ \bb{and $v_2 \notin N^I(s')$}.  Since $G$ is two connected, there exists $w''\in K\setminus S$ such that $w''v_2\in E(G)$.  Observe that $(w'',v_2,v,v_3,w',w_1\overrightarrow{P_a}w_4,v_1,s_1\overrightarrow{P_b}s_4)$ is a \bb{desired} path containing $N^I(v)$ which could be easily extended to a Hamiltonian cycle in $G$.  $\hfill\qed$ 
\end{proof} 
\begin{cl}\label{HC75}
If there exist $P_a\in\mathbb{P}_7,P_b\in\mathbb{P}_5$ and $G$ has no short cycle, then $G$ has a Hamiltonian cycle.
\end{cl}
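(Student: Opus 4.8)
The plan is to follow the same constructive template used in Claims \ref{HC11}, \ref{HC931}, and \ref{HC771}: produce one or more vertex-disjoint \emph{desired paths} whose end vertices lie in $K$, and then close up to a Hamiltonian cycle using clique edges. Write $P_a=(w_1,\ldots,w_4;x_1,\ldots,x_3)\in\mathbb{P}_7$ and $P_b=(s_1,s_2,s_3;t_1,t_2)\in\mathbb{P}_5$. By Claim \ref{xunivtwopaths} there is $v_1\in N^I(v)$ with $v_1w_i\in E(G)$ for $i\in\{2,3\}$ and $v_1s_2\in E(G)$. The first step is to upgrade this, exactly as in the proof of Claim \ref{no755}: argue that $v_1w_1,v_1w_4,v_1s_1,v_1s_3\in E(G)$ as well. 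Indeed, if (say) $v_1w_1\notin E(G)$, then Claim A forces $w_1$ adjacent to $v_2$ or $v_3$, and the usual $K_{1,4}$-avoidance arguments (looking at $N^I(w_2)\cup\{w_2,w_1\}$, at $N^I(s_2)\cup\{s_2,w_1\}$, etc.) force $w_1$ adjacent to an internal $I$-vertex of $P_a$ and of $P_b$, whence $\{w_1\}\cup N^I(w_1)$ is a $K_{1,4}$. So from now on we may assume $v_1$ is adjacent to every vertex of $P_a\cup P_b$ in $K$.

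Next, since $K$ is maximum there is $w'\in K$ with $v_1w'\notin E(G)$; by Claim A, $w'v_2\in E(G)$ or $w'v_3\in E(G)$, say $w'v_3\in E(G)$, and moreover $w'\notin P_a\cup P_b$ (for a vertex of $P_a\cup P_b$ in $K$, the previous paragraph already gave $v_1$-adjacency). The key structural facts to extract about $w'$ are: $w'x_2\in E(G)$ (else $N^I(w_2)\cup\{w_2,w'\}$ or $N^I(w_3)\cup\{w_3,w'\}$ is a $K_{1,4}$), and $w'$ is adjacent to $t_1$ or $t_2$ (else $N^I(s_2)\cup\{s_2,w'\}$ is a $K_{1,4}$). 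One then wants a sixth ``pivot'' vertex for $v_2$: a vertex $w''\in K\setminus(P_a\cup P_b\cup\{v,w'\})$ with $w''v_2\in E(G)$; its existence follows since every vertex of $P_a\cup P_b$ and $w',v$ has $d^I=3$ already used up, and $v$ needs a neighbour structure forcing some clique vertex onto $v_2$ — here I would mimic the ``there exists a vertex $w'\in K$, $w''\ne t$, with $w''v_3\in E(G)$'' step of Claim \ref{HC11}, and if no such $w''$ is available (i.e.\ $v_2$'s neighbourhood is already confined to $\{v\}\cup(P_a\cup P_b)$), then $N(v_2)$ together with the short-cycle hypothesis yields a contradiction, or alternatively $v_2$ can be threaded directly into the path. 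With $w''$ in hand, the candidate desired path is $(w'',v_2,v,v_3,w',?,\ldots,v_1,\ldots)$ where the ``$?$'' splices $w'$ into $P_a$ at $x_2$ and $v_1$ into the two ends of $P_a$ and $P_b$: concretely something like
\[
(w'',v_2,v,v_3,w',x_2,w_2,x_1,w_1,\; \text{(continue through }P_a\text{ to }w_4)\;,v_1,s_1,t_1,s_2,t_2,s_3),
\]
with the obvious variant $(\ldots,w_3,x_3,w_4,v_1,\ldots)$ or a $\overleftarrow{P_a}$ traversal depending on whether $w'x_2$ sits between $w_2,w_3$ or at an end, and a symmetric branch on whether $w'$ hooks $t_1$ or $t_2$. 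Since $|I|\ge8$, $P_b$ may have to be replaced by, or supplemented with, a short $\mathbb{P}_3$ path as in Claim \ref{HC931}; that bookkeeping is routine once the pivots $w',w''$ are fixed.

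The main obstacle, as in the earlier Hamiltonicity claims, is the last case: when the ``extra'' non-neighbour of $v_1$ is forced to be an end vertex $w_1,w_4,s_1,s_3$ rather than an external vertex, or when $v_2$ (respectively $v_3$) fails to acquire a clique pivot outside $P_a\cup P_b$. In that situation the clean splice above is unavailable, and — exactly as in Case 1.4 of Claim \ref{HC931} — one must invoke the \emph{no short cycle} hypothesis: the set $S$ consisting of the degree-$2$ $I$-vertices among $\{v_2,v_3\}\cup\{x_1,x_3\}$-type leftovers together with its neighbourhood would otherwise contain a short cycle, so some clique vertex $w'$ reaches into $S$, and that vertex furnishes the attachment point needed to build the desired path. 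I would enumerate these boundary sub-cases (a handful, mirroring Cases 1.1--1.4 and 2.1--2.3 above), writing down one explicit $(u,v)$-path for each; the verification that each listed sequence is a genuine path using only edges already shown to exist is mechanical. Remaining vertex-disjoint paths of $\mathbb{C}$ are absorbed by clique edges, completing the Hamiltonian cycle.
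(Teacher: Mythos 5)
Your overall template (find a universal $v_1$, pick a non-neighbour $w'$ of $v_1$, splice $w'$ and $N^I(v)$ into $P_a,P_b$, absorb the rest with clique edges) is the same as the paper's, but there is a genuine gap at the very first step. You claim that, ``exactly as in Claim~\ref{no755},'' one can force $v_1s_1,v_1s_3\in E(G)$ in addition to $v_1w_1,v_1w_4\in E(G)$. That forcing argument does not go through for the ends of the $5$-path. For $w_1$ (an end of the $7$-path) the count works: $x_1$ from the path, one of $x_2,x_3$ from the $K_{1,4}$-avoidance against $w_3$, one of $t_1,t_2$ from the avoidance against $s_2$, and one of $v_2,v_3$ from Claim~A give four distinct $I$-neighbours, hence a $K_{1,4}$. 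But for $s_1$ the avoidance against $s_2$ gives nothing new ($s_1$ already sees $t_1\in N^I(s_2)$), and the avoidances against $w_2$ and $w_3$ can both be satisfied by the single vertex $x_2$; so $s_1$ may have only the three $I$-neighbours $t_1,x_2$ and one of $v_2,v_3$, and no contradiction arises. The analogous step in Claims~\ref{no775} and~\ref{no755} succeeds only because a \emph{third} long path is present to supply a fourth independent-set neighbour; here there is exactly one $\mathbb{P}_7$ and one $\mathbb{P}_5$, so the case $w'\in\{s_1,s_3\}$ cannot be excluded. That case is precisely Case~2 of the paper's proof and is where most of the work lies (including the appeal to the no-short-cycle hypothesis applied to $S=\{t_1,t_2,v_2,v_3\}$ and the subcases on where the second pivot $w^*$ attaches). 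Your later paragraph concedes that the non-neighbour might be an end vertex of $P_b$, which contradicts your opening reduction and leaves the hardest branch entirely unproved.

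A secondary weakness: the existence of the second pivot $w''\in K\setminus(P_a\cup P_b\cup\{w'\})$ with $w''v_2\in E(G)$ is asserted with an ``either it exists, or the short-cycle hypothesis gives a contradiction, or $v_2$ can be threaded in directly'' trichotomy that is never resolved; in the paper this is handled by first showing $d^I(u)=3$ for every clique vertex $u$ of $P_a\cup P_b\cup\{w'\}$ (so none of them can absorb $v_2$'s required second neighbour) and then invoking $2$-connectivity, and in the end-vertex case by an explicit short-cycle argument. As written, your proposal establishes the easy branch of the claim and defers the essential one, so it does not constitute a proof.
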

\begin{proof}
Let $P_a\in\mathbb{P}_7$, $P_b\in \mathbb{P}_5$, such that  $P_a=(w_1,\ldots,w_4;x_1,\ldots,x_3)$, $P_b=(s_1,\ldots,s_3;t_1,t_{2})$. 
From Claim \ref{xunivtwopaths}, there exists $v_1\in N^I(v)$ such that $v_1w_i,v_1s_2\in E(G),i\in\{2,3\}$.    
Now we claim that $v_1w_1\in E(G)$.  If not, observe that either $v_2w_1\in E(G)$ or $v_3w_1\in E(G)$.  Also note that $w_1x_2\in E(G)$ or $w_1x_3\in E(G)$, otherwise \NC{w_3}{w_1} induces a \K14.  Further, $w_1t_1\in E(G)$ or $w_1t_2\in E(G)$, otherwise \NC{s_2}{w_1} induces a \K14.  Clearly, \NV{w_1} induces a \K14, a contradiction.  Therefore, $v_1w_1\in E(G)$.  Similarly, $v_1w_4\in E(G)$.  
Since the clique is maximal, there exists $w'\in K$ such that $v_1w'\nin E(G)$.  We see the following cases.
\begin{enumerate}[  ]
\item \emph{\bf {\em Case} 1:} $v_1s_1,v_1s_3\in E(G)$, \bb{and therefore, $w'\nin \{s_1,s_3\}$. } 
From the previous claims, it is easy to see that $w'$ is an end vertex of a path $P_c$ in $\mathbb{P}_1$.  
From Claim A, \bb{$w'$ is adjacent to a vertex in $N^I(v)$. Without loss of generality,} $v_3w'\in E(G)$.  Now we claim $w'x_2\in E(G)$, otherwise, either \NC{w_2}{w'} induces a \K14 or \NC{w_3}{w'} induces a \K14.
Also observe that either $w't_1$ or $w't_2$ is in $E(G)$, otherwise \NC{s_2}{w'} induces a \K14. Without loss of generality, let $w't_2\in E(G)$. 
\bb{Now note that any vertices $z\in \{w_1,w_4,s_1\}$ has an adjacency in $\{t_2,x_2,v_3\}$, otherwise \NC{w'}{z} has an induced \K14}.  Clearly, $d^I(w_j)=d^I(s_1)=d^I(s_2)=3$, $1\leq j\leq 4$ and since the graph is $2$-connected, $v_2s_3\in E(G)$ or $v_2w''\in E(G)$ where $w''$ is the end vertex of a path $P_d$ in $\mathbb{P}_3\cup \mathbb{P}_1$.  Here we obtain   $(\overrightarrow{P_c}w',v_3,v,v_2,s_3\overleftarrow{P_b}s_1,v_1,w_1\overrightarrow{P_a}w_4)$ or
$(\overrightarrow{P_d}w'',v_2,v,v_3,w'\overrightarrow{P_c},w_1\overrightarrow{P_a}w_4,v_1,s_1\overrightarrow{P_b}s_3)$ as a desired path.
\item \emph{\bf {\em Case} 2:} \bb{$v_1s_1,v_1s_3\notin E(G)$, and therefore, } $w'\in \{s_1,s_3\}$.  
Without loss of generality, let $w'=s_1$, i.e., $v_1s_1\nin E(G)$. 
From Claim A, \bb{$s_1$ is adjacent to a vertex in $N^I(v)$. Without loss of generality,}  $v_3s_1\in E(G)$.  Also note that $s_1x_2\in E(G)$, otherwise either \NC{w_2}{s_1} induces a \K14 or \NC{w_3}{s_1} induces a \K14.
Now we claim that $w_1$ and $w_4$ are adjacent to one of the vertices in $\{v_3,t_1,x_2\}$.  Suppose $w_1v_3,w_1t_1,w_1x_2\nin E(G)$, then \NC{s_1}{w_1} induces a \K14.  Similar arguments hold for $w_4$.  It follows that $d^I(w_j)=d^I(s_k)=3$, $1\leq j\leq4$, $k\in \{1,2\}$.  Since $G$ is 2-connected, there exists $w^*\in K$ such that $w^*v_2\in E(G)$.  We see the following sub cases based on the possibility of $w^*$.
\begin{enumerate}[  ]
\item \emph{\bf {\em Case} 2.1:} $w^*=s_3$.  i.e., $v_2s_3\in E(G)$.
In this sub case we claim that there exists a vertex $w''\neq v\in K$ such that $w''\nin P_a\cup P_b$ and $w''x'\in E(G)$ where $x'\in\{v_2,v_3,t_1,t_2\}$.   Suppose such a $w''$ does not exist, then observe that, in the set $S=\{t_1,t_2,v_2,v_3\}$, $d(t_1)=d(t_2)=d(v_2)=d(v_3)=2$, and $S\cup N(S)$ has a short cycle, a contradiction.  Note that, $w''$ is an end vertex of a path $P_d$ in $\mathbb{P}_3\cup \mathbb{P}_1$.   
Now depending on the adjacency of $w''$, we obtain the following paths.\\
If $w''v_2\in E(G)$, then  we obtain $(\overrightarrow{P_d}w'',v_2,s_3\overleftarrow{P_b}s_1,v_3,v,v_1,w_1\overrightarrow{P_a}w_4)$ as a desired path.  \\
If $w''v_3\in E(G)$, then  we obtain $(\overrightarrow{P_d}w'',v_3,s_1\overrightarrow{P_b}s_3,v_2,v,v_1,w_1\overrightarrow{P_a}w_4)$ as a desired path.  \\
If $w''t_1\in E(G)$, then  we obtain $(\overrightarrow{P_d}w'',t_1,s_1,v_3,v,v_2,s_3\overleftarrow{P_b}s_2,v_1,w_1\overrightarrow{P_a}w_4)$ as a desired path.  \\
If $w''t_2\in E(G)$, then  we obtain $(\overrightarrow{P_d}w'',t_2,s_3,v_2,v,v_3,s_1\overrightarrow{P_b}s_2,v_1,w_1\overrightarrow{P_a}w_4)$ as a desired path.  
\item \emph{\bf {\em Case} 2.2:}  $w^*\neq s_3$.
Note that $w^*$ is an end vertex of a path $P_e$ in $\mathbb{P}_3\cup \mathbb{P}_1$.   
We see the following sub cases to complete our argument.\\
\emph{\bf {\em Case} 2.2.1:} $v_1s_3\in E(G)$.
Here we obtain $(\overrightarrow{P_e}w^*,v_2,v,v_3,s_1\overrightarrow{P_b}s_3,v_1,w_1\overrightarrow{P_a}w_4)$ as a desired path.  \\
\emph{\bf {\em Case} 2.2.2:} $v_1s_3\nin E(G)$.
Clearly, from Claim A either $s_3v_2\in E(G)$ or $s_3v_3\in E(G)$.  We now claim that $s_3x_2\in E(G)$.  Otherwise either \NC{w_2}{s_3} induces a \K14 or \NC{w_3}{s_3} induces a \K14. 
Here we obtain $(\overrightarrow{P_e}w^*,v_2,v,v_3,s_1\overrightarrow{P_b}s_3,x_2\overrightarrow{P_a}w_4,v_1,w_2,x_1,w_1)$ as a desired path.
\end{enumerate}  
\end{enumerate}
This completes the case analysis and the proof.  $\hfill\qed$
\end{proof} 
\begin{cl}\label{HC733}
If there exist $P_a\in\mathbb{P}_7,P_b,P_c\in\mathbb{P}_3$ and $\mathbb{P}_5=\emptyset$ and $G$ has no short cycles, then $G$ has a Hamiltonian cycle.  
\end{cl}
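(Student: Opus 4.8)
The plan is to follow the constructive template of Claims~\ref{HC931}, \ref{HC771} and~\ref{HC75}: build a \emph{desired path} --- a path with both end vertices in $K$ that contains $N^I(v)\cup\{v\}$ together with $P_a$, $P_b$, and when needed $P_c$ --- and then extend it to a Hamiltonian cycle of $G$ by splicing in the remaining vertex-disjoint members of $\mathbb{C}$ along clique edges, which is possible because all of them have their end vertices in $K$.

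I would first normalise the configuration. By Claims~\ref{no13}, \ref{HC11}, \ref{HC931}, \ref{no775} and~\ref{HC771}, combined with the hypothesis $\mathbb{P}_5=\emptyset$, we may assume that $\mathbb{C}$ consists only of the unique path $P_a\in\mathbb{P}_7$, the paths of $\mathbb{P}_3$, and the singletons of $\mathbb{P}_1$ --- otherwise one of those claims already produces a Hamiltonian cycle. In this situation $|I|=|N^I(v)|+|V(P_a)\cap I|+|\mathbb{P}_3|=6+|\mathbb{P}_3|$, so $|I|\ge 8$ indeed forces $|\mathbb{P}_3|\ge 2$. Write $P_a=P(w_1,w_2,w_3,w_4;x_1,x_2,x_3)$, $P_b=P(s_1,s_2;t_1)$ and $P_c=P(q_1,q_2;r_1)$.

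Next, by Claim~A some $v_1\in N^I(v)$ is adjacent to the internal vertex $w_2$ of $P_a$, so $N^I(w_2)=\{x_1,x_2,v_1\}$ (since $\Delta^I=3$), and then $v_1w_4\in E(G)$ too --- otherwise $\{w_2\}\cup\{x_1,x_2,v_1,w_4\}$ would induce a $K_{1,4}$, exactly as in the proof of Claim~\ref{HC931}. Since $K$ is a maximum clique, some vertex of $K$ is non-adjacent to $v_1$, and by Claim~A every such vertex is adjacent to $v_2$ or $v_3$. I would then split, as in Claim~\ref{HC931}, into the cases (i) $v_1w_1\notin E(G)$, (ii) $v_1w_3\notin E(G)$, and (iii) $v_1w'\notin E(G)$ for some $w'\notin V(P_a)$. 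In each case I would build a desired path with the skeleton $(\,\cdot\,,v_j,v,v_k,\,\cdot\,)$, threading $P_a$ in through $v_1$, hanging $P_b$ on $v_1$ or on one of $v_2,v_3$, and either absorbing $P_c$ and the $\mathbb{P}_1$ singletons into the path or leaving them to be joined by clique edges. The routings need a handful of additional $K_{1,4}$-forced adjacencies (an endpoint of $P_a$ adjacent to two of $x_1,x_2,x_3$; endpoints of $P_b$, $P_c$ adjacent to a neighbour of $t_1$ or of $r_1$; and so on), and occasionally a clique vertex incident to $v_2$ or $v_3$ that lies outside $V(P_a)\cup V(P_b)\cup V(P_c)$, whose existence follows from the absence of short cycles.

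I expect the main obstacle to be the subcase of (iii) where the $v_1$-non-neighbour $w'$ is an end vertex of $P_b$ or $P_c$: there $v_1$ does not dominate that short path, and one has to reroute it entirely through $w'$ and two vertices of $\{v_2,v_3\}$, use the other short path (which $v_1$ does reach) as the opposite anchor, and make sure that no member of $\mathbb{C}$ is used twice and that $v$ together with whichever $v_i$ is left exposed is still anchored at a clique vertex supplied by the no-short-cycle hypothesis. Once all these adjacencies and anchors are secured, assembling the Hamiltonian cycle is routine, as in Claims~\ref{HC771} and~\ref{HC75}.
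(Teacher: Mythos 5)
Your proposal correctly identifies the overall strategy, which is the same one the paper uses throughout this section: force adjacencies via Claim A and the $K_{1,4}$-freeness, build a \emph{desired path} containing $N^I(v)\cup\{v\}$ and the long path with both ends in $K$, and absorb the remaining members of $\mathbb{C}$ via clique edges. But for this claim that strategy is not the proof --- the proof \emph{is} the exhaustive case analysis of how $P_b$, $P_c$ and the $\mathbb{P}_1$ singletons attach to $N^I(v)$ and to $N^I(w_2)$, $N^I(w_3)$, together with the explicit routings. You defer essentially all of that (``in each case I would build a desired path \dots the routings need a handful of additional forced adjacencies''), whereas the paper's argument for this single claim runs to dozens of explicitly exhibited paths across Cases 1.1.1--1.1.3, 1.2, 1.3, 2.1.1--2.1.3 and 2.2. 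In particular, the places where the no-short-cycle hypothesis is invoked are not generic: one must identify a concrete set $S$ (e.g.\ $\{v_2,v_3,t_1\}$ or $\{v_2,v_3,x_1,x_2\}$) all of whose members would have degree $2$, and use the extra neighbour it supplies in a specific routing. None of this is secured by the plan.

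There are also two concrete flaws in your normalisation. First, from $N^I(w_2)=\{x_1,x_2,v_1\}$ the $K_{1,4}$ centred at $w_2$ only forces $w_4$ to be adjacent to \emph{one of} $x_1,x_2,v_1$; it does not give $v_1w_4\in E(G)$, since $w_4$ may be adjacent to $x_1$ or $x_2$ in $G$ even though it is an end vertex of $P_a$. Second, and more importantly, your case split is imported from the $\mathbb{P}_9$ situation of Claim~\ref{HC931} and misses the configuration that drives half of the paper's proof: in a $P_7$ the two internal clique vertices $w_2,w_3$ are \emph{consecutive} (they share $x_2$), so no $K_{1,4}$ argument forces a single $v_1$ to serve both of them. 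The paper therefore splits at the outset on whether $w_2,w_3$ have a common neighbour in $N^I(v)$ (its Case 1) or are served by two distinct vertices $v_1,v_2$ (its Case 2); the latter changes which forced adjacencies are available (everything is measured against $N^I(w_3)=\{x_2,x_3,v_2\}$ rather than against $v_1$ and $x_2$) and requires entirely different routings. Treating ``$v_1w_3\notin E(G)$'' as just one more choice of $v_1$-non-neighbour, after having already pinned $v_1$ to $w_2$ and $w_4$, does not reflect this difficulty. So the proposal should be regarded as an outline of the right approach with a genuine gap where the substance of the claim lies.
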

\begin{proof}
Let $P_a=(w_1,\ldots,w_4;,x_1,\ldots,x_3)$, $P_b=(s_1,s_2;t_1)$, and $P_c=(q_1,q_2;r_1)$.     
From Claim A, the vertices $w_2,w_3$ are adjacent to at least one of the vertices in $N^I(v)$.  Depending on this adjacency, we see the following two cases.
\begin{enumerate}[  ]
\item \emph{\bf {\em Case} 1:} \bb{$w_2$ and $w_3$ are adjacent to the same vertex in $N^I(v)$. Without loss of generality,} let $v_1\in N^I(v)$ such that $v_1w_2,v_1w_3\in E(G)$. \bb{Let $S=\{s_1,s_2,q_1,q_2\}$.}
We first claim that \bb{all} the vertices in $S$ are adjacent to at least one of $v_1,x_2$.
Suppose that \bb{$s_1$ is not adjacent to both, i.e.,} $v_1s_1,x_2s_1\nin E(G)$.   Note that $v_2s_1\in E(G)$ or $v_3s_1\in E(G)$.  If $s_1x_1\nin E(G)$, then \NC{w_2}{s_1} induces a \K14.  Therefore,  $s_1x_1\in E(G)$ and similarly, $s_1x_3\in E(G)$, otherwise \NC{w_3}{s_1} induces a \K14.  It follows that \NV{s_1} induces a \K14.  Similar argument holds for other vertices, and thus the vertices $s_1,s_2,q_1,q_2$ are adjacent to at least one of $v_1,x_2$.  Since the clique is maximal, there exists $v'\in K$ such that $v_1v'\nin E(G)$.  We further classify based on the possibilities of $v'$ as follows.
\begin{enumerate}[  ]
\item
\emph{\bf {\em Case} 1.1:} $v'\in P_a$, i.e., $w_1$ or $w_4$ is non-adjacent to $v_1$, say $v_1w_4\nin E(G)$. \\  
Note that either $v_2w_4\in E(G)$ or $v_3w_4\in E(G)$.  Without loss of generality, let $v_2w_4\in E(G)$.  
\bb{Since $G$ is 2-connected,} note that $v_3$ is adjacent to at least one more clique vertex, say $v_3v''\in E(G)$, where $v''$ is an end vertex of some path $P_d$.  Further, we see the following cases depending on the possibilities of $v''$.
\begin{enumerate}[  ]
\item  \emph{\bf {\em Case} 1.1.1:} $v''\nin P_a\cup P_b\cup P_c$.  Then depending on the adjacency of $s_1,s_2,q_1,q_2$ to $v_1,x_2$, one of the following is a desired path $P$.\\
If all of $s_i,q_i$ are adjacent to $v_1$, then $P=(\overrightarrow{P_d}v'',v_3,v,v_2,w_4\overleftarrow{P_a}w_1, s_1,t_1,s_2,v_1,q_1,r_1,q_2)$\\
If all of $s_i,q_i$ are adjacent to $x_2$, then $P=(\overrightarrow{P_d}v'',v_3,v,v_2,w_4\overleftarrow{P_a}w_3,v_1,w_2\overleftarrow{P_a}w_1,s_1,t_1,s_2,x_2,q_1,r_1,q_2)$\\
If $s_i,q_i$ are adjacent to different vertices in $v_1,x_2$, say $v_1s_1,q_2x_2\in E(G)$, then\\ $P=(\overrightarrow{P_d}v'',v_3,v,v_2,w_4\overleftarrow{P_a}w_3,v_1,s_1,t_1,s_2,q_1,r_1,q_2,x_2,w_2\overleftarrow{P_a}w_1)$.
\item   \emph{\bf {\em Case} 1.1.2:} $v''\in P_a$. \\
Note that either $w_4x_1\in E(G)$ or $w_4x_2\in E(G)$, otherwise \NC{w_2}{w_4} has an induced \K14.  Therefore, $v''=w_1$, i.e., $v_3w_1\in E(G)$.  Now we see the following sub cases.  \bb{Recall that all the vertices in $S$ are adjacent to at least one of $v_1,x_2$.}
If at least three vertices in $S$ are having adjacency with $v_1$, say $s_1v_1,q_1v_1,q_2v_1\in E(G)$.  Note that all the vertices $s_1,q_1,q_2$ are having adjacency with one vertex in $N^I(w_4)$.  We obtain desired path $P$ as follows.
If $q_2v_2\in E(G)$, then $P=(s_2,t_1,s_1,v_1,q_1,r_1,q_2,v_2,w_4\overleftarrow{P_a}w_1,v_3,v)$. \\
If $q_2x_3\in E(G)$, then $P=(s_2,t_1,s_1,v_1,q_1,r_1,q_2,x_3\overleftarrow{P_a}w_1,v_3,v,v_2,w_4)$.\\
If $q_2x_2\in E(G)$, then $P=(s_2,t_1,s_1,v_1,q_1,r_1,q_2,x_2\overrightarrow{P_a}w_4,v_2,v,v_3,w_1,x_1,w_2)$.\\
If $q_2x_1\in E(G)$, then $P=(s_2,t_1,s_1,v_1,q_1,r_1,q_2,x_1\overrightarrow{P_a}w_4,v_2,v,v_3,w_1)$.\\
Similarly, if at least three vertices in $S$ are having adjacency with $x_2$, say $s_1x_2,q_1x_2,q_2x_2\in E(G)$.  Note that all the vertices $s_1,q_1,q_2$ are having adjacency with one vertex in $\{v_2,v_3\}$.  Further, we obtain $P$ as follows.
If $q_2v_2\in E(G)$, then $P=(s_2,t_1,s_1,x_2,q_1,r_1,q_2,v_2,w_4,x_3,w_3,v_1,w_2,x_1,w_1,v_3,v)$. \\
If $q_2v_3\in E(G)$, then \bb{$P=(s_2,t_1,s_1,x_2,q_1,r_1,q_2,v_3,v,v_2,w_4,x_3,w_3,v_1,w_2,x_1,w_1)$.} \\
Now we see the case in which exactly two vertices in $S$ are adjacent to $x_2$ \bb{and the remaining two to} $v_1$.  If $s_1v_1,q_1v_1\in E(G)$ and $s_2x_2,q_2x_2\in E(G)$, then \bb{note that $q_2\cap N^I(v)\neq \emptyset$.}  We obtain $P$ as follows.\\
If $q_2v_2\in E(G)$, then $P=(s_2,t_1,s_1,v_1,q_1,r_1,q_2,v_2,w_4\overleftarrow{P_a}w_1,v_3,v)$.\\
If $q_2v_3\in E(G)$, then $P=(s_2,t_1,s_1,v_1,q_1,r_1,q_2,v_3,v,v_2,w_4\overleftarrow{P_a}w_1)$. \\
If $s_1v_1,s_2v_1\in E(G)$ and $q_1x_2,q_2x_2\in E(G)$, then we obtain \\ \bb{$P=(s_2,t_1,s_1,v_1,w_2,x_1,w_1,v_3,v,v_2,w_4,x_3,w_3,x_2,q_1,r_1,q_2)$.}  The other cases are symmetric, when \\ $s_1x_2,q_1x_2,s_2v_1,q_2v_1\in E(G)$ and $q_1v_1,q_2v_1,s_1x_2,s_2x_2\in E(G)$.  This completes Case $1.1.2$. 
\item   \emph{\bf {\em Case} 1.1.3:} $v''\in P_b\cup P_c$. \\
Without loss of generality, let $v''=s_1$, i.e., $s_1v_3\in E(G)$.
Consider the adjacency of the vertices $s_2,q_1$ with $v_1,x_2$.   If $v_1s_2,v_1q_1\in E(G)$, then the desired path $P=(q_2,r_1,q_1,v_1,s_2,t_1,s_1,v_3,v,v_2,w_4\overleftarrow{P_a}w_1)$.  If $x_2s_2,x_2q_1\in E(G)$, then  $P=(q_2,r_1,q_1,x_2,s_2,t_1,s_1,v_3,v,v_2,w_4\overleftarrow{P_a}w_3,v_1,w_2\overleftarrow{P_a}w_1)$.  
If $v_1s_2,x_2q_1\in E(G)$, then  \bb{$P=(w_1\overrightarrow{P_a}w_2,v_1,s_2,t_1,s_1,v_3,v,v_2,w_4\overleftarrow{P_a}x_2,q_1,r_1,q_2)$.}  The other case is symmetric when $x_2s_2,v_1q_1\in E(G)$.
%
%
%
\end{enumerate}
\item  \emph{\bf {\em Case} 1.2:} $v'\in \mathbb{P}_3$, without loss of generality $v_1s_1\nin E(G)$. \\
Note that $v_1w_i\in E(G)$, $1\le i\le 4$.   
Note that either $s_1v_2\in E(G)$ or $s_1v_3\in E(G)$, say $s_1v_2\in E(G)$.  Since $s_1v_1\nin E(G)$, clearly $s_1x_2\in E(G)$, \bb{otherwise \NC{w_2}{s_1} or \NC{w_3}{s_1} has an induced \K14.}  
Since $G$ is 2-connected, there exists $z'\in K$ such that $v_3z'\in E(G)$.  
Since there exists an adjacency for $w_1,w_4$ in $N^I(s_1)$, $d^I(w_i)=3$, and thus $z'\ne w_i$, $1\le i\le 4$.
If $z'\nin P_a\cup P_b$, then $z'$ is an end vertex of some path $P_d\in \mathbb{P}_3\cup \mathbb{P}_1$. \bb{Recall that all the vertices in $S$ are adjacent to at least one of $v_1,x_2$.} We obtain  $(\overrightarrow{P_d}z',v_3,v,v_2,s_1,t_1,s_2,v_1,w_1\overrightarrow{P_a}w_4,\overrightarrow{P_c})$ or $(\overrightarrow{P_d}z',v_3,v,v_2,s_1,t_1,s_2,x_2,w_3,x_3,w_4,v_1,w_1,x_1,w_2,\overrightarrow{P_c})$ as the desired path.
If $z'\in P_c$, say $v_3q_1\in E(G)$, then 
$(\overleftarrow{P_c}q_1,v_3,v,v_2,s_1,t_1,s_2,v_1,w_1\overrightarrow{P_a}w_4)$ or \\ $(\overleftarrow{P_c}q_1,v_3,v,v_2,s_1,t_1,s_2,x_2,w_3,x_3,w_4,v_1,w_1,x_1,w_2)$ is the desired path.
Now the left over case is when $z'=s_2$.
Observe that $d(v_3)=d(v_2)=d(t_1)=2$. Let \bb{$R=\{v_2,v_3,t_1\}$. Then $R\cup N(R)$} induces a short cycle.  Since $G$ has no short cycles, there exists some more adjacency for vertices in $R$.  Observe that both $w_1,w_4$ have adjacency to a vertex in $N^I(s_1)$.
Now if any one of $w_1,w_4$ is adjacent to any one of  $t_1,v_2$, then we obtain the desired path $P$ as follows.\\
If $w_1v_2\in E(G)$ and $q_1v_1\in E(G)$, then $P=(q_2,r_1,q_1,v_1,w_4\overleftarrow{P_a}w_1,v_2,v,v_3,s_2,t_1,s_1)$.\\
If $w_1v_2\in E(G)$ and $q_1x_2\in E(G)$, then $P=(q_2,r_1,q_1,x_2\overrightarrow{P_a}w_4,v_1,w_2,x_1,w_1,v_2,v,v_3,s_2,t_1,s_1)$.\\
If $w_4v_2\in E(G)$ and $q_1v_1\in E(G)$, then $P=(q_2,r_1,q_1,v_1,w_1\overrightarrow{P_a}w_4,v_2,v,v_3,s_2,t_1,s_1)$.\\
If $w_4v_2\in E(G)$ and $q_1x_2\in E(G)$, then $P=(q_2,r_1,q_1,x_2\overleftarrow{P_a}w_1,v_1,w_3,x_3,w_4,v_2,v,v_3,s_2,t_1,s_1)$.\\
If $w_1t_1\in E(G)$ and $q_1v_1\in E(G)$, then $P=(q_2,r_1,q_1,v_1,w_4\overleftarrow{P_a}w_1,t_1,s_2,v_3,v,v_2,s_1)$.\\
If $w_1t_1\in E(G)$ and $q_1x_2\in E(G)$, then $P=(q_2,r_1,q_1,x_2\overrightarrow{P_a}w_4,v_1,w_2,x_1,w_1,t_1,s_2,v_3,v,v_2,s_1)$.\\
If $w_4t_1\in E(G)$ and $q_1v_1\in E(G)$, then $P=(q_2,r_1,q_1,v_1,w_1\overrightarrow{P_a}w_4,t_1,s_2,v_3,v,v_2,s_1)$.\\
If $w_4t_1\in E(G)$ and $q_1x_2\in E(G)$, then $P=(q_2,r_1,q_1,x_2\overleftarrow{P_a}w_1,v_1,w_3,x_3,w_4,t_1,s_2,v_3,v,v_2,s_1)$.\\
Now we shall consider the case where $w_1x_2,w_4x_2\in E(G)$.  Note that either $q_1v_1\in E(G)$ or $q_1v_2\in E(G)$  or $q_1v_3\in E(G)$.  If one of $q_1,q_2$ is adjacent to one of $v_2,v_3$, then we obtain $P$ as follows.  If $q_1v_2\in E(G)$, then $P=(q_2,r_1,q_1,v_2,s_1,t_1,s_2,v_3,v,v_1,w_1\overrightarrow{P_a}w_4)$.  \\
If $q_1v_3\in E(G)$, then $P=(q_2,r_1,q_1,v_3,s_2,t_1,s_1,v_2,v,v_1,w_1\overrightarrow{P_a}w_4)$.  \\
Now if $q_iv_1\in E(G)$, $i\in \{1,2\}$, then $q_i$ has an adjacency in $N^I(s_1)$.  Further, if $q_1t_1\in E(G)$, then $P=(q_2,r_1,q_1,t_1,s_2,v_3,v,v_2,s_1,x_2\overrightarrow{P_a}w_4,v_1,w_2,x_1,w_1)$.\\
Finally, we are left with one case that $q_ix_2\in E(G)$.  We claim that such a case cannot occur.  Suppose not, then observe that $d^I(w_i)=d^I(s_j)=d^I(q_j)=3$, $i\in\{1,2,3\}, j\in \{1,2\}$, and $d(v_2)=d(v_3)=d(t_1)=2$.  Let $R=\{v_2,v_3,t_1\}$ then $R\cup N(R)$ has a short cycle, a contradiction.  
\item  \emph{\bf {\em Case} 1.3:}  $v' \nin P_a\cup \mathbb{P}_3$.\\
In this case \bb{note} that $v_1w_i,v_1s_j,v_1q_j\in E(G)$, $i\in\{1,2,3\}, j\in\{1,2\}$.  \bb{Further,} $v'v_2\in E(G)$ or $v'v_3\in E(G)$.  Assume without loss of generality that $v'v_2\in E(G)$.  \bb{Since $d^I(w_2)=d^I(w_3)=3$, note that $v'\cap N^I(w_2)\neq \emptyset$ and $v'\cap N^I(w_3)\neq \emptyset$.}   \bb{ Since $G$ is $2$-connected, observe that $v_3$ is adjacent to at least one more vertex $w'$ in $K$. If $w'=v'$, i.e., $v'v_3\in E(G)$, then $v'x_2\in E(G)$, otherwise \NC{w_2}{v'} or \NC{w_3}{v'} has an induced \K14.  Since $G$ has no short cycles, $d(v_2)>2$ or $d(v_3)>2$. That is, there exists more adjacency to $v_2$ or $v_3$.  If one of the end vertices of one among the paths $P_a, P_b, P_c$ is adjacent to either $v_2$ or $v_3$, then we obtain the desired path as follows.  If $w_1v_2\in E(G)$, then $P=(\overleftarrow{P_a},v_2,v,v_3,v',\overrightarrow{P_b},v_1,\overrightarrow{P_c})$ is a desired path. The other cases are similar. If $P_d\notin \{P_a, P_b, P_c\}$ has an end vertex  adjacent to either $v_2$ or $v_3$, say $v_2$, then we obtain $P=(\overrightarrow{P_d},v_2,v,v_3,v',\overrightarrow{P_a},v_1,\overrightarrow{P_b},\overrightarrow{P_c})$ as the desired path.         }  If $w'\neq v'$, then observe that $w'$ is an end vertex of some path $P_d$.  If $P_d\neq P_a$, $P_d\neq P_b$, and $P_d\neq P_c$, then $(\overrightarrow{P_d}w',v_3,v,v_2,v',\overrightarrow{P_a},v_1,\overrightarrow{P_b},\overrightarrow{P_c})$ is a desired path.  On the other hand if $P_d$ is one among $P_a,P_b,P_c$, say $P_d=P_b$, then $(\overrightarrow{P_b}w',v_3,v,v_2,v',\overrightarrow{P_a},v_1,\overrightarrow{P_c})$ is a desired path. 
\end{enumerate} 
\item  \emph{\bf {\em Case} 2:}  $w_2,w_3$ are adjacent to two different vertices in $N^I(v)$; i.e., without loss of generality, there exists $v_1,v_2\in N^I(v)$ such that  $v_1w_2,v_2w_3\in E(G)$.  Let $S=\{s_1,s_2,q_1,q_2\}$.
Since $G$ is $2$-connected, observe that $v_3$ is adjacent to at least one more vertex $w'$ in $K$.  We observe the following possibilities.
\begin{enumerate}[  ]
\item \emph{\bf {\em Case} 2.1:}  $w'\in P_a$.  Without loss of generality let $w_1v_3\in E(G)$.  Now note that $w_1$ is adjacent to a vertex in $N^I(w_3)$.  Thus we see the following sub cases.
\begin{enumerate}[  ]
\item \emph{\bf {\em Case} 2.1.1:} $w_1v_2\in E(G)$.\\
Observe that the vertices in $S$ are adjacent to any one of the following four vertex pairs; $\{v_2,x_1\}$, $\{v_2,x_2\}$, $\{v_2,v_1\}$, $\{v_3,x_2\}$.  
\\
\emph{\bf {\em Case} 2.1.1.1:} 
If there exists vertices in $S$ having adjacency to $v_2$.  Without loss of generality, $s_1v_2, q_1v_2\in E(G)$.  Clearly, $s_2,q_2$ are adjacent to one of $x_1,x_2,v_1$.  We shall see the following arguments with respect to $s_2$.\\
 If $s_2x_1\in E(G)$, then $P_1=(q_2,r_1,q_1,v_2,s_1,t_1,s_2,x_1,w_1,v_3,v,v_1,w_2,x_2,w_3,x_3,w_4)$ is a desired path in $G$.   
 If $s_2v_1\in E(G)$, then $P_2=(q_2,r_1,q_1,v_2,s_1,t_1,s_2,v_1,v,v_3,w_1,x_1,w_2,x_2,w_3,x_3,w_4)$ is a desired path in $G$.  
 Now  If $s_2x_2\in E(G)$, then we see two more possibilities.  If $s_2v_3\in E(G)$.  Note that in this case, $s_1,q_1$ are adjacent to $x_2$, otherwise \NC{q_1}{s_2} induces a \K14.  In this case $P_3=(q_2,r_1,q_1,x_2,s_1,t_1,s_2,v_3,v,v_1,w_2,$ $x_1,w_1,v_2,w_3,x_3,w_4)$ is a desired path in $G$. \\ 
 The final case remaining is when all the vertices in $S$ are adjacent to $v_2$; i.e., $s_1v_2,s_2v_2,q_1v_2,q_2v_2\in E(G)$.  Further, observe that if any of the vertex in $S$ are adjacent to $x_1$ or $v_1$, then we could obtain a similar path as that of $P_1$, $P_2$, respectively.  Therefore, we shall see the case when all the vertices of $S$ are adjacent to $v_2,x_2$.  Now, note that $d^I(v_3)=d^I(v_1)=d^I(x_1)=2$.  Since $G$ has no short cycles, there exists a vertex $w''\in K$ such that $w''$ is adjacent to some vertices in $\{v_3,v_1,x_1\}$.  If $w''=w_4$, then we obtain the following desired paths.\\
If $w_4v_3\in E(G)$, then $P=(q_2,r_1,q_1,v_2,s_1,t_1,s_2,x_2,w_3,x_3,w_4,v_3,v,v_1,w_2,x_1,w_1)$. \\
If $w_4v_1\in E(G)$, then $P=(q_2,r_1,q_1,v_2,s_1,t_1,s_2,x_2,w_3,x_3,w_4,v_1,w_2,x_1,w_1,v_3,v)$. \\
If $w_4x_1\in E(G)$, then $P=(q_2,r_1,q_1,v_2,s_1,t_1,s_2,x_2,w_3,x_3,w_4,x_1,w_1,v_3,v,v_1,w_2)$. \\
If $w''$ is an end vertex of a path $P_d$, then we obtain the desired path as follows. \\
If $w''v_3\in E(G)$, then $P=(q_2,r_1,q_1,v_2,s_1,t_1,s_2,x_2,w_3,x_3,w_4,\overrightarrow{P_d}w'',v_3,v,v_1,w_2,x_1,w_1)$. \\
If $w''v_1\in E(G)$, then $P=(q_2,r_1,q_1,v_2,s_1,t_1,s_2,x_2,w_3,x_3,w_4,\overrightarrow{P_d}w'',v_1,w_2,x_1,w_1,v_3,v)$. \\
If $w''x_1\in E(G)$, then $P=(q_2,r_1,q_1,v_2,s_1,t_1,s_2,x_2,w_3,x_3,w_4,\overrightarrow{P_d}w'',x_1,w_1,v_3,v,v_1,w_2)$. \\
The above cases hold true with respect to the vertex $q_2$; i.e., when $q_2$ is adjacent to one of $x_1,x_2,v_1$.  \\
\emph{\bf {\em Case} 2.1.1.2:}  If all the vertices in $S$ are non-adjacent to $v_2$, then every vertices in $S$ are adjacent to the vertices $v_3,x_2$.  In this case we obtain the same desired path $P_3$ as mentioned in the previous case.
\item \emph{\bf {\em Case} 2.1.2:} $w_1x_2\in E(G)$\\
Similar to Case 2.1.1. 
\item \emph{\bf {\em Case} 2.1.3:} $w_1x_3\in E(G)$\\
Observe that all the vertices in $S$ are adjacent to any one of the following three vertex pairs; $\{v_1,x_3\}$, $\{v_2,x_1\}$, $\{v_3,x_2\}$.
If all the vertices in $S$ are adjacent to $\{v_1,x_3\}$, then note that for every $u\in U=\{v_2,v_3,x_1,x_2\}$, $d^I(u)=2$.  Since $G$ has no short cycles, there exists a vertex $w''\in K$, such that for some $u\in U$, $w''u\in E(G)$.  If $w''=w_4$, then we obtain the following desired paths in $G$.\\
If $w_4v_3\in E(G)$, then $P=(q_2,r_1,q_1,v_1,s_1,t_1,s_2,x_3,w_4,v_3,w_1,x_1,w_2,x_2,w_3,v_2,v)$. \\
If $w_4v_2\in E(G)$, then $P=(q_2,r_1,q_1,v_1,s_1,t_1,s_2,x_3,w_4,v_2,v,v_3,w_1,x_1,w_2,x_2,w_3)$. \\
If $w_4x_1\in E(G)$, then $P=(q_2,r_1,q_1,v_1,s_1,t_1,s_2,x_3,w_4,x_1,w_2,x_2,w_3,v_2,v,v_3,w_1)$. \\
If $w_4x_2\in E(G)$, then $P=(q_2,r_1,q_1,v_1,s_1,t_1,s_2,x_3,w_4,x_2,w_3,v_2,v,v_3,w_1,x_1,w_2)$. \\
If $w''$ is an end vertex of a path $P_d$, then we obtain the desired path as follows. \\
If $w''v_3\in E(G)$, then  $P=(q_2,r_1,q_1,v_1,s_1,t_1,s_2,x_3,w_4,\overrightarrow{P_d}w'',v_3,w_1,x_1,w_2,x_2,w_3,v_2,v)$. \\
If $w''v_2\in E(G)$, then  $P=(q_2,r_1,q_1,v_1,s_1,t_1,s_2,x_3,w_4,\overrightarrow{P_d}w'',v_2,v,v_3,w_1,x_1,w_2,x_2,w_3)$. \\
If $w''x_1\in E(G)$, then  $P=(q_2,r_1,q_1,v_1,s_1,t_1,s_2,x_3,w_4,\overrightarrow{P_d}w'',x_1,w_2,x_2,w_3,v_2,v,v_3,w_1)$. \\
If $w''x_2\in E(G)$, then  $P=(q_2,r_1,q_1,v_1,s_1,t_1,s_2,x_3,w_4,\overrightarrow{P_d}w'',x_2,w_3,v_2,v,v_3,w_1,x_1,w_2)$. \\
If all the vertices in $S$ are adjacent to $\{v_2,x_1\}$, then $(q_2,r_1,q_1,v_2,s_1,t_1,s_2,x_1,w_1,v_3,v,v_1,w_2,$ $x_2,w_3,$ $x_3,w_4)$ is a desired path in $G$.
If all the vertices in $S$ are adjacent to $\{v_3,x_2\}$, then $(q_2,r_1,q_1,x_2,s_1,t_1,s_2,$ $v_3,w_1,x_1,w_2,v_1,v,v_2,w_3,x_3,w_4)$ is a desired path in $G$.
\end{enumerate} 
\item \emph{\bf {\em Case} 2.2:}  $w'\nin P_a$. 
In this case we shall assume that $w_1,w_4$ are adjacent to $v_1$ or $v_2$.  
Without loss of generality, let $w'\in P_b$; i.e., $w'=s_1$, $s_1v_3\in E(G)$.    
Observe that all the vertices in $S$ are adjacent to any one of the following six vertex pairs; $\{v_1,x_2\}$, $\{v_1,x_3\}$, $\{v_2,v_1\}$, $\{v_2,x_1\}$, $\{v_2,x_2\}$, $\{v_3,x_2\}$.  Clearly, $s_1v_3,s_1x_2\in E(G)$.  Note that all the vertices $q_1,q_2$ are adjacent to $x_2$, otherwise suppose $q_1$ is not adjacent to the vertices $x_2$, then \NC{s_1}{q_1} induces a \K14.  Similar argument holds for $q_2$.  Moreover, $q_1$ is adjacent to a vertex in $v_1,v_2,v_3$.  
If $w_1v_1,w_4v_1\in E(G)$, then $(s_2,t_1,s_1,v_3,v,v_2,w_3,x_3,w_4,v_1,w_1,x_1,w_2,x_2,q_1,r_1,q_2)$ is a desired path in $G$. 
If $w_1v_2,w_4v_2\in E(G)$, then $(s_2,t_1,s_1,v_3,v,v_1,w_2,x_1,w_1,v_2,w_4,x_3,w_3,x_2,q_1,r_1,q_2)$ is a desired path in $G$. 
If $w_1v_1,w_4v_2\in E(G)$, then we see the adjacency of $q_1$.\\
If $q_1v_1\in E(G)$, then $(s_2,t_1,s_1,v_3,v,v_2,w_4,x_3,w_3,x_2,q_2,r_1,q_1,v_1,w_2,x_1,w_1)$ is a desired path in $G$. \\
If $q_1v_2\in E(G)$, then $(s_2,t_1,s_1,v_3,v,v_1,w_1,x_1,w_2,x_2,q_2,r_1,q_1,v_2,w_3,x_3,w_4)$ is a desired path in $G$. \\
If $q_1v_3\in E(G)$, then $(s_2,t_1,s_1,v_3,q_1,r_1,q_2,x_2,w_3,x_3,w_4,v_2,v,v_1,w_1,x_1,w_2)$ is a desired path in $G$. \\
If $w_1v_2,w_4v_1\in E(G)$, then we see the adjacency of $q_1$.\\
If $q_1v_1\in E(G)$, then $(s_2,t_1,s_1,v_3,v,v_2,w_1,x_1,w_2,v_1,q_1,r_1,q_2,x_2,w_3,x_3,w_4)$ is a desired path in $G$. \\
If $q_1v_2\in E(G)$, then $(s_2,t_1,s_1,v_3,v,v_1,w_2,x_1,w_1,v_2,q_1,r_1,q_2,x_2,w_3,x_3,w_4)$ is a desired path in $G$. \\
If $q_1v_3\in E(G)$, then $(s_2,t_1,s_1,v_3,q_1,r_1,q_2,x_2,w_3,x_3,w_4,v_1,v,v_2,w_1,x_1,w_2)$ is a desired path in $G$. \\
If $w'\in P_c$ or $w'\nin S$, then similar argument could be made and the desired path is obtained.
\end{enumerate} 
\end{enumerate}
This completes the case analysis and a proof of the claim. $\hfill \qed$ 
\end{proof} 

\begin{cl}\label{no555}
 If $\mathbb{P}_j=\emptyset,j\ge7$, and $\mathbb{P}_5\neq\emptyset$, then $|\mathbb{P}_5|\le2$. 
\end{cl}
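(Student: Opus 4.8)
The plan is to mimic closely the structure of the proofs of Claims~\ref{no775} and~\ref{no755}, where an excess of long paths was shown to force the clique $K$ to be non-maximum. Suppose for contradiction that there exist three paths $P_a,P_b,P_c\in\mathbb{P}_5$, say $P_a=(w_1,w_2,w_3;x_1,x_2)$, $P_b=(s_1,s_2,s_3;t_1,t_2)$, and $P_c=(q_1,q_2,q_3;r_1,r_2)$. First I would apply Corollary~\ref{xunivthreepaths} to the three internal vertices $w_2,s_2,q_2$ to obtain a single vertex $v_1\in N^I(v)$ with $v_1w_2,v_1s_2,v_1q_2\in E(G)$. The first main step is then to upgrade this to: $v_1$ is adjacent to \emph{every} clique vertex appearing on $P_a\cup P_b\cup P_c$, i.e.\ also $v_1w_i,v_1s_i,v_1q_i\in E(G)$ for $i\in\{1,3\}$. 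This is exactly the end-vertex argument already used in Claim~\ref{no755}: if, say, $v_1w_1\nin E(G)$, then by Claim~A one of $v_2w_1,v_3w_1$ is in $E(G)$; moreover $w_1$ must be adjacent to a vertex of $N^I(s_2)\setminus\{v_1\}$ (else \NC{s_2}{w_1} is a \K14) and to a vertex of $N^I(q_2)\setminus\{v_1\}$ (else \NC{q_2}{w_1} is a \K14), so \NV{w_1} contains an induced \K14, a contradiction. Symmetrically for $w_3,s_1,s_3,q_1,q_3$.

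The second main step is to show that $v_1$ is adjacent to \emph{all} of $K$, which contradicts the maximality of $K$ (since then $\{v_1\}\cup K$ is a larger clique) and hence rules out the third path. Take an arbitrary $w'\in K$ with $w'\notin\{w_1,w_2,w_3,s_1,s_2,s_3,q_1,q_2,q_3\}$ and suppose $v_1w'\nin E(G)$. By Claim~A, $w'$ is adjacent to $v_2$ or $v_3$. As in Claims~\ref{no775} and \ref{no755}, $w'$ must then be adjacent to a vertex of $N^I(w_2)\setminus\{v_1\}$, a vertex of $N^I(s_2)\setminus\{v_1\}$, and a vertex of $N^I(q_2)\setminus\{v_1\}$ — otherwise one of \NC{w_2}{w'}, \NC{s_2}{w'}, \NC{q_2}{w'} induces a \K14. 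But these together with the neighbour in $\{v_2,v_3\}$ exhibit four independent vertices in $N^I(w')$, so \NV{w'} induces a \K14, a contradiction. Hence $v_1w'\in E(G)$ for all $w'\in K$, giving the desired contradiction with maximality of $K$, so $|\mathbb{P}_5|\le 2$.

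The step I expect to be the genuine obstacle is the second one: one must be careful that the neighbours of $w'$ (resp.\ of an end vertex $w_1$) forced inside the distinct $N^I$-sets $N^I(w_2),N^I(s_2),N^I(q_2)$, together with the neighbour in $\{v_2,v_3\}$, really are pairwise non-adjacent so as to form an induced \K14. Since each of $w_2,s_2,q_2$ lies on a different path of $\mathbb{C}$ and the paths are vertex disjoint, the sets $N^I(w_2)\setminus\{v_1\}$, $N^I(s_2)\setminus\{v_1\}$, $N^I(q_2)\setminus\{v_1\}$ live in pairwise disjoint neighbourhoods and consist of independent-set vertices, so the four chosen vertices are indeed mutually non-adjacent; one also checks $v_2$ (or $v_3$) is not adjacent to those path vertices, else one finds an induced \K14 inside \NV{v} using $v$ as centre (this is the same observation exploited throughout this section). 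Modulo this routine but slightly delicate independence bookkeeping, the argument is a direct replay of the preceding claims, and no new idea is needed.
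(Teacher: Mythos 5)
Your proposal is correct and follows essentially the same route as the paper: obtain a common neighbour $v_1$ of $w_2,s_2,q_2$ via Claim~\ref{xunivtwopaths}/Corollary~\ref{xunivthreepaths}, then show that any $w'\in K$ with $v_1w'\nin E(G)$ (which must exist, or is forced to exist, by maximality of $K$) accumulates four pairwise non-adjacent neighbours --- one in $\{v_2,v_3\}$ by Claim~A and one each from the $N^I$-sets of $w_2,s_2,q_2$ (or the path-neighbour $x_1$ when $w'$ is an end vertex of a $\mathbb{P}_5$ path) --- yielding an induced $K_{1,4}$. The ``delicate independence bookkeeping'' you flag is actually automatic, since all candidate leaves lie in the independent set $I$ of the split partition.
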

\begin{proof}
For a contradiction assume that there exists paths $P_a,P_b,P_c\in \mathbb{P}_5$ such that $P_a=(w_1,w_2,w_3;x_1,x_2)$, $P_b=(s_1,s_2,s_3;t_1,t_2)$, and $P_c=(q_1,q_2,q_3;r_1,r_2)$.  
From Claim \ref{xunivtwopaths}, there exists $v_1\in N^I(v)$ such that $v_1w_2,v_1s_2,v_1q_2\in E(G)$. 
Since the clique $K$ is maximal, there exists $w'\in K$ such that $w'v_1\notin E(G)$.  Therefore, $w'$ is an end vertex of some path in $\mathbb{P}_i,i\in\{1,3,5\}$.  We see the following cases.\\
\emph{\bf {\em Case} 1:} $w'$ is an end vertex of a $P_5$.\\
Without loss of generality assume $w'=w_1$, i.e., $w_1v_1\nin E(G)$.  Note that $w_1v_2\in E(G)$ or $w_1v_3\in E(G)$.  Observe that either $w_1t_1\in E(G)$ or $w_1t_2\in E(G)$, otherwise, \NC{s_2}{w_1} induces a \K14.  Further, either $w_1r_1\in E(G)$ or $w_1r_2\in E(G)$, otherwise, \NC{q_2}{w_1} induces a \K14. Clearly, \NV{w_1} induces a \K14, and thus $w'\nin \mathbb{P}_5$.  \\
\emph{\bf {\em Case} 2:} $w'\in \mathbb{P}_3\cup \mathbb{P}_1$.\\
Note that $w'v_2\in E(G)$ or $w'v_3\in E(G)$.  Observe that either $w'x_1\in E(G)$ or $w'x_2\in E(G)$, otherwise, \NC{w_2}{w'} induces a \K14. 
Similarly, either $w't_1\in E(G)$ or $w't_2\in E(G)$, otherwise, \NC{s_2}{w'} induces a \K14.  Further, either $w'r_1\in E(G)$ or $w'r_2\in E(G)$, otherwise, \NC{q_2}{w'} induces a \K14. It follows that, \NV{w'}  has an induced \K14.
This contradicts the assumption that there exists three such paths $P_a,P_b,P_c$.  This completes the cases analysis and a proof. $\hfill\qed$
\end{proof}
\begin{cl}\label{HC553}
 If there exist $P_a,P_b\in\mathbb{P}_5,P_c\in\mathbb{P}_3$ and $G$ has no short cycles, then $G$ has a Hamiltonian cycle.
\end{cl} \vspace{-10pt}
\begin{proof}
Let $P_a,P_b\in \mathbb{P}_5, P_c\in \mathbb{P}_3$ such that $P_a=(w_1,w_2,w_3;x_1,x_2)$, $P_b=(s_1,s_2,s_3;t_1,t_2)$, and $P_c=(q_1,q_2;r_1)$.    From Claim \ref{no555}, it follows that there are no more $5$-vertex paths in $\mathbb{C}$ other than $P_a,P_b$.
By Claim \ref{xunivtwopaths}, there exists $v_1\in N^I(v)$ such that $v_1w_2,v_1s_2\in E(G)$.  Now we claim that $v_1q_1,v_1q_2\in E(G)$.  Suppose if $v_1q_1\nin E(G)$, then either $q_1v_2$ or $q_1v_3\in E(G)$.  Further, $q_1x_1\in E(G)$ or $q_1x_2\in E(G)$, otherwise \NC{w_2}{q_1} has an induced \K14.  Either $q_1t_1\in E(G)$ or $q_1t_2\in E(G)$, otherwise \NC{s_2}{q_1} has an induced \K14.  It follows that \NV{q_1} induces a \K14, a contradiction to the assumption that $v_1q_1\nin E(G)$.  Similarly, it is easy to see that $v_1q_2\in E(G)$.  Since the clique is maximal, there exists a vertex $w'\in K$ such that $w'v_1\nin E(G)$.  We observe the following cases depending on the choice of $w'$.
\begin{enumerate}[  ]
\item  \emph{\bf {\em Case} 1:} $w'\in P_a\cup P_b$.\\
Without loss of generality let $v_1w_3\nin E(G)$.  From Claim A, either $w_3v_2\in E(G)$ or $w_3v_3\in E(G)$.  Further, let us assume without loss of generality $w_3v_2\in E(G)$.  Now, observe that $w_3t_1\in E(G)$ or $w_3t_2\in E(G)$, otherwise \NC{s_2}{w_3} induces a \K14. 
Without loss of generality, we shall assume that $w_3t_2\in E(G)$.  We see the following cases.
\begin{enumerate}[  ]
\item \emph{\bf {\em Case} 1.1:} $v_1w_1\in E(G)$.\\
Note that $w_1$ is adjacent to one of the vertices in $N^I(w_3)$, otherwise \NC{w_3}{w_1} induces a \K14.  Observe that $q_i, i\in\{1,2\}$ is adjacent to one of the vertices in $N^I(w_3)$.  Thus \bb{$d^I(w_i)=d^I(q_j)=3, i\in\{1,2,3\}, j\in\{1,2\}$}.  
Now, if there exists $w''\in K$, $w''v_3\in E(G)$ such that $w''\notin K'=\{w_1,w_2,w_3,s_1,s_2,s_3,$ $q_1,q_2\}$, then $w''$ is an end vertex of some path $P_d\in \mathbb{P}_3\cup \mathbb{P}_1$, then we obtain the following path as a desired path.  $\overrightarrow{P_d}w'',v_3,v,v_2,w_3\overleftarrow{P_a}w_1,v_1,\overrightarrow{P_c},\overrightarrow{P_b}$.  If $w''\in K'$, then either $w''=s_1$ or \bb{$w''=s_3$}.  If $w''=s_1$, then observe that $(s_3\overleftarrow{P_b}s_1,v_3,v,v_2,w_3\overleftarrow{P_a}w_1,v_1,\overrightarrow{P_c})$ is a desired path.
\item \emph{\bf {\em Case} 1.2:}  $v_1w_1\nin E(G)$.\\
Note that $d^I(w_2)=d^(w_3)=d^I(s_2)=3$, and note that $w_1$ is adjacent to one of $v_2,v_3$ and also adjacent to one of $t_1,t_2$.  Further, if $w_1v_3\in E(G)$, then $w_1t_2\in E(G)$.  That is, either $w_1v_3,w_1t_2\in E(G)$ or $w_1v_2,w_1t_2\in E(G)$  or $w_1v_2,w_1t_1\in E(G)$.  We detail the cases as follows.
\begin{enumerate}[  ]
\item \emph{\bf {\em Case} 1.2.1:} $w_1v_3,w_1t_2\in E(G)$.\\
Note that $d^I(v_3)=d^I(v_2)=d^I(x_1)=d^I(x_2)=2$ and $S=\{v_2,v_3,x_1,x_2\}\cup N(S)$ has a short cycle.  Since $G$ has no short cycles, it follows that there exists $w''\in K$ such that for some $x\in S$, $w''x\in E(G)$.  If $w''$ is an end vertex of a path $P_d\neq P_a\neq P_b\neq P_c$, then one of the following is a desired path $P$. \\
If $w''v_3\in E(G)$, then $P=(\overrightarrow{P_d}w'',v_3,w_1\overrightarrow{P_a}w_3,v_2,v,v_1,\overrightarrow{P_c},\overrightarrow{P_b})$\\
If $w''v_2\in E(G)$, then $P=(\overrightarrow{P_d}w'',v_2,w_3\overleftarrow{P_a}w_1,v_3,v,v_1,\overrightarrow{P_c},\overrightarrow{P_b})$\\
If $w''x_1\in E(G)$, then $P=(\overrightarrow{P_d}w'',x_1,w_1,v_3,v,v_2,w_3,x_2,w_2,v_1,\overrightarrow{P_c},\overrightarrow{P_b})$\\
If $w''x_2\in E(G)$, then $P=(\overrightarrow{P_d}w'',x_2,w_3,v_2,v,v_3,w_1,x_1,w_2,v_1,\overrightarrow{P_c},\overrightarrow{P_b})$\\
If $w''\in P_b$, then the path $P_d$ could be replaced by the path $P_b$ and remove the last occurrence of $P_b$ to get the desired path in all of the above cases.
\item \emph{\bf {\em Case} 1.2.2:} $w_1v_2,w_1t_2\in E(G)$ or $w_1v_2,w_1t_1\in E(G)$.\\
Since $G$ is $2$-connected, there exists a vertex $w''\in K$ such that $w''$ is an end vertex of a path $P_d\neq P_a\neq P_b\neq P_c$ such that $w''v_3\in E(G)$, then $(\overrightarrow{P_d}w'',v,v_2,w_1\overrightarrow{P_a}w_3,t_2,s_3,s_1,t_1,s_2,v_1,\overrightarrow{P_c})$ is a desired path.  Note that $q_1,q_2$ are adjacent to a vertex in $N^I(w_3)$. Therefore $w''\nin P_c$.  If $w''\in P_b$, then we obtain the following observations.  If $s_1v_3\in E(G)$ and $s_3v_3\in E(G)$, then $(\overrightarrow{P_c},v_1,s_2,t_1,s_1,v_3,s_3,t_2,w_3\overleftarrow{P_a}w_1,v_2,v)$ is a desired path.  If $s_1v_3\in E(G)$ and $s_3v_3\nin E(G)$, then $(\overrightarrow{P_c},v_1,s_2,t_1,s_1,v_3,v,v_2,w_1\overrightarrow{P_a}w_3,t_2,s_3)$.  Finally we shall see the case that  $s_1v_3\nin E(G)$ and $s_3v_3\in E(G)$.  In this case, note that either $s_1v_1\in E(G)$ or $s_1v_2\in E(G)$.  If $s_1v_1\in E(G)$, then $(\overrightarrow{P_c},v_1,s_1\overrightarrow{P_b}s_3,v_3,v,v_2,w_3\overleftarrow{P_a}w_1)$ is a desired path.  If $s_1v_2\in E(G)$, then $(w_1\overrightarrow{P_a}w_3,v_2,s_1\overrightarrow{P_b}s_3,v_3,v,\overrightarrow{P_c})$ is a desired path. 
\end{enumerate}
\end{enumerate}
\item \emph{\bf {\em Case} 2:} $w'\nin P_a\cup P_b$.\\
In this case we assume that $v_1w\in E(G)$, $w\in \{w_i,s_i,q_j\}, i\in \{1,2,3\}, j\in \{1,2\}$.  Clearly $w'v_2\in E(G)$ or $w'v_3\in E(G)$.  Without loss of generality assume that $w'v_2\in E(G)$.  We have already shown that $w'\nin P_c$.  Thus $w'$ is an end vertex of a path $P_e$.  Since $G$ is 2-connected, $v_3$ is adjacent to end vertex of a path $P_f$.  Note that $P_f\neq P_e$, otherwise $C=(v_3,\overrightarrow{P_e},v_2,v,v_3)$ is a short cycle in $G$.  Thus $P=(\overrightarrow{P_e},v_2,v,v_3,\overrightarrow{P_f},\overrightarrow{P_a},v_1,\overrightarrow{P_b},\overrightarrow{P_c})$ is a desired path.  
Note that if $P_f$ is some paths among $P_a,P_b,P_c$, say $P_f=P_a$, then the desired path is $(\overrightarrow{P_e},v_2,v,v_3,\overrightarrow{P_f},\overrightarrow{P_b},v_1,\overrightarrow{P_c})$.  Similarly we could easily obtain if $P_f=P_b$ and $P_f=P_c$.
\end{enumerate}
This completes the case analysis and a proof.  $\hfill \qed$
\end{proof} 

\noindent
\textbf{Observation:}
In the following claims, we shall produce a cycle $C$ containing all the vertices of $N^I(v)$.  Consider a path $P_m\in \mathbb{C}$ which is not a subpath of $C$.  It is easy to observe that $P_m$ is adjacent to at least one vertex in $\{v_1,v_2,v_3\}$, say $v_1$.  Further, $(\overrightarrow{P_m},v_1\overrightarrow{C}v_1^-)$ is a desired path in $G$, where the vertices $v_1^-,v_1$ occur consecutively in $\overrightarrow{C}$.
\begin{cl}\label{333}
Let $P_a,P_b,P_c\in\mathbb{P}_3$, and $v_1\in N^I(v)$ such that $v_1$ is adjacent to end vertex of at least two paths in $P_a,P_b,P_c$, $|\mathbb{P}_5|\le1$, and $\mathbb{P}_j=\emptyset,~j\ge7$.  If $G$ has no short cycles, then $G$ has a Hamiltonian cycle.
\end{cl}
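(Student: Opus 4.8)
The plan is to produce a cycle $C$ of $G$ that passes through $v$ and all of $N^I(v)=\{v_1,v_2,v_3\}$ (and through $P_a,P_b$, and typically $P_c$); once such a $C$ is in hand, by the Observation stated just before this claim every path of $\mathbb{C}$ not lying on $C$ is adjacent to some $v_i$ and can be rerouted into a desired path, after which the members of $\mathbb{P}_1$ and the remaining clique vertices are joined by clique edges to obtain a Hamiltonian cycle of $G$. First I would normalise the situation: by the hypothesis, relabel so that $P_a=(w_1,w_2;x_1)$ with $v_1w_2\in E(G)$, $P_b=(s_1,s_2;t_1)$ with $v_1s_1\in E(G)$, and $P_c=(q_1,q_2;r_1)$; then $Q=(w_1,x_1,w_2,v_1,s_1,t_1,s_2)$ is a path with both ends in $K$ covering $V(P_a)\cup V(P_b)\cup\{v_1\}$. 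The whole problem now reduces to closing $Q$ up through $v,v_2,v_3$, using the ends $q_1,q_2$ of $P_c$ and, where necessary, a clique vertex lying outside the current structure; the single member of $\mathbb{P}_5$, if present, and all other components are dealt with only at the very end via the Observation and clique edges.

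For the closing construction I would case on how $v_2$ and $v_3$ see the \textit{anchor} clique vertices $w_1,s_2,q_1,q_2$. By Claim $A$ we have $N(\{v_1,v_2,v_3\})=K$, so each anchor is adjacent to some $v_i$, and $vv_2,vv_3\in E(G)$. In the favourable case one anchor is adjacent to $v_2$ and (the same or another) to $v_3$: then the path $(a,v_2,v,v_3,b)$ splices onto $Q$ and $P_c$ to close a cycle through $N^I(v)$, exactly as in the explicit computations of Claims \ref{HC553} and \ref{HC733}, and any coincidence of the free ends only makes the closure easier. Otherwise every anchor sees only $v_1$ among $\{v_1,v_2,v_3\}$; since $K$ is a maximum clique there is $w'\in K$ with $w'v_1\notin E(G)$, so by Claim $A$ $w'$ is adjacent to $v_2$ or $v_3$, and $w'$ is either an end vertex of a further path of $\mathbb{C}$ or a vertex of $\mathbb{P}_1$. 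I would then close $Q$ by routing through $w'$ together with a second clique neighbour of $v_2$ or of $v_3$, which exists because $G$ is $2$-connected (so $d(v_2),d(v_3)\ge 2$ and $v$ is only one of their neighbours).

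The main obstacle is the configuration in which $v_2,v_3$ (and possibly the degree-two centres $x_1,t_1,r_1$ of the three $3$-paths) have all their neighbours inside the vertices already used by $Q$, $P_c$ and $v$, so that none of the routings above can be completed. Here I would invoke the no-short-cycle hypothesis: if, say, $d(v_2)=d(v_3)=2$ with both clique neighbours of each of $v_2,v_3$ lying in $\{v,w_1,w_2,s_1,s_2,q_1,q_2\}$, then the bipartite subgraph $H$ on the degree-two independent vertices and their neighbours contains an induced cycle through some of $v,v_2,v_3,x_1,t_1,r_1$ whose clique vertices do not exhaust $K$ (as $|K|\ge|I|\ge 8$) — that is, a short cycle, contradicting the hypothesis. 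Hence in every surviving configuration some clique vertex outside the current structure is adjacent to $v_2$, to $v_3$, or to one of $x_1,t_1,r_1$, supplying the edge needed to close the cycle. So the proof splits into the favourable, the $w'$-reroute, and the short-cycle-forced cases, each finished by writing down an explicit cycle as in the preceding claims; the delicate part is purely the bookkeeping of checking that these cases are exhaustive and that the bipartite cycle produced in the last one is genuinely a short cycle.
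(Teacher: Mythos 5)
Your overall strategy is the same as the paper's: string $P_a$, $P_b$ together through $v_1$ into a spine $Q$ with both ends in $K$, close $Q$ up through $v,v_2,v_3$ by case analysis on which clique vertices see $v_2$ and $v_3$ (using Claim A, $2$-connectedness, and the no-short-cycle hypothesis to manufacture the needed attachment), and finish via the Observation and clique edges. However, as written the proposal has two concrete exhaustiveness problems, and since the entire content of this claim \emph{is} the exhaustive case analysis, they count as genuine gaps rather than deferred bookkeeping. First, in your second branch you assert that the vertex $w'$ with $w'v_1\notin E(G)$ (and likewise the second clique neighbour of $v_2$ or $v_3$) must be ``an end vertex of a further path of $\mathbb{C}$ or a vertex of $\mathbb{P}_1$.'' This misses the middle clique vertex $y_2$ of the (unique, by hypothesis) path in $\mathbb{P}_5$: Claim \ref{xunivtwopaths} only forces \emph{some} $v_i\in N^I(v)$ to dominate $y_2$, not the particular $v_1$ singled out by the hypothesis, so $y_2v_1\notin E(G)$ is possible. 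Attaching $v_2$ or $v_3$ at an interior vertex of a $5$-path cannot be handled by the generic ``reroute through an end vertex'' move; the paper spends two separate sub-cases (the $q^*=y_2$ sub-case and its Case 2.3) on exactly this configuration, splitting the $5$-path at $y_2$ and re-examining the adjacencies of $w_1,\ldots,s_2$ into $N^I(y_2)$.

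Second, in your ``favourable'' branch the remark that ``any coincidence of the free ends only makes the closure easier'' is backwards. If the only attachments of $v_2$ and $v_3$ outside $Q$ are the two ends of one and the same component (say $q_1v_3,q_2v_2\in E(G)$ for $P_c$, or the two ends of the $5$-path), then $(v,v_2,q_2,r_1,q_1,v_3,v)$ is precisely a candidate short cycle, and one cannot close $Q$ at all without first invoking the no-short-cycle hypothesis to extract an additional clique vertex adjacent to one of $v_2,v_3,r_1$ and then redoing the routing around that vertex. This is the paper's Case 2.2 and is the hardest part of the proof, not the easiest. Your third paragraph does contain the right tool (degree-two vertices forcing a short cycle), but it is aimed at the situation where \emph{no} anchor sees $v_2$ or $v_3$, whereas the short-cycle argument is equally needed in the ``favourable'' branch when both attachments land on a single component. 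Until these two configurations are folded into the case tree with explicit closures, the argument is not complete.
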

\begin{proof}
Let $P_a=(w_1,w_2;x_1)$, $P_b=(s_1,s_2;t_1)$, $P_c=(q_1,q_2;r_1)$.  Without loss of generality, assume that $w_2v_1,s_1v_1\in E(G)$.
Since the clique is maximal, there exists $w'\in K$ such that $w'v_1\nin E(G)$.  We see the following cases. 
\begin{enumerate}[  ]
\item  \emph{\bf {\em Case} 1:}  $w'$ is one among $w_1,s_2$.\\
 Without loss of generality, let us assume that $v_1w_1\nin E(G)$.  Note that either $w_1v_2\in E(G)$ or $w_1v_3\in E(G)$.  Let us assume without loss of generality that $w_1v_2\in E(G)$.  Since $G$ is $2$-connected, $v_3$ is adjacent to at least one more vertex $w''\in K$.  Depending on the possibilities for $w''$ we see the following cases.
\begin{enumerate}[  ]
\item{ \emph{\bf {\em Case} 1.1:} $w''=w_1$.  Note that $q_1$ is adjacent to one of the vertices in $\{v_2,v_3,x_1\}$ otherwise \NC{w_1}{q_1} has an induced \K14.  
			If $q_1v_2\in E(G)$, then $P=(\overleftarrow{P_c}q_1,v_2,v,v_3,w_1,x_1,w_2,v_1,s_1,t_1,s_2)$ is a desired path.
			If $q_1v_3\in E(G)$, then $P=(\overleftarrow{P_c}q_1,v_3,v,v_2,w_1,x_1,w_2,v_1,s_1,t_1,s_2)$ is a desired path.
			
\bb{If $q_1x_1\in E(G)$, then note that $d(v_2)=d(v_3)=2$ and $N(v_2)=N(v_3)$.  Since $G$ has no short cycles, there exists a vertex $z'$ such that $z'v_2$ or $z'v_3$ is in $E(G)$. If $z'=w_2$, then we observe the following.  If $w_2v_2\in E(G)$, then $P=(\overleftarrow{P_c}q_1,x_1,w_1,v_3,v,v_2,w_2,v_1,s_1,t_1,s_2)$ is a desired path. If $w_2v_3\in E(G)$, then $P=(\overleftarrow{P_c}q_1,x_1,w_1,v_2,v,v_3,w_2,v_1,s_1,t_1,s_2)$ is a desired path. If $z'\in\{s_1,s_2\}$, then without loss of generality, we consider the case $z'=s_1$. If $s_1v_2\in E(G)$, then $P=(\overleftarrow{P_b}s_1,v_2,w_1,v_3,v,v_1,w_2,x_1,q_1\overrightarrow{P_c})$ is a desired path. If $s_1v_3\in E(G)$, then $P=(\overleftarrow{P_b}s_1,v_3,w_1,v_2,v,v_1,w_2,x_1,q_1\overrightarrow{P_c})$ is a desired path. If $z'\in\{q_1,q_2\}$ or $z'$ is an end vertex of a path $P_d\nin \{P_b,P_c\}$, then without loss of generality, we consider the case $z'=q_1$.  If $q_1v_2\in E(G)$, then $P=(\overleftarrow{P_c}q_1,v_2,v,v_3,w_1,x_1,w_2,v_1,s_1\overrightarrow{P_b})$ is a desired path.  If $q_1v_3\in E(G)$, then $P=(\overleftarrow{P_c}q_1,v_3,v,v_2,w_1,x_1,w_2,v_1,s_1\overrightarrow{P_b})$ is a desired path.  
}
}	
		\item \emph{\bf {\em Case} 1.2:} $w''=w_2$.
			Note that $d^I(v_3)=d^I(v_2)=d^I(x_1)=2$.  Since $G$ has no short cycles, at least one vertex of $v_3,v_2,x_1$ has at least one more adjacency in $K$.  If $w_1v_3\in E(G)$, then by \emph{Case 1.1}, there exists a desired path.  In all further possibilities, we produces a cycle $C$ containing all the vertices in $N^I(v)$, and path $P_c$ not in $C$.  Thus by previous observation, there exists a desired path in $G$.  \\
			If $s_2v_3\in E(G)$, then $C=(s_2,t_1,s_1,v_1,w_2,x_1,w_1,v_2,v,v_3,s_2)$.\\
			If $s_2v_2\in E(G)$, then $C=(s_2,t_1,s_1,v_1,v,v_3,w_2,x_1,w_1,v_2,s_2)$.\\			
			If $s_2x_1\in E(G)$, then $C=(s_2,t_1,s_1,v_1,w_2,v_3,v,v_2,w_1,x_1,s_2)$.\\
			If $s_1$ is adjacent to a vertex in $\{v_3,v_2,x_1\}$, then consider the adjacency of $s_2$.  That is,
			$s_2$ is adjacent to one of $v_1,v_2,v_3$.  Since we have seen the case when $s_2v_3\in E(G)$, and $s_2v_2\in E(G)$, the remaining case is when $s_2v_1\in E(G)$. \\
			If $s_2v_1,s_1v_3\in E(G)$, then  $C=(s_2,t_1,s_1,v_3,v,v_2,w_1,x_1,w_2,v_1,s_2)$.\\
			If $s_2v_1,s_1v_2\in E(G)$, then  $C=(s_2,t_1,s_1,v_2,w_1,x_1,w_2,v_3,v,v_1,s_2)$.\\
			If $s_2v_1,s_1x_1\in E(G)$, then  $C=(s_2,t_1,s_1,x_1,w_1,v_2,v,v_3,w_2,v_1,s_2)$.\\
			If there exists a vertex $w^*\nin \{w_1,s_1,s_2\}$ adjacent to a vertex in $\{v_3,v_2,x_1\}$, then note that $w^*$ is an end vertex of some path $P_d$.  We get desired path as follows.\\
			If $w^*x_1\in E(G)$, then $P=(\overleftarrow{P_d}w^*,x_1,w_1,v_2,v,v_3,w_2,v_1,s_1,t_1,s_2)$ is a desired path.\\
			If $w^*v_2\in E(G)$, then $P=(\overleftarrow{P_d}w^*,v_2,w_1,x_1,w_2,v_3,v,v_1,s_1,t_1,s_2)$ is a desired path.\\
			If $w^*v_3\in E(G)$, then $P=(\overleftarrow{P_d}w^*,v_3,v,v_2,w_1,x_1,w_2,v_1,s_1,t_1,s_2)$ is a desired path.
		\item \emph{\bf {\em Case} 1.3:} $w''=s_1$.
			Here $N^I(s_1)=\{v_1,t_1,v_3\}$.  Clearly, $q_1$ is adjacent to one of the vertices in $\{v_1,v_2,v_3\}$.\\
			If $q_1v_1\in E(G)$, then $P=(\overleftarrow{P_c}q_1,v_1,w_2,x_1,w_1,v_2,v,v_3,s_1,t_1,s_2)$ is a desired path.\\
			If $q_1v_2\in E(G)$, then $P=(\overleftarrow{P_c}q_1,v_2,w_1,x_1,w_2,v_1,v,v_3,s_1,t_1,s_2)$ is a desired path.\\
			If $q_1v_3\in E(G)$, then $P=(\overleftarrow{P_c}q_1,v_3,v,v_2,w_1,x_1,w_2,v_1,s_1,t_1,s_2)$ is a desired path.			
		\item \emph{\bf {\em Case} 1.4:} $w''=s_2$. \\
			In this case, note that the path $P_c$ is not in the cycle 
			$C=(s_2,t_1,s_1,v_1,w_2,x_1,w_1,v_2,v,v_3,s_2)$, and by previous Observation, we could easily get a desired path in $G$.
		\item \emph{\bf {\em Case} 1.5:} $w''\nin\{w_1,w_2,s_1,s_2\}$.
			Clearly $w''$ is an end vertex of some path $P_d$, and we obtain
			$P=(\overrightarrow{P_d}w'',v_3,v,v_2,w_1,x_1,w_2,v_1,s_1,t_1,s_2)$ is a desired path. 
\end{enumerate}
 \item  \emph{\bf {\em Case} 2:}  $w'$ is in $K\setminus \{w_1,s_2\}$.\\
Let $P_e=(y_1,y_2,y_3;z_1,z_2)$.  In this case note that for every $u\in W=\{w_1,w_2,s_1,s_2\}$, $v_1u\in E(G)$.  Since the clique is maximal, there exists $w^*\in K$ such that $v_1w^*\nin E(G)$, and $w^*$ is adjacent to one of $v_2,v_3$.  Without loss of generality, let $w^*v_2\in E(G)$.  Now $v_3$ is adjacent to at least one more vertex $s^*\in K$.  We see the following cases.
 \begin{enumerate}[  ]
\item \emph{\bf {\em Case} 2.1:} $w^*$ and $s^*$ are end vertices of two different paths, say $P_n,P_m$, respectively.\\
In this case, if $P_m\neq P_a$ and $P_m\neq P_b$, then $\overrightarrow{P_a},v_1,\overrightarrow{P_b},\overrightarrow{P_n}w^*,v_2,v,v_3,s^*\overleftarrow{P_m}$ is a desired path.
If $P_m$ is one among $P_a,P_b$ say $v_3w_1\in E(G)$, then $\overrightarrow{P_b},v_1,w_2,x_1,w_1,v_3,v,v_2,w^*\overleftarrow{P_n}$ is a desired path.
\item \emph{\bf {\em Case} 2.2:} $w^*$ and $s^*$ are end vertices of the same path, say $P_n$.
\begin{enumerate}[  ]
\item \emph{\bf {\em Case} 2.2.1:} $P_n \in \mathbb{P}_3$, say $P_n=P_c$, i.e., $q_1v_3,q_2v_2\in E(G)$. \\
Now it is easy to see that $d^I(v_2)=d^I(v_3)=d^I(r_1)=2$.  Since $G$ has no short cycles there exists $q^*\in K$ such that $q^*$ is adjacent to at least one of $v_2,v_3,r_1$.  If $q^*\in \{w_1,w_2,s_1,s_2\}$, then we obtain the following desired paths.
Without loss of generality let $q^*=w_1$.  \\
If $w_1r_1\in E(G)$, then $P_1=(\overrightarrow{P_b},v_1,w_2,x_1,w_1,r_1,q_2,v_2,v,v_3,q_1)$.\\
If $w_1v_2\in E(G)$, then $P_2=(\overrightarrow{P_b},v_1,w_2,x_1,w_1,v_2,v,v_3,q_1,r_1,q_2)$. \\
If $w_1v_3\in E(G)$, then $P_3=(\overrightarrow{P_b},v_1,w_2,x_1,w_1,v_3,v,v_2,q_2,r_1,q_1)$.
Note that if $q^*\in \{q_1,q_2\}$, say $q_1$, then $q_1v_2\in E(G)$.  Further, $d^I(q_1)=3$, and for every $u\in \{w_1,w_2,s_1,s_2\}$, $u$ is adjacent to one of $v_3,v_2,r_1$.  It follows that the above paths $P_1,P_2,P_3$ will be the desired paths.  
A symmetric argument holds if $q^*=q_2$.
Finally, if $q^*$ is some vertex other than $w_1,w_2,s_1,s_2,q_1,q_2$, then observe that $q^*$ is either an end vertex of some path $P_d$ or $q^*=y_2$, the middle vertex of path $P_e$.
 If $q^*$ is an end vertex of $P_d$, then we obtain the following desired paths.\\
 If $q^*r_1\in E(G)$, then $P_1=(\overrightarrow{P_b},v_1,w_2,x_1,w_1,\overrightarrow{P_d}q^*,r_1,q_2,v_2,v,v_3,q_1)$.\\
 If $q^*v_2\in E(G)$, then $P_2=(\overrightarrow{P_b},v_1,w_2,x_1,w_1,\overrightarrow{P_d}q^*,v_2,v,v_3,q_1,r_1,q_2)$. \\
 If $q^*v_3\in E(G)$, then $P_3=(\overrightarrow{P_b},v_1,w_2,x_1,w_1,\overrightarrow{P_d}q^*,v_3,v,v_2,q_2,r_1,q_1)$.
 Now, if $q^*=y_2$, then note that $d^I(y_2)=3$.  It follows that $w_1$  is adjacent to a vertex in $N^I(y_2)$.  Since we have already seen the case when $q^*=w_1$, we shall see the remaining cases as follows.\\
 If $w_1z_1,y_2r_1\in E(G)$, then $P=(y_1,z_1,w_1\overrightarrow{P_a},v_1,\overrightarrow{P_b},y_3,z_2,y_2,r_1,q_2,v_2,v,v_3,q_1)$.\\
 If $w_1z_1,y_2v_2\in E(G)$, then $P=(y_1,z_1,w_1\overrightarrow{P_a},v_1,\overrightarrow{P_b},y_3,z_2,y_2,v_2,v,v_3,q_1,r_1,q_2)$.\\
 If $w_1z_1,y_2v_3\in E(G)$, then $P=(y_1,z_1,w_1\overrightarrow{P_a},v_1,\overrightarrow{P_b},y_3,z_2,y_2,v_3,v,v_2,q_2,r_1,q_1)$.\\
 If $w_1z_2,y_2r_1\in E(G)$, then $P=(y_3,z_2,w_1\overrightarrow{P_a},v_1,\overrightarrow{P_b},y_1,z_1,y_2,r_1,q_2,v_2,v,v_3,q_1)$.\\
 If $w_1z_2,y_2v_2\in E(G)$, then $P=(y_3,z_2,w_1\overrightarrow{P_a},v_1,\overrightarrow{P_b},y_1,z_1,y_2,v_2,v,v_3,q_1,r_1,q_2)$.\\
 If $w_1z_2,y_2v_3\in E(G)$, then $P=(y_3,z_2,w_1\overrightarrow{P_a},v_1,\overrightarrow{P_b},y_1,z_1,y_2,v_3,v,v_2,q_2,r_1,q_1)$.
\item \emph{\bf {\em Case} 2.2.2:} $P_n \in \mathbb{P}_5$, say $P_n=P_e$, i.e., $y_1v_3,y_3v_2\in E(G)$.\\
Clearly, $d^I(v_2)=d^I(v_3)=d^I(z_1)=d^I(z_2)=2$.  Since $G$ has no short cycles there exists $q^*\in K$ such that $q^*$ is adjacent to at least one of $v_2,v_3,z_1,z_2$.  If $q^*\in \{w_1,w_2,s_1,s_2\}$, then we obtain the following desired paths.
Without loss of generality let $q^*=w_1$.  \\
If $w_1v_2\in E(G)$, then $P_1=(\overrightarrow{P_b},v_1,w_2,x_1,w_1,v_2,v,v_3,y_1\overrightarrow{P_e})$.\\
If $w_1v_3\in E(G)$, then $P_2=(\overrightarrow{P_b},v_1,w_2,x_1,w_1,v_3,v,v_2,y_3\overleftarrow{P_e})$.\\
If $w_1z_1\in E(G)$, then $P_3=(\overrightarrow{P_b},v_1,w_2,x_1,w_1,z_1\overrightarrow{P_e}y_3,v_2,v,v_3,y_1)$.\\
If $w_1z_2\in E(G)$, then $P_4=(\overrightarrow{P_b},v_1,w_2,x_1,w_1,z_2,y_3,v_2,v,v_3,y_1\overrightarrow{P_e}y_2)$.\\
If $q^*\in \{y_1,y_2,y_3\}$, say $y_1$ then note that $d^I(y_1)=3$.  Since \NC{y_1}{w_1} has no induced \K14, $w_1$ is adjacent to some vertices in $v_2,v_3,z_1,z_2$, and we obtain the desired paths $P_1,P_2,P_3,P_4$, same as above. 
If $q^*\nin \{w_1,w_2,s_1,s_2,y_1,y_2,y_3\}$, then we obtain the following desired paths.  Note that $q^*$ is an end vertex of some path $P_d\in \mathbb{P}_3$. \\
 If $q^*v_2\in E(G)$, then $P_1=(\overrightarrow{P_b},v_1,w_2,x_1,w_1,\overrightarrow{P_d}q^*,v_2,v,v_3,y_1\overrightarrow{P_e})$.\\
 If $q^*v_3\in E(G)$, then $P_2=(\overrightarrow{P_b},v_1,w_2,x_1,w_1,\overrightarrow{P_d}q^*,v_3,v,v_2,y_3\overleftarrow{P_e})$.\\
 If $q^*z_1\in E(G)$, then $P_3=(\overrightarrow{P_b},v_1,w_2,x_1,w_1,\overrightarrow{P_d}q^*,z_1\overrightarrow{P_e}y_3,v_2,v,v_3,y_1)$.\\
 If $q^*z_2\in E(G)$, then $P_4=(\overrightarrow{P_b},v_1,w_2,x_1,w_1,\overrightarrow{P_d}q^*,z_2,y_3,v_2,v,v_3,y_1\overrightarrow{P_e}y_2)$.
\end{enumerate} 
\item {\bf {\em Case} 2.3:} one of $w^*$ and $s^*$ is adjacent to $y_2$.\\
Without loss of generality let us assume $v_2y_2\in E(G)$.  Note that $d^I(y_2)=3$.  We now claim that for every $u\in \{w_1,w_2,s_1,s_2\}$, $uv_3\nin E(G)$.  Suppose there exists an adjacency for $v_3$ in $P_a\cup P_b$, say $v_3w_1\in E(G)$, then \NC{y_2}{w_1} induces a \K14.  Therefore, the vertex $s^*$ is such that $s^*\nin \{w_1,w_2,s_1,s_2\}$, $q^*v_3\in E(G)$.  
We have the following possibilities for $s^*$.
\begin{enumerate}[  ]
\item \emph{\bf {\em Case} 2.3.1:} $s^*\in \{y_1,y_3\}$.\\
If $v_3y_1\in E(G)$, then we obtain the desired path as follows.  Now $d^I(v_2)=d^I(v_3)=d^I(z_1)=2$.  Since $G$ has no short cycles, there exists a vertex $q^*\in K$ such that $q^*$ is adjacent to at least one of $v_2,v_3,z_1$.  
If $q^*\in \{w_1,w_2,s_1,s_2\}$, then without loss of generality, let us assume that $q^*=w_1$.  We already observed that $w_1v_3\nin E(G)$.  Thus if $w_1v_2\in E(G)$, then $P_1=(\overrightarrow{P_b},v_1,w_2,x_1,w_1,v_2,v,v_3,y_1\overrightarrow{P_e}y_3)$.\\
If $w_1z_1$, then $P_2=(\overrightarrow{P_b},v_1,w_2,x_1,w_1,z_1,y_1,v_3,v,v_2,y_2,z_2,y_3)$.\\
If $q^*=y_1$, then $y_1v_2\in E(G)$, $d^I(y_1)=3$, and $w_1$ is adjacent to at least a vertex in $N^I(y_1)$.  Note that $w_1v_3\nin E(G)$, and for the other two possibilities, we obtain the desired paths $P_1,P_2$ same as above. 
If $q^*=y_3$.  Note that all the vertices in $W=\{w_1,w_2,s_1,s_2\}$ are adjacent to a vertex in $N^I(y_2)$.  Since the case in which vertices in $W$ are adjacent to $v_2,z_1$ are already analysed, without loss of generality we shall assume that for every $u\in \{w_1,w_2,s_1,s_2\}$, $uz_2\in E(G)$.  \\
If $y_3v_3\in E(G)$, then $P=(\overrightarrow{P_b},v_1,w_2,x_1,w_1,z_2,y_3,v_3,v,v_2,y_2,z_1,y_1)$.\\
If $y_3v_2\in E(G)$, then $P=(\overrightarrow{P_b},v_1,w_2,x_1,w_1,z_2,y_3,v_2,v,v_3,y_1,z_1,y_2)$.\\
If $y_3z_1\in E(G)$, then $P=(\overrightarrow{P_b},v_1,w_2,x_1,w_1,z_2,y_3,z_1,y_2,v_2,v,v_3,y_1)$.\\
Now we shall consider $q^*\nin \{w_1,w_2,s_1,s_2,y_1,y_3\}$.  Here also, similar to the previous argument we shall assume that for every $u\in \{w_1,w_2,s_1,s_2\}$, $uz_2\in E(G)$.  Note that $q^*$ is an end vertex of some path $P_d\in \mathbb{P}_3$. \\
If $q^*v_3\in E(G)$, then $P=(\overrightarrow{P_b},v_1,w_2,x_1,w_1,z_2,y_3,\overrightarrow{P_d}q^*,v_3,v,v_2,y_2,z_1,y_1)$.\\
If $q^*v_2\in E(G)$, then $P=(\overrightarrow{P_b},v_1,w_2,x_1,w_1,z_2,y_3,\overrightarrow{P_d}q^*,v_2,v,v_3,y_1,z_1,y_2)$.\\
If $q^*z_1\in E(G)$, then $P=(\overrightarrow{P_b},v_1,w_2,x_1,w_1,z_2,y_3,\overrightarrow{P_d}q^*,z_1,y_2,v_2,v,v_3,y_1)$.\\
If $v_3y_3\in E(G)$, then the case is symmetric. 
\item \emph{\bf {\em Case} 2.3.2:} $s^*\nin \{y_1,y_3\}$.\\
Note that $s^*$ is an end vertex of a path $P_f\in \mathbb{P}_3$.
Note that $w_1$ is adjacent to at least a vertex in $N^I(y_2)$.  We obtain the desired paths as follows.\\
If $w_1v_2\in E(G)$, then $P=(\overrightarrow{P_b},v_1,w_2,x_1,w_1,v_2,v,v_3,q^*\overrightarrow{P_f},\overrightarrow{P_e})$.\\
If $w_1z_1\in E(G)$, then $P=(\overrightarrow{P_b},v_1,w_2,x_1,w_1,z_1,y_1,y_3,z_2,y_2,v_2,v,v_3,q^*\overrightarrow{P_f})$.\\
If $w_1z_2\in E(G)$, then $P=(\overrightarrow{P_b},v_1,w_2,x_1,w_1,z_2,y_3,y_1,z_1,y_2,v_2,v,v_3,q^*\overrightarrow{P_f})$.\\
\end{enumerate}


 \end{enumerate}
\end{enumerate} 
\vspace{-16pt}
This completes the case analysis and the proof. $\hfill\qed$
\end{proof}

\begin{cl}\label{HC5333}
 If there exist $P_a\in\mathbb{P}_5,P_b,P_c,P_d\in\mathbb{P}_3$ and $G$ has no short cycles, then $G$ has a Hamiltonian cycle.
\end{cl}
\begin{proof}
Note that there are four paths and thus there exists a vertex in $N^I(v)$, adjacent to at least two different paths.  Without loss of generality, let us assume that the end vertices of two different paths are adjacent to $v_1\in N^I(v)$.  If two paths, say $P_b,P_c\in \mathbb{P}_3$ are adjacent to $v_1$, then by Claim \ref{333}, $G$ has a Hamiltonian cycle.  On the other hand, we shall assume that no such two paths exists in $\mathbb{P}_3$.  Therefore, we could assume that the end vertices of $P_b,P_c,P_d$ are adjacent to $v_1,v_2,v_3$, respectively.  Now end vertices of $P_a$ is adjacent to a vertex in $N^I(v)$, say $v_1$.  Observe that $P=(\overrightarrow{P_a},v_1,\overrightarrow{P_b},\overrightarrow{P_c},v_2,v,v_3,\overrightarrow{P_d})$ is a desired path in $G$.  This completes the proof.  $\hfill\qed$

\end{proof} 

\begin{cl}\label{HC33333}
 If $\mathbb{P}_j=\emptyset,j\ge5$ and there exists $P_a,P_b,P_c,P_d,P_e\in\mathbb{P}_3$ and $G$ has no short cycles, then $G$ has a Hamiltonian cycle.
\end{cl}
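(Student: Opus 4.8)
The plan is to reduce the statement directly to Claim~\ref{333} by a short counting argument. First I would invoke Claim~A: since $\Delta^I=3$ and $G$ is $K_{1,4}$-free, $N(N^I(v))=K$, so every clique vertex --- in particular every end vertex of every path in $\mathbb{C}$ --- is adjacent to at least one of $v_1,v_2,v_3$. Applying this to the five paths $P_a,P_b,P_c,P_d,P_e\in\mathbb{P}_3$, for each of them I can single out a vertex of $N^I(v)$ adjacent to one of its two end vertices. As there are five paths but only three vertices in $N^I(v)$, the pigeonhole principle forces one vertex of $N^I(v)$, say $v_1$ after relabelling, to be adjacent to an end vertex of two distinct paths among the five; call these two paths $P$ and $Q$.

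Next I would choose a third path $R\in\{P_a,P_b,P_c,P_d,P_e\}\setminus\{P,Q\}$, which exists because there are five of them. I then observe that the triple $P,Q,R$ satisfies every hypothesis of Claim~\ref{333}: $v_1$ is adjacent to an end vertex of at least two of $P,Q,R$ (namely $P$ and $Q$); the hypothesis $\mathbb{P}_j=\emptyset$ for $j\ge5$ gives both $|\mathbb{P}_5|\le1$ and $\mathbb{P}_j=\emptyset$ for $j\ge7$; and $G$ has no short cycles by assumption. Hence Claim~\ref{333} applies and produces a Hamiltonian cycle in $G$, which is exactly the conclusion to be proved.

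I do not expect a genuine obstacle here. All the constructive content --- threading a cycle through $v$, $v_1$, $v_2$, $v_3$ together with the three-vertex paths, and then absorbing into that cycle every path of $\mathbb{C}$ not yet used via the Observation preceding Claim~\ref{333} --- has already been carried out inside the proof of Claim~\ref{333}. The only points I would be careful to spell out are the pigeonhole step (and the appeal to Claim~A that underlies it) and the fact that the hypotheses of Claim~\ref{333} transfer verbatim; the relabelling of the $v_i$ is harmless, since Claim~\ref{333} already distinguishes one of them as the vertex meeting two of its three paths.
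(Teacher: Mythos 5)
Your proposal is correct and follows essentially the same route as the paper: the paper's own proof is exactly the pigeonhole observation that among five paths in $\mathbb{P}_3$ at least two must be adjacent (via Claim A) to the same vertex of $N^I(v)$, whereupon the premise of Claim \ref{333} is satisfied. Your version merely spells out the appeal to Claim A and the transfer of the hypotheses $|\mathbb{P}_5|\le 1$ and $\mathbb{P}_j=\emptyset$ for $j\ge 7$, which the paper leaves implicit.
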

\begin{proof}
Note that there exists five paths and at least two those paths are adjacent to one of $v_1,v_2,v_3$.  Observe that the premise of Claim \ref{333} is satisfied, and therefore, $G$ has a Hamiltonian cycle.
$\hfill\qed$
\end{proof} 
\begin{theorem}
Let $G$ be a $2$-connected, $K_{1,4}$-free split graph with $|K|\ge|I|\ge8$. 
$G$ has a Hamiltonian cycle if and only if there are no induced short cycles in $G$.  
Further, finding such a cycle is polynomial-time solvable.
\end{theorem}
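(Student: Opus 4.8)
The plan is to split the biconditional into its two directions and, for the forward (sufficiency) direction, to organise everything around the value of $\Delta^I$. Necessity is the same toughness argument already used in Theorem~\ref{deltale2}: an induced short cycle $D$ with $S=V(D)\cap K$ certifies $c(G-S)>|S|$ (every $I$-vertex of $D$ is isolated in $G-S$, and since $V(K)\setminus V(D)\neq\emptyset$ there is a further component), so $G$ is not $1$-tough and hence not Hamiltonian.

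For sufficiency, assume $G$ has no induced short cycle. Since $G$ is $K_{1,4}$-free, $\Delta^I\le 3$. When $\Delta^I=1$, Observation~\ref{del1hamil} finishes it; when $\Delta^I=2$, Theorem~\ref{deltale2} finishes it. The substantive case is $\Delta^I=3$, where I would fix $v\in K$ with $d^I(v)=3$, write $N^I(v)=\{v_1,v_2,v_3\}$, set $H=G-N^I(v)$ (so $\Delta^I_H\le 2$ by Claim~$B$), and feed $H$ into the constructive proof of Theorem~\ref{deltale2}; the hypothesis that $G$ has no short cycle is what lets that construction go through on $H$, yielding the vertex-disjoint path system $\mathbb{C}=\mathbb{P}_1\cup\mathbb{P}_3\cup\mathbb{P}_5\cup\cdots$ spanning $V(H)$, each path alternating between $K$ and $I$ with both endpoints in $K$. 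The one quantitative input I would extract here is that $|I|\ge 8$ and $|N^I(v)|=3$ leave at least five vertices of $I$ on paths of $\mathbb{C}$, i.e. $\sum_{j\ge 3}\tfrac{j-1}{2}\,|\mathbb{P}_j|\ge 5$.

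The heart of the argument is then a finite case analysis on the profile $(|\mathbb{P}_3|,|\mathbb{P}_5|,|\mathbb{P}_7|,\dots)$, in which I would simply invoke the claims already established, longest paths first. Claim~\ref{no13} rules out $\mathbb{P}_j$ for $j\ge 13$; if $\mathbb{P}_{11}\neq\emptyset$ apply Claim~\ref{HC11}; if $\mathbb{P}_9\neq\emptyset$ apply Claim~\ref{HC931}. If the longest paths have length $7$, Claim~\ref{no775} gives $|\mathbb{P}_7|\le 2$, and then Claim~\ref{HC771} ($|\mathbb{P}_7|=2$), or Claim~\ref{no755} with Claim~\ref{HC75} ($|\mathbb{P}_7|=1,\ |\mathbb{P}_5|=1$), or Claim~\ref{HC733} ($|\mathbb{P}_7|=1,\ \mathbb{P}_5=\emptyset$, where the inequality above forces $|\mathbb{P}_3|\ge 2$) completes it. If the longest paths have length $5$, Claim~\ref{no555} gives $|\mathbb{P}_5|\le 2$, and then Claim~\ref{HC553} ($|\mathbb{P}_5|=2$, forcing $|\mathbb{P}_3|\ge 1$) or Claim~\ref{HC5333} ($|\mathbb{P}_5|=1$, forcing $|\mathbb{P}_3|\ge 3$) completes it. If every path has length $\le 3$, the inequality forces $|\mathbb{P}_3|\ge 5$ and Claim~\ref{HC33333} completes it. For the algorithmic statement I would note that $2$-connectedness and the existence of a short cycle are checkable in polynomial time (in the bipartite graph of the short-cycle definition every $V_a$-vertex has degree $2$, so its cycle components are read off directly), and that, when no short cycle exists, every object used above --- the split partition, the vertex $v$, the path system $\mathbb{C}$ from Theorem~\ref{deltale2}, the case-dependent assembly of the desired paths, and their joining through clique edges --- is produced by an explicit polynomial-time procedure.

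The main obstacle I anticipate is not any single hard step but confirming that this case split is exhaustive: one must check that in each ``leftover'' configuration the counting bound $\sum_{j\ge 3}\tfrac{j-1}{2}|\mathbb{P}_j|\ge 5$ really forces exactly the number of $3$-vertex paths demanded as hypotheses by Claims~\ref{HC733}, \ref{HC553}, \ref{HC5333} and \ref{HC33333}, and that the ``no two long paths together'' claims (\ref{no13}, \ref{no775}, \ref{no755}, \ref{no555}) leave no gap between the length ranges. Once that bookkeeping is in place, no new graph-theoretic work is needed beyond what the individual claims already supply.
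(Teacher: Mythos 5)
Your proposal is correct and follows essentially the same route as the paper: the paper's own proof of this theorem is simply ``necessity is trivial; sufficiency follows from the previous claims,'' and your write-up is a faithful assembly of exactly those claims, organised by $\Delta^I$ and then by the longest path length in $\mathbb{C}$. The only addition is your explicit counting bound $\sum_{j\ge 3}\frac{j-1}{2}|\mathbb{P}_j|\ge 5$ verifying exhaustiveness of the case split, which the paper leaves implicit inside the individual claims.
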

\begin{proof}
Necessity is trivial.  Sufficiency follows from the previous claims.  \hfill\qed
\end{proof}



\section{Hardness Result}
Akiyama et al. \cite{akiyama} proved the NP-completeness of Hamiltonian cycle in planar bipartite graphs with maximum degree $3$.  It is easy to see that if the bipartite graph is having different sized partitions for the vertex set, then it is clearly a NO instance for the Hamiltonian cycle problem.  It follows that the Hamiltonian cycle problem in planar bipartite graphs with maximum degree $3$, and equal sized vertex partitions is NP-hard.  Here we give a reduction from Hamiltonian cycle problem in planar bipartite graphs with maximum degree $3$, and equal sized vertex partitions to Hamiltonian cycle problem in $K_{1,5}$-free split graph. 
\begin{theorem} \label{k15hamil}
HCYCLE in $K_{1,5}$-free split graph is NP-complete.
\end{theorem}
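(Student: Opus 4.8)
The plan is to reduce from the Hamiltonian cycle problem in bipartite graphs of maximum degree $3$, which is NP-complete by Akiyama, Nishizeki, and Saito~\cite{akiyama}; membership in NP is immediate, since a candidate Hamiltonian cycle is verifiable in polynomial time.

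Given a bipartite graph $B$ with parts $X,Y$ and maximum degree $3$, note first that a spanning cycle of a bipartite graph alternates between the two sides, so if $|X|\ne|Y|$ then $B$ is not Hamiltonian and the reduction simply outputs a fixed no-instance. Otherwise put $|X|=|Y|=n$ and build the split graph $G$ whose clique $K$ consists of the vertices of $X$ with all $\binom{n}{2}$ edges among them, whose independent set $I$ consists of the vertices of $Y$, and in which $x\in K$ is adjacent to $y\in I$ exactly when $xy\in E(B)$. This $G$ is a split graph constructed in polynomial time, and since $B$ has maximum degree $3$ every vertex of $K$ has at most three neighbours in $I$, so $\Delta^I_G\le 3$. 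Hence $G$ is $K_{1,5}$-free: an induced $K_{1,5}$ has five pairwise non-adjacent leaves, at least four of which lie in $I$, so their common neighbour would have at least four neighbours in $I$, which is impossible for a vertex of $I$ and, since $\Delta^I_G\le3$, also impossible for a vertex of $K$. (If one wants $K$ to be a maximum clique in the paper's sense, this holds automatically once $n\ge4$, and the finitely many smaller instances can be decided directly.)

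The crux is the equivalence ``$G$ is Hamiltonian $\iff$ $B$ is Hamiltonian''. A Hamiltonian cycle of $B$ uses only edges of $B$, all of which lie in $G$, and spans $V(B)=V(G)$, hence is a Hamiltonian cycle of $G$. Conversely, let $C$ be a Hamiltonian cycle of $G$. Since $I$ is independent, no two of its $n$ vertices are consecutive on $C$, so removing them cuts $C$ into $n$ nonempty arcs consisting of $K$-vertices; as $|K|=n$, each arc is a single vertex, and so $C$ strictly alternates between $K$ and $I$. Every edge of $C$ therefore joins $K$ to $I$, and by construction every such edge of $G$ is an edge of $B$, so $C$ is a Hamiltonian cycle of $B$. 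Since the no-instance branch is taken only when $B$ is not Hamiltonian, the reduction is correct, which completes the proof.

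The one point requiring care --- rather than real difficulty --- is the strict alternation of $C$: this is precisely where the balance $|K|=|I|$ is used, and forgetting to dispose of the case $|X|\ne|Y|$ would break the reduction, as small unbalanced instances (such as a three-vertex path, which maps to a triangle) show. The $K_{1,5}$-freeness check and NP membership are routine.
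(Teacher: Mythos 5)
Your proof is correct, and the core gadget is the same as the paper's: complete one side of the bipartite graph into a clique and observe that, when the two sides are balanced, a Hamiltonian cycle is forced to alternate between clique and independent set and hence to use only original bipartite edges; the $K_{1,5}$-freeness argument via $\Delta^I\le 3$ is also the one the paper gives. Where you genuinely diverge is in how the balance $|X|=|Y|$ is enforced. The paper builds \emph{two} split graphs $H_1$ and $H_2$ (completing $A$ in one and $B$ in the other) and proves that $G$ is Hamiltonian iff \emph{both} $H_1$ and $H_2$ are; the balance then falls out because Hamiltonicity of $H_1$ forces $|A|\ge|B|$ and that of $H_2$ forces $|A|\le|B|$. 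As stated, that maps one instance to a conjunction of two instances, i.e.\ a conjunctive truth-table reduction rather than a many-one reduction. You instead test $|X|=|Y|$ in the reduction itself (outputting a fixed non-Hamiltonian $K_{1,5}$-free split graph when the test fails) and emit a single split graph otherwise, which yields a clean Karp reduction and halves the construction. Your explicit counting argument for alternation (removing the $n$ independent-set vertices leaves $n$ nonempty arcs summing to $n$ clique vertices) and your aside about $X$ being a maximum clique only for $n\ge4$ are both correct and slightly more careful than the paper's treatment. Both approaches establish NP-completeness; yours is formally tighter, the paper's has the mild aesthetic advantage of treating the two sides symmetrically without a preprocessing branch.
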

\begin{proof} For NP-hardness result, we present a deterministic polynomial-time reduction that reduces an instance of planar bipartite graph with maximum degree $3$, and equal sized vertex partitions to a corresponding split graph instance. 
Consider such a graph $G$ with maximum degree $3$, and let $A,B$ be the partitions of $V(G)$.  We construct a split graph $H$ from $G$ as follows. \\\\
$V(H)=V(G), E(H)=E(G)\cup E'$ where 
 $E'=\{uv~:~u,v\in A\}$\\\\
Clearly, the reduction is a polynomial-time reduction and $H$ is a split graph with a maximal clique $A$.
We now show that there exists a Hamiltonian cycle in $G$ if and only if there exists a Hamiltonian cycle in $H$.  
 \emph{Necessity:} If there exists a Hamiltonian cycle $C$ in $G$, then $C$ is a Hamiltonian cycle in $H$, since $G$ is a strict subgraph of $H$.\\
\emph{Sufficiency:}  
We claim that any Hamiltonian cycle $C$ in $H$ is a Hamiltonian cycle in $G$.  If not, there exists at least one edge $uv\in E(C)$ where $u,v\in A$.  It follows that at least one vertex in $B$ is not in $C$, which contradicts the Hamiltonicity of $H$.  Hence the sufficiency follows.
We now show that the constructed graph $H$ is $K_{1,5}$-free split graph.  Suppose there exists a $K_{1,5}$ in $H$ induced on vertices $\{u,v,w,x,y,z\}$, centered at $v$.  At most two vertices (say $u,v$) of $K_{1,5}$ belongs to the clique $A$.  Therefore, $w,x,y,z\in B$ and this implies $d_{H}^{I}(v)\ge4$.  It follows that $d_{G}(v)\ge4$, which is a contradiction to the maximum degree of the bipartite graph $G$.  Since a given instance of Hamiltonian cycle problem in $K_{1,5}$-free split graph can be verified in deterministic polynomial time, the problem is in class NP.  It follows that the Hamiltonian cycle problem in $K_{1,5}$-free split graphs is NP-complete and the theorem follows. \hfill \qed
\end{proof}

\begin{corollary}
HCYCLE in $K_{1,r}$-free split graph, $r \geq 5$ is NP-complete.
\end{corollary}
\begin{proof}
Follows from Theorem 3. \hfill \qed
\end{proof}

\bibliographystyle{splncs2}
\bibliography{hcycle_split}

\begin{thebibliography}{10}

\bibitem{bertossi}
A.A.Bertossi, M.A.Bonuccelli:
\newblock Hamiltonian circuits in interval graph generalizations.
\newblock Information Processing Letters, \textbf{23}(4),  195--200, (1986)

\bibitem{KemnSchi}
A.Kemnitz, I.Schiermeyer:
\newblock Improved degree conditions for {H}amiltonian properties.
\newblock Discrete Mathematics, \textbf{312}(14),  2140 -- 2145, (2012)

\bibitem{app2}
A.Malakis:
\newblock Hamiltonian walks and polymer configurations.
\newblock Statistical Mechanics and its Applications Physica (A),
  \textbf{84}(2),  256--284, (1976)

\bibitem{dicho3}
C.M.H.deFigueiredo:
\newblock The {P} versus {NP-}complete dichotomy of some challenging problems
  in graph theory.
\newblock Discrete Applied Mathematics, \textbf{160}(18),  2681 -- 2693, (2012)

\bibitem{bauer}
D.Bauer, H.J.Broersma, J.Heuvel, H.J.Veldman:
\newblock Long cycles in graphs with prescribed toughness and minimum degree.
\newblock Discrete Mathematics, \textbf{141}(1),  1 -- 10, (1995)

\bibitem{west}
D.B.West:
\newblock Introduction to Graph Theory. 2 edn.
\newblock Prentice Hall, (2003)

\bibitem{app1}
D.Dorninger:
\newblock Hamiltonian circuits determining the order of chromosomes.
\newblock Discrete Applied Mathematics, \textbf{50}(2),  159 -- 168, (1994)

\bibitem{kratsch}
D.Kratsch, J.Lehel, H.Muller:
\newblock Toughness, hamiltonicity and split graphs.
\newblock Discrete Mathematics, \textbf{150}(1),  231 -- 245, (1996)

\bibitem{app3}
G.Irina, O.Halskau, G.Laporte, M.Vlcek:
\newblock General solutions to the single vehicle routing problem with pickups
  and deliveries.
\newblock Euro. Jour. of Operational Research, \textbf{180}(2),  568--584,
  (2007)

\bibitem{s1}
H.J.Broersma:
\newblock On some intriguing problems in {H}amiltonian graph theory - a survey.
\newblock Discrete Mathematics, \textbf{251}(1),  47--69, (2002)

\bibitem{muller}
H.Muller:
\newblock Hamiltonian circuits in chordal bipartite graphs.
\newblock Discrete Mathematics, \textbf{156}(1),  291--298, (1996)

\bibitem{keil}
J.M.Keil:
\newblock Finding hamiltonian circuits in interval graphs.
\newblock Information Processing Letters, \textbf{20}(4),  201--206, (1985)

\bibitem{dicho1}
M.Illuri, P.Renjith, N.Sadagopan:
\newblock Complexity of steiner tree in split graphs - dichotomy results, In:
  Algorithms and Discrete Applied Mathematics - Second International
  Conference, {CALDAM} 2016.  308--325 (2016)

\bibitem{tarjanplanar}
M.R.Garey, D.S.Johnson, R.E.Tarjan:
\newblock Planar hamiltonian circuit problem is \textsc{NP}-complete.
\newblock SIAM Journal on Computing, \textbf{5}(4),  704--714, (1976)

\bibitem{tan}
N.D.Tan, L.X.Hung:
\newblock On the {B}urkard{-H}ammer condition for hamiltonian split graphs.
\newblock Discrete Mathematics, \textbf{296}(1),  59--72, (2005)

\bibitem{dicho2}
N.S.Narayanaswamy, N.Sadagopan:
\newblock Connected (s, t)-vertex separator parameterized by chordality.
\newblock J. Graph Algorithms Appl., \textbf{19}(1),  549--565, (2015)

\bibitem{burkard}
R.E.Burkard, P.L.Hammer:
\newblock A note on {H}amiltonian split graphs.
\newblock Journal of Combinatorial Theory, Series B, \textbf{28}(2),  245 --
  248, (1980)

\bibitem{s2}
R.J.Gould:
\newblock Updating the {H}amiltonian problem - a survey.
\newblock Journal of Graph Theory, \textbf{15}(2),  121--157, (1991)

\bibitem{s3}
R.J.Gould:
\newblock Advances on the {H}amiltonian problem - a survey.
\newblock Graphs and Combinatorics, \textbf{19}(1),  7--52, (2003)

\bibitem{hungdis}
R.W.Hung, M.S.Chang:
\newblock Linear-time algorithms for the hamiltonian problems on
  distance-hereditary graphs.
\newblock Theoretical Computer Science, \textbf{341}(1),  411--440, (2005)

\bibitem{hungcir2}
R.W.Hung, M.S.Chang:
\newblock An efficient certifying algorithm for the hamiltonian cycle problem
  on circular-arc graphs.
\newblock Theoretical Computer Science, \textbf{412}(39),  5351--5373, (2011)

\bibitem{akiyama}
T.Akiyama, T.Nishizeki, N.Saito:
\newblock {NP}-completeness of the {H}amiltonian cycle problem for bipartite
  graphs.
\newblock Journal of Information Processing, \textbf{3}(2),  73--76, (1980)

\bibitem{chvataltough}
V.Chv$\acute{a}$tal:
\newblock Tough graphs and hamiltonian circuits.
\newblock Discrete Mathemetics, \textbf{306}(10),  910--917, (2006)

\bibitem{shih}
W.K.Shih, T.C.Chern, W.L.Hsu:
\newblock An \textsc{O}($n^{2}$log n) algorithm for the hamiltonian cycle
  problem on circular-arc graphs.
\newblock SIAM Journal on Computing, \textbf{21}(6),  1026--1046, (1992)

\end{thebibliography}
\end{document}